\newcommand{\dom}{\mathrm{dom}}
\newcommand{\NN}{\mathbb{N}}
\newcommand{\UR}{\mathrm{UR}}
\newcommand{\LL}{\mathrm{L}}
\newcommand{\RR}{\mathrm{R}}
\newcommand{\PQE}{\mathrm{PQE}}
\keywords{Uniform reliability, \#P-hardness, probabilistic databases}
\title{Uniform Reliability for Unbounded Homomorphism-Closed Graph Queries}
\author{Antoine Amarilli}{LTCI, Télécom Paris, Institut Polytechnique de Paris,
France}{}{https://orcid.org/0000-0002-7977-4441}{Partially supported by the ANR
project ANR-18-CE23-0003-02 (``CQFD'').}
\authorrunning{Antoine Amarilli}
\newcommand\restr[2]{{%
  \kern-\nulldelimiterspace %
  #1 %
  _{|#2} %
  }}
\begin{document}

\maketitle

\begin{abstract}
  We study the uniform query reliability problem, which asks, for a fixed
  Boolean query~$Q$, given an instance~$I$, how many subinstances of~$I$
  satisfy~$Q$. Equivalently, this is a restricted case of Boolean query
  evaluation on tuple-independent probabilistic databases where
  all facts must have probability~$1/2$. We focus on graph
  signatures, and on queries closed under homomorphisms. We show that for any
  such query that is \emph{unbounded}, i.e., not equivalent to a union of
  conjunctive queries, the uniform reliability problem is \#P-hard. This
  recaptures the hardness, e.g., of s-t connectedness, which counts how many
  subgraphs of an input graph have a path between a source and a sink.

  This new hardness result on uniform reliability strengthens our earlier hardness result
  on probabilistic query evaluation for unbounded
  homomorphism-closed queries~\cite{amarilli2021dichotomy}. Indeed,
  our earlier proof crucially used facts with probability~$1$, so it did
  not apply to the unweighted case.
  The new proof presented in this paper avoids this; it uses our recent hardness
  result on uniform reliability for non-hierarchical conjunctive queries without
  self-joins~\cite{amarilli2022uniform}, along with new techniques.
\end{abstract}

\section{Introduction}
\label{sec:intro}
A long line of research~\cite{suciu2011probabilistic} has investigated how
to extend relational databases with probability values. The most
common probabilistic model, called \emph{tuple-independent databases} (TID), annotates each fact
of the input database with an independent probability of existence. The
\emph{probabilistic query evaluation} (PQE) problem then asks for
the probability that a fixed Boolean query is true
in the resulting product distribution on possible worlds. 
The PQE problem has been
historically studied for conjunctive queries (CQs) and unions of conjunctive
queries (UCQs). This study led to the dichotomy result
of Dalvi and Suciu~\cite{dalvi2012dichotomy}, which identifies a class of \emph{safe UCQs} for which
the problem can be solved in PTIME:

\begin{theorem}[\cite{dalvi2012dichotomy}]
  Let $Q$ be a UCQ. Consider the PQE problem for~$Q$ which asks, given a
  TID~$I$, to compute
  the probability that $Q$ holds on~$I$. This problem is in PTIME if~$Q$ is
  safe, and \#P-hard otherwise.
\end{theorem}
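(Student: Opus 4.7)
The plan is to establish the two directions separately by connecting syntactic features of~$Q$ to algebraic tractability of the induced probability computation. I would first formalize ``safe'' as the existence of a \emph{safe query plan}, that is, a recursive decomposition of~$Q$ using only operators that preserve independence on TIDs: independent joins (on disjoint sets of atoms), independent projections (on variables appearing in every atom), and an inclusion--exclusion step on unions whose components can be factored. For self-join-free CQs this condition reduces to the purely syntactic ``hierarchical'' requirement on variable--atom incidence; for general UCQs, safety is captured by the Möbius function on the lattice of CQs contained in~$Q$ up to logical equivalence.

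For the PTIME direction, given a safe UCQ I would run the query plan inductively, observing that each safe operator computes its output probability in polynomial time from those of its inputs, and that safety ensures some applicable operator at every step until the atomic probabilities are reached. For the \#P-hardness direction I would proceed in three layers: (i)~for a self-join-free CQ that is non-hierarchical, isolate two variables witnessing a bipartite interaction between atoms and reduce from counting satisfying assignments of positive bipartite DNF formulas (\#PP2DNF) by constructing a TID whose marginal encodes the target count; (ii)~for general CQs with self-joins, identify a canonical ``witness'' sub-query and reduce from the self-join-free case by carefully chosen probability assignments on atoms sharing a relation symbol; (iii)~for UCQs, use Möbius inversion over the containment lattice to extract a nonzero coefficient and design a reduction in which the inclusion--exclusion sum telescopes to a known \#P-hard counting quantity.

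The main obstacle will be the hardness side for general UCQs. The self-join-free CQ case is already nontrivial but has a transparent combinatorial witness; the UCQ case requires a delicate interplay between the lattice structure of CQ containments and the algebra of Möbius coefficients, and one must show that no chain of safe factorizations can dissolve the contribution of the nonzero Möbius term. Designing the TID gadgets so that atom probabilities cause the Möbius sum to collapse onto a hard counting quantity, and proving that every unsafe UCQ actually admits such a witness, is where nearly all the depth of the Dalvi--Suciu theorem lives.
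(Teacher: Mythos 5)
First, note that the paper does not prove this statement at all: it is quoted verbatim from Dalvi and Suciu~\cite{dalvi2012dichotomy} as background, so there is no ``paper's own proof'' to compare against. Your proposal is therefore being measured against the actual Dalvi--Suciu proof, and as a high-level roadmap of that proof it is broadly faithful: the tractable side does go through safe plans built from independent joins, independent projects, and Möbius-weighted inclusion--exclusion over the CQ lattice, and the hardness side does bottom out in reductions from \#PP2DNF, with the non-hierarchical self-join-free case as the base pattern.

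However, as a proof there is a genuine gap, concentrated exactly where you locate the difficulty. Your layer~(ii) is not how self-joins are handled: queries with self-joins cannot in general be ``reduced to the self-join-free case by carefully chosen probability assignments''; they required a fundamentally new analysis, which is why the dichotomy took from 2004 to 2012 to complete. Your layer~(iii) is also not the actual mechanism: the hardness proof does not make the inclusion--exclusion sum ``telescope to a known \#P-hard quantity'' on a single instance. Instead, one identifies a finite family of canonical hard queries (the queries $h_k$ built from chains of unifications), proves each \#P-hard by calling the PQE oracle on polynomially many instances with \emph{distinct} probability values and recovering the target count by solving a Vandermonde-type linear system (polynomial interpolation), and then shows by a long structural induction that every unsafe UCQ admits a rewriting from one of these hard queries. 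None of that machinery --- the definition of the $h_k$, the interpolation argument, or the rewriting lemma showing every unsafe query contains such a witness --- appears in your sketch, and ``one must show that no chain of safe factorizations can dissolve the nonzero Möbius term'' is a restatement of the theorem rather than a step toward proving it. So the proposal correctly names the ingredients but does not constitute a proof of the hard direction.
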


This result has been extended in several ways, to apply to some queries
featuring negation~\cite{FiOl16}, disequality~($\neq$) joins~\cite{OlHu08}, or
inequality~($<$) joins~\cite{OlHu09}. More recently, two new directions have
been explored. First, our work
with 
Ceylan~\cite{amarilli2021dichotomy} extended
the study from UCQs to the broader class of \emph{homomorphism-closed} queries.
This class captures recursive queries such
as regular path queries (RPQs) or Datalog (without inequalities or negation).
In~\cite{amarilli2021dichotomy}, we focused on homomorphism-closed queries that
were \emph{unbounded}, i.e., not equivalent to a UCQ. We showed that
PQE is \#P-hard for \emph{any} such query,
though for technical reasons the result only applies to graphs, i.e.,
arity-two signatures.
This extended the above dichotomy to the full class of homomorphism-closed queries
(on arity-two signatures).

Second, the dichotomy has been extended from PQE to
restricted problems which do not allow arbitrary probabilities on the TID.
Kenig and Suciu~\cite{kenig2020dichotomy} have shown that the dichotomy
of~\cite{dalvi2012dichotomy} still held for the so-called \emph{generalized model
counting} problem, where the allowed probabilities on tuples are only $0$ (the
tuple is missing), $1/2$, or~$1$; this is in contrast with the original proof of the
dichotomy, which uses arbitrary probabilities. Our result
in~\cite{amarilli2021dichotomy} already held for the generalized model counting
problem.
What is more, for a subclass of the unsafe
queries, they showed that hardness still held for the \emph{model counting
problem}, where the probabilities are either~$0$ or~$1/2$. Independently, with
Kimelfeld~\cite{amarilli2022uniform}, we have shown hardness of the
same problem for the incomparable class of non-hierarchical CQs without
self-joins. Rather than model counting, we called this the \emph{uniform
reliability} (UR) problem, following the terminology in the work of Grädel, Gurevich, and
Hirsch~\cite{graedel1998complexity}.

In our opinion, this uniform reliability problem is interesting even
outside of the context of probabilistic databases: we simply ask,
for a fixed query $Q$, given a database instance $I$,
\emph{how many} subinstances of~$I$ satisfy~$Q$. The UR problem
also relates to computing the \emph{causal effect} and \emph{Shapley
values} in
databases~\cite{salimi2016quantifying,livshits2020shapley,amarilli2022uniform}.
What is more, UR for homomorphism-closed queries captures existing counting
problems on graphs, such as
\emph{st-connectedness}~\cite{valiant1979complexity} which asks how many
subgraphs of an input graph contain a path between a source and a
sink.

The ultimate goal of these two lines of work would be to classify the
complexity of \emph{uniform reliability}, across \emph{all
homomorphism-closed queries}. Specifically, one can conjecture:

\begin{conjecture}
  \label{con:goal}
  Let $Q$ be a homomorphism-closed query on an arbitrary signature. The
  uniform reliability problem for~$Q$ is in PTIME if $Q$ is a safe UCQ, and
  \#P-hard otherwise.
\end{conjecture}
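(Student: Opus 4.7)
The plan is to proceed by case analysis on whether the homomorphism-closed query~$Q$ is \emph{bounded} (equivalent to a UCQ) or \emph{unbounded}. In the bounded case, the PTIME direction for safe UCQs follows directly from the Dalvi--Suciu algorithm for PQE, which one can invoke with all probabilities set to~$1/2$. The remaining task in the bounded case is to prove \#P-hardness for every unsafe UCQ; here, I would try to combine and extend the partial hardness results of Kenig--Suciu with those of~\cite{amarilli2022uniform} on non-hierarchical self-join-free CQs. A natural strategy is to retrace the inductive reductions of the Dalvi--Suciu hardness proof while replacing probability-tuned gadgets by unweighted constructions that encode the desired weights through multiplicities of facts.

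For the unbounded case, the paper under consideration settles the graph (arity-two) signature, so my plan is to bootstrap from that result. Given an unbounded homomorphism-closed query~$Q$ on an arbitrary signature, I would try to exhibit an arity-reduction gadget encoding arity-$k$ tuples as subgraphs, together with a derived graph query~$Q'$ such that (i)~$Q'$ is homomorphism-closed, (ii)~$Q'$ is unbounded, and (iii)~counting subinstances of the encoded instance satisfying~$Q$ matches, up to a parsimonious transformation, the count of subgraphs of the gadget instance satisfying~$Q'$. Applying the main theorem of the paper to~$Q'$ would then yield \#P-hardness for~$Q$.

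The main obstacle, in my view, is preserving unboundedness under arity reduction. Homomorphism closure is easy to maintain with Gaifman-style encodings, and parsimonious reductions between counting problems are standard, but unboundedness is a delicate semantic property: a query that is not finitely captured by a UCQ on arity-$k$ instances can readily \emph{become} so once we restrict attention to the image of the encoding, because the gadgets impose additional structure that may let a finite union of CQs simulate the original recursion. Identifying an encoding that provably transfers unboundedness, or alternatively proving hardness directly in higher arities by lifting the core construction of the paper---which relies on hardness for non-hierarchical self-join-free CQs together with carefully designed ``long'' witness instances---appears to be the technical heart of the conjecture.
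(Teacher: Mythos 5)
You are attempting to prove something the paper explicitly poses as an \emph{open conjecture}, not a theorem: the paper has no proof of Conjecture~\ref{con:goal}, and indeed immediately after stating it lists three obstacles that remain. The contribution of the paper (Theorem~\ref{thm:main}) settles only one of them, namely the unbounded case restricted to arity-two signatures. So there is nothing in the paper to ``compare your proof against''; the most one can do is assess whether your plan is sound and whether it captures what is actually hard.

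Your decomposition into cases mirrors the paper's own discussion almost exactly (bounded safe = PTIME via Dalvi--Suciu, bounded unsafe = extend Kenig--Suciu and~\cite{amarilli2022uniform}, unbounded arity-two = the paper's Theorem~\ref{thm:main}, unbounded arbitrary arity = arity reduction), and you correctly flag that each of the two open pieces is genuinely open rather than a routine extension. That is good judgment, but it also means your text is a research plan rather than a proof: both the bounded unsafe case and the arbitrary-arity unbounded case are left as stated obstacles with proposed strategies.

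On the arity-reduction step specifically, you underplay what I think is the more fundamental difficulty in the UR setting, worse than preserving unboundedness. In probabilistic query evaluation one can encode a $k$-ary fact by a gadget of binary facts in which the ``auxiliary'' gadget facts carry probability~$1$, so a possible world of the source instance maps cleanly to a possible world of the encoding. Under uniform reliability every fact has probability~$1/2$, so every gadget fact can independently be absent, and the subinstances of the encoded instance no longer decompose as ``one bit per original fact''. This is precisely the phenomenon that forces the saturation and critical-model machinery of Sections~\ref{sec:noniter} and~\ref{sec:iter} for arity-two signatures, and it resurfaces with a vengeance in any Gaifman-style encoding of higher-arity relations. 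So item~(iii) of your plan --- that counting subinstances of the encoded instance ``matches, up to a parsimonious transformation'' --- is not standard at all here; it is arguably the crux, on par with preserving unboundedness. Any serious attempt at the arbitrary-arity case will likely have to redo the saturation-style arguments natively in higher arity rather than route through a graph encoding.
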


To establish this, there are three obstacles to overcome. First, in the case where~$Q$ is a
UCQ, one would need to establish the hardness of UR for all unsafe UCQs,
extending the work of Kenig and
Suciu~\cite{kenig2020dichotomy}. Second, when $Q$ is unbounded,
one would need to adapt the methods
of~\cite{amarilli2021dichotomy} to apply to UR rather than
PQE. Third, the methods of~\cite{amarilli2021dichotomy} would need to be extended from graph signatures to
arbitrary arity signatures.

\subparagraph*{Result statement.}
In this paper, we address the second difficulty and show the
following, which extends the main result of~\cite{amarilli2021dichotomy} from
PQE to UR, and brings us closer to Conjecture~\ref{con:goal}:

\begin{theorem}[Main result]
  \label{thm:main}
  Let $Q$ be an unbounded homomorphism-closed query on an arity-two signature. The
  uniform reliability problem for~$Q$ is \#P-hard.
\end{theorem}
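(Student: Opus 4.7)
The plan is to reduce from the uniform reliability problem for a non-hierarchical conjunctive query without self-joins on an arity-two signature, which is \#P-hard by~\cite{amarilli2022uniform} (a natural target is a query like $\exists x\, \exists y\, \exists z\; R(x,y) \wedge S(y,z)$, expressible over a graph signature by colouring edges). The overall pipeline mirrors that of~\cite{amarilli2021dichotomy}, but each gadget must work under the constraint that every fact has probability exactly~$1/2$, which is what makes the problem a reliability question rather than a general PQE question.

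First, I would exploit unboundedness to obtain arbitrarily ``rigid'' witnesses. Since $Q$ is homomorphism-closed but not equivalent to any UCQ, the same kind of compactness-style argument used in~\cite{amarilli2021dichotomy} yields, for every $k$, a minimal model $M_k$ of $Q$ whose structural complexity (e.g.\ girth or the minimal size of a homomorphism target inside it) exceeds~$k$. I would reuse this lemma to obtain a family of minimal models large enough to ``absorb'' the size of the input instance of the hard CQ.

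Next comes the gadget construction. Given an input instance $J$ of the non-hierarchical CQ, I would produce an instance $I$ by attaching, to each edge of~$J$, a gadget stitched from a fragment of some $M_k$. The minimality and high girth of $M_k$ should ensure that $Q$ can only hold on a subinstance of~$I$ by using the intended witness pattern, so that, up to local combinatorial bookkeeping, the count of $Q$-satisfying subinstances of~$I$ aligns with the count of CQ-satisfying subinstances of~$J$.

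The main obstacle, and the place where new techniques beyond~\cite{amarilli2021dichotomy} are needed, is the removal of probability-$1$ facts. In the PQE reduction, certain ``backbone'' edges of the gadget were given probability~$1$ to force a skeleton to always be present; under uniform reliability this is forbidden, and if one merely reassigns probability~$1/2$ to these facts the count gets polluted by $2^{\Theta(\card{I})}$ extraneous subinstances in which parts of the backbone are missing. I would handle this by designing the gadgets so that the contribution of backbone-disrupted subinstances factorises into a function that depends only on gadget parameters (not on~$J$), and then recovering the $J$-dependent count by running the reduction with several different gadget sizes and solving the resulting linear system, or equivalently by an inclusion--exclusion over which backbone facts are present. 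The key subtlety will be choosing the gadget so that this system is non-degenerate and its entries are computable in polynomial time; once that is in place, the \#P-hardness of UR for the non-hierarchical CQ transfers to~$Q$, proving Theorem~\ref{thm:main}.
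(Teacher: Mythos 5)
Your proposal identifies the right central obstacle --- the impossibility of probability-$1$ backbone facts --- but neither the source problem you pick, nor the route you sketch to handle the obstacle, matches what actually works, and the gaps are substantive. A preliminary point: your candidate target $\exists x\,\exists y\,\exists z\; R(x,y)\wedge S(y,z)$ is in fact hierarchical (for each pair of variables, the atom sets are nested or disjoint), so its uniform reliability is in PTIME and cannot serve as a reduction source. The hard CQ underlying the paper is $R(x), S(x,y), T(y)$, and the paper does not reduce from its UR problem directly; it reduces from two separate problems derived from it: a probabilistic \#PP2DNF variant over bipartite graphs, and a probabilistic vertex-and-edge s--t-connectivity variant.

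That bifurcation is not cosmetic, and it is the first genuine gap in your plan. The paper shows that one must distinguish whether a carefully chosen \emph{critical model} (a subinstance-minimal model with a clean tight edge minimizing weight, then extra weight, then lexicographic weight) is \emph{iterable} or not, and the reduction source changes accordingly. In the iterable case the query can be made true along arbitrarily long chained copies of the tight edge, which is precisely what a connectivity-style reduction captures and a bipartite-graph/gadget-per-edge reduction does not. A single monolithic gadget scheme therefore cannot cover all unbounded queries. Relatedly, your girth/rigidity use of unboundedness is not how the paper proceeds: unboundedness is invoked exactly once, to obtain a model with a tight edge; everything afterward relies on the minimization machinery, not on large girth.

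The second gap is the proposed fix for the missing backbone facts. You suggest running the reduction at several gadget sizes and solving a linear system, asserting that the contribution of backbone-disrupted subinstances ``factorises into a function that depends only on gadget parameters.'' That factorisation claim is unsupported and would need to be proved; in general, whether a backbone-disrupted subinstance satisfies $Q$ depends intricately on the input, so the polluting term is not $J$-independent, and you would also need to control the conditioning of the resulting system and the size of its entries. The paper instead uses two distinct tools: \emph{saturation} --- duplicate the saturable backbone facts a polynomial number of times so that the probability of all copies being absent is below a rounding threshold, after which the invalid worlds can be ignored by integer rounding --- and, for the facts that cannot be saturated (unary facts, and facts that must be unique for the combinatorics to line up), the three-level minimality of critical models, which lets one argue via dissociation arguments (comparing weight, extra weight, and lexicographic weight) that any subinstance missing such a fact already fails the query. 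Interpolation does appear in the paper, but only as a helper lemma (to establish hardness of one of the source problems), not as the mechanism for absorbing missing backbone facts. Without replacing your unsubstantiated factorisation claim by something concrete, and without the iterable/non-iterable case split, the proposal does not close into a proof.
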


The proof of this result has the same high-level
structure as in~\cite{amarilli2021dichotomy},
but there are significant new technical challenges to overcome. In particular,
we now reduce from different problems, whose hardness rely (among other things)
on the hardness of uniform reliability for
the
query $R(x), S(x, y), T(y)$,
shown in~\cite{amarilli2022uniform}.
The impossibility to assign a probability of~$1$ to facts also makes reductions
much more challenging: intuitively, as all facts can now be missing, there is no
longer a clear connection between the possible worlds of the source problem and
the possible worlds of the database built in the reduction.
We use multiple tools to work around this, for instance
a \emph{saturation} technique that creates a large but polynomial number of
copies of some facts and argues
that their absence is sufficiently unlikely to be negligible.
As saturation cannot apply to unary facts, we also need to identify so-called
\emph{critical models}, a more elaborate variant of a notion
in~\cite{amarilli2021dichotomy}, minimizing carefully-chosen weight criteria.

We give a high-level structure of the proof below as it is presented in the rest of
the paper, and comment in more detail on how the techniques relate to our 
earlier work~\cite{amarilli2021dichotomy}.

\subparagraph*{Paper structure.}
We give preliminaries and the formal definition of UR in
Section~\ref{sec:prelim}, along with the two problems from which
we reduce: one problem on bipartite graphs from~\cite{amarilli2022uniform},
and one variant of a connectivity problem of~\cite{valiant1979complexity}. We
show that they are \#P-hard (Appendix~\ref{apx:prelim}).

We then review notions from~\cite{amarilli2021dichotomy} in
Section~\ref{sec:basic}: the \emph{dissociation} operation on instances, and the
notion of a \emph{tight edge}, which makes the query false when we apply
dissociation to it. We invoke a result from~\cite{amarilli2021dichotomy} showing
that tight edges always exist for unbounded queries. This is the only place
where we use the unboundedness of the query, and is unfortunately the only
result from~\cite{amarilli2021dichotomy} that can be used as-is. Some other
notions are reused and extended from~\cite{amarilli2021dichotomy} but they are
always re-defined and re-proved in a self-contained way in the present paper.

We then present in Section~\ref{sec:minimal} the notion of a \emph{critical
model}, as a model of the query which is \emph{subinstance-minimal} and
features a tight edge which is minimal by optimizing three successive
quantities: \emph{weight}, \emph{extra weight}, and \emph{lexicographic weight}.
The notion of \emph{weight} is from~\cite{amarilli2021dichotomy}, the two other
notions relate to \emph{side weight}
from~\cite{amarilli2021dichotomy} but significantly extend it. We show in
this section that a query
having a model with a tight edge also has a critical model.

We then move on to the hardness proof. As in~\cite{amarilli2021dichotomy}, there
are two cases: a \emph{non-iterable} case where we reduce from the problem on
bipartite graphs, and an \emph{iterable} case where we reduce from the
connectivity problem. In Section~\ref{sec:noniter}, we formally define the
notion of iteration (essentially identical to the notion
in~\cite{amarilli2021dichotomy}) and show hardness when there is a non-iterable
critical model. The coding used
in the reduction extends that of~\cite{amarilli2021dichotomy} with the
\emph{saturation} technique of creating a large number of copies of some
elements. There are many new technical challenges, e.g., proving that a polynomial number
of copies suffices to make the absence of the facts sufficiently unlikely, and
justifying that all the other facts are ``necessary'' for a query
match, using in particular subinstance-minimality and the notion of extra
weight.

Last, in Section~\ref{sec:iter}, we show hardness in the case where all critical
models are iterable. We first show that such models can be repeatedly iterated,
and that the measure of \emph{extra weight}
must be zero in this case, allowing us to focus on the more precise criterion of
lexicographic weight. Then we define the coding, which is similar
to~\cite{amarilli2021dichotomy} up to technical modifications. The reduction
does not use saturation but argues that all facts
are ``necessary'' using the notion of lexicographic weight and a new
\emph{explosion} structure.

We then conclude in Section~\ref{sec:conc}. To save space, most
proofs are deferred to the appendix.

\section{Preliminaries and Problem Statement}
\label{sec:prelim}
\begin{toappendix}
  \label{apx:prelim}
\end{toappendix}

\subparagraph*{Instances.}
We consider an \emph{arity-two relational signature}~$\sigma$ consisting of
\emph{relations} with
an associated \emph{arity}, where the maximal arity of the signature is assumed to be~$2$.
A \emph{$\sigma$-instance} (or just \emph{instance}) is a set of \emph{facts}, i.e., expressions of
the form $R(a, b)$ where $a$ and~$b$ are constants and~$R \in \sigma$.
We assume without loss of
generality that all relations in~$\sigma$ are binary, i.e., have arity two.
Indeed, if there are unary
relations $U$, we can simply code them with a binary relation~$U'$,
replacing facts $U(a)$ by $U'(a, a)$ in instances, and modifying the query to
interpret $U'(a, a)$ as~$U(a)$ and to ignore
facts $U'(a, b)$ with~$a \neq b$: this is similar to
Theorem~8.4 of~\cite{amarilli2021dichotomy}.
Accordingly, we call a fact $R(a, b)$ \emph{unary} if~$a=b$, otherwise it
is \emph{binary}.

The \emph{domain}~$\dom(I)$ of an instance~$I$ is the set of constants occurring in~$I$. A
\emph{homomorphism} from~$I$ to an instance~$I'$ is
a function $h \colon \dom(I) \to \dom(I')$ such that, for each fact $R(a, b)$
of~$I$, the fact $R(h(a), h(b))$ is in~$I'$. We say that $I'$
is a \emph{subinstance} of~$I$, 
written $I' \subseteq I$, 
if $I'$ is a subset of the facts of~$I$; we then have $\dom(I') \subseteq
\dom(I)$.

\subparagraph*{Queries.}
A \emph{query} $Q$ over~$\sigma$ is a Boolean function over $\sigma$-instances
which we always assume to be \emph{homomorphism-closed},
i.e., if $Q$ returns true on $I$ and $I$ has a homomorphism to an
instance~$I'$ then $Q$ also returns true on~$I'$.
When $Q$ returns true on~$I$ we call $I$ a
\emph{model} of~$Q$,
or say that $I$ \emph{satisfies} $Q$ (written $I \models Q$); otherwise 
$I$ \emph{violates}~$Q$.
Any homomorphism-closed query~$Q$ is
\emph{monotone}, i.e., if $I$ satisfies~$Q$
and~$I \subseteq I'$ then~$I'$ satisfies~$Q$.
A \emph{subinstance-minimal model} of~$Q$ is a model $I$ of~$Q$ such that no strict
subinstance of~$I$ satisfies~$Q$.

We focus on \emph{unbounded queries}, i.e., queries having an
infinite number of subinstance-minimal models.
Examples of well-studied homomorphism-closed query languages 
include \emph{conjunctive queries} (CQs), \emph{unions of CQs} (UCQs),
\emph{regular path
queries} (RPQs), and \emph{Datalog} without inequalities or negations.
The queries defined by
Datalog or RPQs are unbounded unless they are equivalent to a UCQ (i.e., 
non-recursive Datalog); more generally a query is either unbounded or equivalent
to a UCQ.

\subparagraph*{UR and PQE problems.}
In this paper, we study \emph{uniform reliability} (UR). The problem
$\UR(Q)$ for a fixed query~$Q$ is the following: we are given as input
an instance~$I$, and we must return how many subinstances of~$I$ satisfy~$Q$,
i.e., the number $|\{I' \subseteq I \mid I' \models Q\}|$. Note that we have no
general upper bound on the complexity of this problem, as we allow queries to be arbitrarily complex or even
undecidable to evaluate, e.g., ``there is a path $R(x_1), S(x_1, x_2), \ldots,
S(x_{n-1}, x_n), T(x_n)$ where~$n$ is the index of a Turing machine
that halts''.

We will sometimes consider the generalization of UR called \emph{probabilistic
query evaluation} (PQE). The $\PQE(Q)$ problem for a fixed query~$Q$ asks,
given an instance $I$ and a probability distribution $\pi\colon I \to [0,1]$
mapping each fact of~$I$ to a rational in~$[0,1]$, to determine the total
probability of the subinstances of~$I$ satisfying~$Q$, when each fact $F \in I$ is
drawn independently from the others with the probability $\pi(F)$. Formally,
we must compute:
$
\sum_{I' \subseteq I \text{~s.t.~} I' \models Q} \prod_{F \in I'} \pi(F)
\times \prod_{F' \in I \setminus I'} (1 - \pi(F)).
$

The UR problem is a special case of PQE where the function $\pi$ maps all facts
to~$1/2$, up to renormalization, i.e., multiplying by~$2^{|I|}$. We will sometimes abusively talk about
UR as the problem of computing that probability,
because this probabilistic phrasing makes it more convenient, e.g., to reason about
conditional probabilities, or about negligible probabilities.

\subparagraph*{Hard problems.}
The goal of this paper is to show Theorem~\ref{thm:main}.
We will establish \mbox{\#P-hardness} using \emph{polynomial-time Turing
reductions}~\cite{Cook71} (see~\cite{amarilli2021dichotomy} for details).
Specifically, we reduce from one of two \#P-hard problems, depending on the query.
In~\cite{amarilli2021dichotomy}, we reduce from the problems \#PP2DNF
and U-ST-CON (undirected source-to-target connectivity), which are
shown to be \#P-hard in~\cite{provan1983complexity}. In this paper, given our focus on UR, we
reduce from variants of these problems: the
\emph{$\lambda,\mu,\nu$-variable-clause-variable probabilistic 
\#PP2DNF problem}
and the \emph{$\phi,\eta$-vertex-edge probabilistic U-ST-CON problem}. We first define the first problem:

\begin{definition}
  \label{def:pp2dnf}
  Let $0 < \lambda,\nu < 1$ and $0 < \mu \leq 1$ be fixed probabilities. The
  \emph{$\lambda,\mu,\nu$-variable-clause-variable probabilistic \#PP2DNF problem}
  (or for brevity \emph{$\lambda,\mu,\nu$-\#PP2DNF}) is the following: given a bipartite graph $(U
  \cup V, E)$ with $E \subseteq U \times V$, we ask for the probability that we
  keep an edge and its two incident vertices, where vertices of $U$ have
  probability~$\lambda$ to be kept, 
  edges of~$E$ have
  probability~$\mu$ to be kept, and
  vertices of~$V$ have probability~$\nu$ to
  be kept, all these choices being independent. Formally, we must compute:\\
  $
    \sum_{\substack{(U', E', V') \subseteq U \times E \times V \\ E' \cap (U' \times V')
    \neq \emptyset}}\quad
    \lambda^{|U|'} \times (1-\lambda)^{|U|-|U|'} \times
    \mu^{|E|'} \times (1-\mu)^{|E|-|E|'} \times
    \nu^{|V|'} \times (1-\nu)^{|V|-|V|'}
  $
\end{definition}

The name \#PP2DNF is because of the link to positive partitioned 2-DNF
formulas, which we do not need here.
We can show that $\lambda,\mu,\nu$-\#PP2DNF is \#P-hard, by adapting the
proof in~\cite{amarilli2022uniform} which shows the hardness of
uniform reliability for the query $R(x), S(x, y), T(y)$:

\begin{toappendix}
  In this appendix, we give the formal details to show that the two problems
  from which we reduce are \#P-hard.

  We first explain more carefully why the first problem is intractable, as for
  inessential reasons this actually requires some inspection of the proof
  of~\cite{amarilli2022uniform}:
\end{toappendix}

\begin{propositionrep}[\cite{amarilli2022uniform}]
  For any fixed $0 < \lambda,\nu < 1$ and $0 < \mu \leq 1$, the
  problem $\lambda,\mu,\nu$-\#PP2DNF 
  is \#P-hard.
\end{propositionrep}

\begin{proof}
  We can see the input bipartite graph as an instance over the signature with
  unary relations $R$ and $T$ and binary relation~$S$, but the instances thus
  obtained must obey the following
  restrictions (*): the domains of~$R$
  and~$T$ are disjoint, and the domain of~$S$ is a subset of the products of
  that of~$R$ and~$T$. The probability of the $R$-facts is then~$\lambda$, that of the
  $T$-facts is then $\nu$, and that of the $S$-facts is then $\mu$, and we ask for
  the probability of drawing a possible world where the query $Q_0: R(x), S(x, y),
  T(y)$ holds.

  This is an instance of probabilistic query evaluation for the query $Q_0$
  with fixed probabilities $\lambda,\mu,\nu$ for the facts of the relations $R,S,T$
  respectively. This PQE problem is shown \#P-hard on arbitrary instances by Theorem~4.3
  of~\cite{amarilli2022uniform}. By inspection of the proof (see Section~5,
  ``Defining the gadgets'' and ``Defining the reduction''), one can check the
  instances used in the hardness proof in fact obey the restrictions~(*), so that the
  problem is still \#P-hard even when restricted to input instances obeying~(*).
  Thus, the $\lambda,\mu,\nu$-\#PP2DNF problem is also \#P-hard, which concludes the
  proof.
\end{proof}

We now define the second problem:

\begin{definition}
  \label{def:ustcon}
  Let $0 < \phi \leq 1$ and $0 < \eta < 1$ be fixed probabilities.
  The \emph{$\phi,\eta$-vertex-edge-probabilistic U-ST-CON problem}
  (or for brevity \emph{$\phi,\eta$-U-ST-CON})
  is the following: given an undirected
  graph~$G = (V, E)$ and source and sink vertices $r,s\in V$ with $r\neq s$, we ask for the
  probability that we keep a subset of edges and vertices containing a path that
  connects~$r$
  and~$s$ (in particular keeping~$r$ and~$s$),
  where vertices have probability~$\phi$ to be kept and edges
  have probability~$\eta$ to be kept, all these choices being independent. Formally, we must
  compute:
  \[
    \sum_{\substack{V' \subseteq V, E' \subseteq E\\
    r\text{~and~}s\text{~connected in~}(V',~ \restr{E'}{V'})}}
    \phi^{|V|'} \times (1-\phi)^{|V|-|V|'} \times
    \eta^{|E|'} \times (1-\eta)^{|E|-|E|'}
  \]
\end{definition}

This intuitively combines features of the undirected source-to-target
edge-connectedness and node-connectedness problems
of~\cite{valiant1979complexity}.
With standard techniques and some effort, we can show that
\mbox{$\phi,\eta$-U-ST-CON} is \#P-hard (see Appendix~\ref{apx:ustcon}):

\begin{toappendix}
We now show hardness of the second problem in the rest of this appendix:
\end{toappendix}

\begin{propositionrep}
  \label{prp:stconhard}
  For any fixed $0 < \phi \leq 1$ and~$0 < \eta < 1$, 
  the problem $\phi,\eta$-U-ST-CON is \#P-hard.
\end{propositionrep}

\begin{toappendix}
\label{apx:ustcon}
In the process of proving Proposition~\ref{prp:stconhard}, we will need to
consider the $\phi,\eta$-vertex-edge-probabilistic U-ST-CON problem in cases where
$\eta=1$, even though the statement of the proposition does not mention it.

So we actually show the following stronger statement:

\begin{proposition}
  \label{prp:stconhard2}
  For any fixed $0 < \phi \leq 1$ and~$0 < \eta \leq 1$ such that $\phi<1$
  or~$\eta<1$,
  the problem $\phi,\eta$-U-ST-CON is \#P-hard.
\end{proposition}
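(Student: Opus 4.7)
The plan is to reduce from the classical undirected $s$-$t$ edge-reliability counting problem, which corresponds to the case $\phi = 1, \eta = 1/2$ and is \#P-hard by~\cite{valiant1979complexity,provan1983complexity}. The overall strategy is polynomial interpolation driven by cloning gadgets, handled in three sub-cases depending on where $(\phi, \eta)$ sits relative to the boundary.

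First I would treat the case $\phi = 1$, $\eta \in (0, 1)$. Given an input graph $G = (V, E)$ with terminals $r, s$, for each $k \in \{1, \ldots, |E|+1\}$ I build $G_k$ by replacing every edge of $G$ by $k$ parallel edges. In $G_k$ a super-edge transmits connectivity iff at least one of its $k$ copies is retained, which happens with probability $p_k := 1 - (1 - \eta)^k$, independently across original edges. Letting $N_j$ be the number of $j$-edge subsets of $G$ in which $r$ and $s$ are connected, the oracle on $G_k$ returns
\[
\alpha_k = \sum_{j=0}^{|E|} N_j\, p_k^{\,j} (1 - p_k)^{|E| - j}.
\]
Since the $p_k$ are pairwise distinct when $\eta \in (0, 1)$, the system $(\alpha_k)_k$ is of Vandermonde type in the $N_j$'s and can be inverted in polynomial time to recover the target count $\sum_j N_j$.

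Second, for the case $\eta = 1$, $\phi \in (0, 1)$, edges never fail, and I would \emph{convert} vertex probability into edge probability by subdividing each input edge $\{u, v\}$ with a fresh vertex $x_{uv}$: the simulated edge carries connectivity iff $x_{uv}$ is retained, with probability $\phi$. The original vertices $u, v$ remain problematic because they can also fail, so I would additionally replace each of them by a cluster of $m$ copies forming a gadget that survives (has at least one retained copy) with probability $1 - (1 - \phi)^m$, redirecting each incident edge to all copies. Choosing $m$ polynomial in the instance size makes the additive error from cluster failures negligible, and the terminals $r, s$ are handled by conditioning on their retention, e.g., by dividing the oracle output by the constant $\phi^2$. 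The same Vandermonde interpolation as above then recovers the count.

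The main obstacle is the general case $\phi, \eta \in (0, 1)$, where the effective survival of an original edge depends on three correlated Bernoulli events (the edge itself and both of its endpoints), breaking the independence required by the first case. The plan is to first apply vertex-cloning as in the second case to make non-terminal vertices effectively always present, decoupling the vertex layer and reducing to the first case up to a polynomially-vanishing error. The delicate step is showing that this vanishing error is sufficiently small to be absorbed by the Turing reduction, because the Vandermonde inversion can amplify small perturbations in the $\alpha_k$'s; carefully bounding this amplification will drive the choice of the polynomial blow-up $m$.
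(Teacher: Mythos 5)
Your decomposition into the two boundary cases ($\phi=1$ and $\eta=1$) plus the general case matches the paper's structure. For $\phi=1$ the paper simply cites the known \#P-hardness of edge-reliability for any fixed $\eta\in(0,1)$ from Provan--Ball, whereas you re-derive it by parallel-edge cloning and Vandermonde interpolation; that is correct but unnecessary. For $\eta=1$ your route (reduce from edge-reliability by subdividing each edge, then saturate the original vertices with clusters and round) differs from the paper, which instead reduces from Valiant's node-connectedness at probability $1/2$ by vertex cloning and exact Vandermonde interpolation over $q=1,\dots,|V|-1$; your variant is plausible, though the closing remark that ``the same Vandermonde interpolation then recovers the count'' is confused --- after the cluster saturation you need a rounding argument, not interpolation.

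The genuine gap is in the general case $\phi,\eta\in(0,1)$, and it is not the error-amplification issue you flag. Your plan is to clone each non-terminal vertex $m$ times, redirecting every incident edge to all copies, so as to ``decouple the vertex layer and reduce to the first case.'' This decoupling does not happen when edges can also fail. First, even conditioned on every cluster having a surviving copy, the event that a super-edge $\{u,v\}$ ``transmits'' is not an independent coin of computable bias: it depends on \emph{which} copies of $u$ and $v$ survived, and two super-edges sharing the cluster of $u$ are correlated through that shared randomness. Worse, connectivity is not even a function of the per-super-edge transmission events: a path $v\to u\to w$ needs a single surviving copy $u_i$ incident to surviving edges on \emph{both} sides, whereas the super-edges $\{u,v\}$ and $\{u,w\}$ may each transmit only via different copies of $u$. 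So the resulting instance is not (even approximately) an edge-reliability instance of the original graph, and no choice of $m$ fixes this. Your own $\eta=1$ case escapes the problem only because reliable edges make every surviving copy equivalent. The paper avoids the issue by saturating in the opposite direction: it reduces \emph{from} the vertex case $\phi,1$-U-ST-CON by replacing each edge with $q$ parallel length-$2$ paths, whose internal randomness (one fresh intermediate vertex and two fresh edges, surviving together with probability $\phi\eta^2$) is private to that path, so conditioning on every original edge being reflected leaves exactly the vertex-reliability problem, recoverable by rounding. To repair your proof you would need to replace the vertex-cloning step in the general case by such an edge-gadget argument (or find another gadget with private randomness).
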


To do so, we first distinguish the cases where one of the probabilities is~$1$.
When the vertex probability is~$\phi = 1$, the problem is already known to be
hard:

\begin{proposition}[\cite{provan1983complexity}]
  \label{prp:hard1q}
  For any fixed probability~$0 < \eta < 1$, the problem $1,\eta$-U-ST-CON is \#P-hard.
\end{proposition}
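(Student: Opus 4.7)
The proposition is attributed to Provan and Ball~\cite{provan1983complexity}, whose classical \#P-hardness result for undirected $2$-terminal reliability matches our setting: the constraint $\phi=1$ corresponds to the all-vertices-present regime of classical network reliability, and the fixed probability~$\eta$ plays the role of their edge-reliability parameter. My proof plan therefore has a very short main line---invoke their theorem and check that the formulations agree---plus a longer contingency plan for reconstructing the result from scratch, which I now sketch.

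If I were to reprove the result, I would proceed in two steps. First, give a gadget-based reduction from a standard \#P-hard combinatorial problem such as \#MONOTONE-2SAT to $1,\eta_0$-U-ST-CON at one specific probability, conventionally $\eta_0 = 1/2$. In the reduction, variables are represented by edges and clauses by small subgraphs that block connectivity precisely when the associated clause is unsatisfied; $r$ and $s$ are placed so that an $r$-to-$s$ path exists in a random subgraph if and only if the selected variable assignment satisfies the formula. With every edge independently present at probability $1/2$, the connectivity probability equals $N/2^m$, where $N$ counts satisfying assignments and $m$ is the number of edges, giving \#P-hardness at $\eta_0 = 1/2$.

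Second, I would extend from $\eta_0 = 1/2$ to an arbitrary fixed $\eta \in (0,1)$ by a standard polynomial-interpolation argument. For any fixed graph $G$, the reliability $R_G(\eta)$ is a polynomial in~$\eta$ of degree at most~$|E(G)|$. Build a polynomial-size family of auxiliary graphs $G^{(k)}$ by subdividing each edge of~$G$ into a path of $k$ new edges (adding $k-1$ always-present intermediate vertices), for $k = 1, \ldots, |E(G)|+1$. In~$G^{(k)}$, each original edge is effectively present with probability~$\eta^k$, so a single query to the $1,\eta$-U-ST-CON oracle on~$G^{(k)}$ returns the value~$R_G(\eta^k)$. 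These $|E(G)|+1$ evaluations of~$R_G$ are at distinct points (since $0 < \eta < 1$), so Lagrange interpolation recovers~$R_G$ and in particular yields $R_G(1/2)$, reducing $1,1/2$-U-ST-CON to $1,\eta$-U-ST-CON and finishing the proof.

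The main obstacle, were one to avoid citing~\cite{provan1983complexity}, is the combinatorial correctness of the gadget reduction in the first step: setting up an \emph{exact} bijection between satisfying assignments of the source problem and the subinstances of the constructed graph in which $r$ and~$s$ remain connected, so that probabilities and counts correspond up to a clean normalization. The interpolation in the second step is robust and routine, and is of the same flavor as the saturation-based interpolation arguments that the rest of the paper will deploy for the more delicate vertex-cluster gadgets.
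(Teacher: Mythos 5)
Your main line---invoking Provan and Ball's classical \#P-hardness result for undirected two-terminal reliability with a fixed edge probability---is exactly what the paper does; its proof of this proposition is a one-line citation. Your contingency interpolation via subdividing each edge into a $k$-path is also sound: since the subdivision paths are pairwise internally disjoint and the intermediate vertices always survive when $\phi=1$, one indeed has $R_{G^{(k)}}(\eta)=R_G(\eta^k)$, and evaluating at $k=1,\ldots,|E|+1$ gives enough distinct points to interpolate the degree-$|E|$ polynomial $R_G$ and read off $R_G(1/2)$; this is precisely the Vandermonde technique the paper uses for Proposition~\ref{prp:hardp1}, where Valiant's vertex-connectedness hardness is only stated at probability~$1/2$ and must be propagated to arbitrary~$\phi$.
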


\begin{proof}
  This is simply the standard U-ST-CON problem with probability~$\eta$ on the edges, which is \#P-hard~\cite{provan1983complexity}.
\end{proof}

When the edge probability is~$\eta = 1$, we can show hardness by reducing from the
undirected S--T NODE CONNECTEDNESS problem. That problem is shown to be \#P-hard
in~\cite{valiant1979complexity} but unfortunately only for the vertex
probability~$\phi = 1/2$. We can show that hardness also applies
to any arbitrary choice of~$\phi$ using a standard interpolation argument:

\begin{proposition}
  \label{prp:hardp1}
  For any fixed probability~$0 < \phi < 1$, the problem $\phi,1$-U-ST-CON is \#P-hard.
\end{proposition}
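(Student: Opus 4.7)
The plan is to reduce from the undirected s--t node-connectedness problem of~\cite{valiant1979complexity}, which is precisely $1/2,1$-U-ST-CON and is known to be \#P-hard. Given an input $(G,r,s)$ with $G=(V,E)$ and $n \coloneqq |V|$, its target probability equals $(1/2)^{2}\cdot h(1/2)$, where
\[
h(p) = \sum_{S} p^{|S|}(1-p)^{n-2-|S|}
\]
and the sum ranges over subsets $S \subseteq V\setminus\{r,s\}$ for which $r$ and $s$ are connected in $G[S\cup\{r,s\}]$; the prefactor $(1/2)^{2}$ accounts for $r$ and $s$ themselves being kept. In particular, $h$ is a polynomial in $p$ of degree at most $n-2$.

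For each $k \in \{1,\dots,n-1\}$ I would build an instance $G_k$ of $\phi,1$-U-ST-CON by replacing every non-terminal $v \in V\setminus\{r,s\}$ by a clique $C_v$ on $k$ fresh copies, and replacing every edge $uv \in E$ by a complete bipartite graph between $C_u$ and $C_v$ (with the convention $C_r = \{r\}$ and $C_s = \{s\}$). Because all edges of $G_k$ have probability~$1$ and each cluster is internally a clique, two kept copies that lie in the same cluster, or in clusters of adjacent vertices of $G$, are automatically in the same connected component of the kept subgraph. Hence $r$ and $s$ are connected in the kept subgraph of $G_k$ iff $r,s$ are both kept and the set $S$ of non-terminals $v$ whose cluster $C_v$ contains at least one kept vertex satisfies ``$r,s$ connected in $G[S\cup\{r,s\}]$''. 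Since each cluster survives independently with probability $q_k \coloneqq 1 - (1-\phi)^{k}$, the oracle returns exactly $\phi^{2}\cdot h(q_k)$.

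As $0<\phi<1$, the values $q_1,\dots,q_{n-1}$ are pairwise distinct, so the oracle calls on $G_1,\dots,G_{n-1}$ yield $n-1$ evaluations of the polynomial $h$ of degree $\leq n-2$ at distinct points. Standard Lagrange interpolation then recovers $h$ in polynomial time, and outputting $(1/2)^{2}\cdot h(1/2)$ solves the original instance, completing the polynomial-time Turing reduction.

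The main obstacle I expect is justifying the identity ``oracle output on $G_k$ equals $\phi^{2}\cdot h(q_k)$'': this requires checking that the clustering gadget faithfully simulates an independent Bernoulli event of probability $q_k$ at each non-terminal, using that the complete bipartite interfaces together with the internal cliques collapse each cluster into a single ``super-vertex'' as far as s--t connectivity is concerned. Once this identity is in hand, the interpolation step is routine.
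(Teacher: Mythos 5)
Your proposal is correct and follows essentially the same route as the paper's proof: reduce from Valiant's s--t node-connectedness ($\tfrac12,1$-U-ST-CON), blow up each non-terminal into $k$ copies so that the effective vertex-survival probability becomes $1-(1-\phi)^k$, and interpolate from the $n-1$ oracle answers (the paper phrases the interpolation as inverting a Vandermonde system in the cardinality-counts $X_i$ rather than Lagrange-interpolating $h$ and evaluating at $1/2$, which is equivalent). The only cosmetic difference is that you add intra-cluster cliques, which are harmless but unnecessary --- the paper's clusters are independent sets and the connectivity equivalence holds either way.
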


\begin{proof}
  We reduce from the undirected S--T NODE CONNECTEDNESS problem, which is shown
  to be \#P-hard in~\cite{valiant1979complexity}. This is the $\frac
  12,1$-U-ST-CON problem, i.e., where we count the
  number of vertex subsets such that there is a path connecting the source~$r$
  and sink~$s$.

  First note
  that, given that there is no path in the cases where~$r$ and~$s$ are not kept,
  up to multiplying by a factor of $1/4$ we can assume that the vertices~$r$ and~$s$ are
  always present. The same holds for our problem $\phi,1$-U-ST-CON, multiplying by a factor of~$\phi^2$. So we will work with these slightly modified
  problems where we only consider subsets of vertices where the vertices~$r$ and~$s$ are kept. In this case the
  answer to the problem is clearly~$1$ if~$r$ and~$s$ are adjacent,
  so we further assume that~$r$ and~$s$ are not adjacent.

  Given an instance of the S--T NODE CONNECTEDNESS PROBLEM, namely, an undirected graph $G = (V, E)$ and
  source and sink vertices $r$ and~$s$, our goal is to compute the number of
  \emph{good} vertex subsets, where we call a subset $V' \subseteq
  V \setminus \{s,t\}$ \emph{good} if there is an undirected path between~$r$
  and~$s$ in
  the induced subgraph on~$V'$ in~$G$.
  We define quantities $X_0, \ldots, X_{|V|-2}$
  where $X_i$ for $0 \leq i \leq |V|-2$ denotes the number of good subsets of
  cardinality~$i$.
  Note that, for instance, $X_0 = 0$ as $r$ and~$s$ are not adjacent. The
  quantity that we must compute is $\sum_i X_i$.
  We will explain how the computation of the vector $\vec{X}$ of these
  quantities reduces in polynomial time to the our 
  problem $\phi,1$-U-ST-CON problem, which suffices to conclude.

  For any positive integer $1 \leq q \leq |V|-1$, let $G_q$ denote the undirected graph
  obtained from~$G$ where each vertex~$v$ of~$V \setminus \{s,t\}$
  is replaced by copies~$v_1, \ldots, v_q$, where every edge $\{u, v\}$ with $u,
  v \notin \{s, t\}$ is
  replaced by the $q^2$ edges $\{u_i, v_j\}$ for~$1 \leq i, j \leq q$,
  and where every edge $\{u, s\}$ or $\{u, t\}$ is replaced by the $q$ edges
  $\{u_i, s\}$ or $\{u_i, t\}$ for $1 \leq i \leq q$ respectively. (Remember
  that there are no edges between~$r$ and~$s$.)
  Remark that after this transformation
  two copies $u_i$ and~$u_j$ of a vertex~$u$ are \emph{not} adjacent. The
  construction of~$G_q$ from~$G$ is 
  in polynomial time because the value~$q$ is polynomial in the size of~$q$.

  We now invoke our oracle for the problem
  $\phi,1$-U-ST-CON on the graph $G_q$ for each $1 \leq q \leq |V|-1$, returning
  probabilities $P_1, \ldots, P_{|V|-1}$. Let us examine these probabilities in
  more detail and show that they are connected to the quantities $X_0, \ldots,
  X_{|V|-2}$ that we must compute, by showing the following equation:
  \begin{equation}
    \label{eq:pq}
    P_q = (1-\phi^q)^{|V|-2} \sum_{0 \leq i \leq |V|-2} X_i \times 
    \left(\frac{1}{1-\phi^q} - 1\right)^i
  \end{equation}
  To show Equation~\ref{eq:pq}, notice that 
  we can choose a subset of vertices of~$G_q$ by selecting
  vertices in~$V \setminus \{s, t\}$
  and choosing whether they are \emph{kept} (i.e., one of their copies in~$G_q$
  is kept) or \emph{not
  kept} (i.e., none of the copies are kept). Note that the probability that a
  vertex is not kept is $(1-\phi)^q$ and the probability that it is kept is $1 -
  (1-\phi)^q$. Now, note that a subset of vertices of~$G_q$ has a path
  connecting~$r$ and~$s$ iff the kept vertices in~$G$ have a path
  connecting~$r$ and~$s$, i.e., the kept vertices in~$G$ are a good subset.
  Indeed, in one direction, if there is a path of kept vertices
  connecting~$r$ and~$s$ in~$G_q$, modifying the path by replacing each vertex
  copy~$v_i$ by the vertex~$v$, we obtain a path connecting~$r$ and~$s$ in~$G$,
  each vertex of which was kept. Conversely,
  if there is a path of kept vertices connecting~$r$ and~$s$ in~$G$, we
  obtain a path connecting~$r$ and~$s$ in~$G_q$ by replacing each kept vertex by
  some copy witnessing that it is kept.

  For this reason, the value~$P_q$ returned by the oracle on~$G_q$ can be
  expressed as:
  \[
    P_q = \sum_{\substack{V' \subseteq V \setminus\{s, t\}\\V' \text{~good~in~}
    G}}
    (1-\phi^q)^{|V|-2-|V'|} \times (1-(1 - \phi^q))^{|V'|}
  \]
  As the summand only depends on the cardinality of~$V'$,
  we can split the sum according to this cardinality
  and regroup the factors, making the
  quantities~$X_i$ appear, to obtain:
  \[
    P_q = \sum_{0 \leq i \leq |V|-2} X_i \times 
    (1-\phi^q)^{|V|-2-i} \times (1-(1 - \phi^q))^{i}
  \]
  Taking out the common factors, and rewriting, we get:
  \[
    P_q = (1-\phi^q)^{|V|-2} \sum_{0 \leq i \leq |V|-2} X_i \times 
    \left(\frac{1-(1 - \phi^q)}{1-\phi^q}\right)^i
  \]
  Rewriting the contents of the parenthesis, we have
  obtained Equation~\ref{eq:pq}.

  Now, Equation~\ref{eq:pq} means that, from the oracle answers $P_1, \ldots,
  P_{|V|-1}$,
  by dividing by $(1-\phi^q)^{|V|-2}$, we
  can recover the vector $\vec Q$ of the following quantities, for $1 \leq q
  \leq |V|-1$:
  \begin{equation}
    \label{eq:qq}
    Q_q = \sum_{0 \leq i \leq |V|-2} X_i \times \alpha_\phi(q)^i
  \end{equation}
  where $\alpha_\phi(q) = \frac{1}{1-\phi^q} - 1$, which is incidentally
  well-defined because $0 < \phi < 1$ and $q > 0$. Further observe that 
  the function~$\alpha_\phi$ is in fact bijective (strictly decreasing) over the
  positive reals.

  Now, Equation~\ref{eq:qq} can be seen as a matrix equation
  \[
    \vec Q =  M \vec X
  \]
  where the matrix~$M$ is a square matrix whose $q$-th row and $i$-th column
  contains $\alpha_\phi(q)^i$. We recognize a Vandermonde matrix. Further, as
  $\alpha_\phi$ is bijective, the values $\alpha_\phi(q)$ are pairwise distinct,
  so~$M$ is invertible.

  This concludes the presentation of our reduction: given the input graph~$G$,
  we build the graph $G_q$ for each $1 \leq q \leq |V|-1$ in polynomial time,
  we obtain the values~$P_q$ from the oracle, deduce the
  vector~$\vec Q$ by dividing by $(1 - \phi^q)^{|V|-2}$,
  build the matrix~$M$, invert it in polynomial time, compute $\vec X = M^{-1}
  \vec Q$, and compute the sum $\sum_i X_i$ which is the answer to the
  undirected S-T NODE CONNECTEDNESS problem instance
  from which we were reducing. This concludes the proof.
\end{proof}

We will conclude the proof of 
Proposition~\ref{prp:stconhard2}, hence of
Proposition~\ref{prp:stconhard}, using
  Propositions~\ref{prp:hard1q} and~\ref{prp:hardp1}, by a saturation and
  rounding argument. To do this, we need a general-purpose lemma about
  probabilistic computation, which we will actually reuse when proving
  Lemma~\ref{lem:negligible} in Appendix~\ref{apx:noniter}:

\begin{lemma}
  \label{lem:proba}
  Fix some probability $0 < \zeta < 1$.
  We are given $0 < \epsilon < 1$ be a probability, and let $\chi > 0$ be an
  integer. Then for any integer $q > \frac{\ln(\chi) -
  \ln(\epsilon)}{-\ln(1-\zeta)}$, we have:
  \[
  1 - (1 - (1-\zeta)^q)^\chi < \epsilon
  \]
\end{lemma}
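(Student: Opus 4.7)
The plan is a direct calculation. I would first reduce the problem to a simpler bound using Bernoulli's inequality, then extract the threshold on~$q$ by taking logarithms.

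Concretely, set $p = (1-\zeta)^q$, so that $p \in (0,1)$ since $0 < \zeta < 1$ and $q \geq 1$. Bernoulli's inequality gives $(1-p)^\chi \geq 1 - \chi p$ (valid because $p \in [0,1]$ and $\chi$ is a positive integer), whence
\[
  1 - (1 - (1-\zeta)^q)^\chi \;\leq\; \chi (1-\zeta)^q.
\]
So it suffices to establish the stronger inequality $\chi (1-\zeta)^q < \epsilon$.

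This inequality is equivalent to $(1-\zeta)^q < \epsilon/\chi$. Taking natural logarithms (note $\epsilon/\chi > 0$) yields $q \ln(1-\zeta) < \ln(\epsilon) - \ln(\chi)$. Since $0 < \zeta < 1$ we have $\ln(1-\zeta) < 0$, so dividing by $\ln(1-\zeta)$ flips the inequality and produces exactly the hypothesis
\[
  q \;>\; \frac{\ln(\chi) - \ln(\epsilon)}{-\ln(1-\zeta)}.
\]
Running these equivalences backwards from the assumed lower bound on~$q$ gives $\chi(1-\zeta)^q < \epsilon$, and composing with the Bernoulli step yields the desired bound.

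There is essentially no obstacle here: the only thing worth being careful about is that Bernoulli's inequality applies in the form we need (which it does, since $\chi$ is a positive integer and $(1-\zeta)^q \in [0,1]$) and that the sign of $\ln(1-\zeta)$ is negative so that dividing by it reverses the inequality in exactly the way that matches the statement.
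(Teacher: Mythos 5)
Your proof is correct. It follows the same overall strategy as the paper---Bernoulli's inequality plus a logarithm computation yielding the intermediate bound $(1-\zeta)^q < \epsilon/\chi$---but you apply Bernoulli in the complementary direction: you use the elementary integer-exponent form $(1-p)^\chi \geq 1 - \chi p$ directly on $p = (1-\zeta)^q$, so the target inequality follows immediately from $\chi(1-\zeta)^q < \epsilon$. The paper instead first rewrites the goal as $(1-\zeta)^q < 1 - (1-\epsilon)^{1/\chi}$ and then applies the real-exponent form of Bernoulli, $(1-\epsilon)^{1/\chi} \leq 1 - \epsilon/\chi$, to compare the right-hand side with $\epsilon/\chi$. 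Your version is slightly more direct (no $\chi$-th root, no need for the generalized Bernoulli inequality with exponent in $[0,1]$), and the logarithmic manipulation deriving $(1-\zeta)^q < \epsilon/\chi$ from the assumed lower bound on $q$ matches the paper exactly.
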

Note that $\ln(\epsilon) < 0$, $\ln(\chi) \geq 0$, and $\ln(1-\zeta)<0$.

The intuition of the lemma is
the following. We are considering a process where we want to have all of~$\chi$ draws
succeed. Each draw consists of performing some number $q$ of attempts, and each attempt
succeeds with probability~$\zeta$ to succeed one attempt.
The probability~$\zeta$ is fixed, but the number of draws~$\chi$ is part of the
input, and we must choose the value of~$q$.
The lemma tells us how large we need to make~$q$ so as to
guarantee that the probability that some attempt succeeds in each of the $\chi$
draws is sufficiently high, i.e, the probability that there is a draw where we fail
all the attempts is at most some target probability~$\epsilon$.

\begin{proof}[Proof of Lemma~\ref{lem:proba}]
  The equation to show can be equivalently rephrased to:
\[
  (1 - (1-\zeta)^q)^\chi > 1- \epsilon
\]
Or, as $\chi>0$:
\[
  1 - (1-\zeta)^q > (1- \epsilon)^{1/\chi}
\]
Equivalently:
  \begin{equation}
    \label{eqn:target}
  (1-\zeta)^q < 1 - (1- \epsilon)^{1/\chi}
  \end{equation}
Thus, establishing Equation~\ref{eqn:target} suffices to conclude.

  Let us first show that we have:
  \begin{equation}
    \label{eqn:bound}
    (1-\zeta)^q < \frac{\epsilon}{\chi}
  \end{equation}
  To see why this holds, recall that we have defined:
  \[
    q > \frac{\ln(\chi) - \ln(\epsilon)}{-\ln(1-\zeta)}
  \]
  As $0<\zeta<1$, we have $\ln(1-\zeta) < 0$, so we deduce:
  \[
    q \ln(1-\zeta) < \ln(\epsilon) - \ln(\chi)
  \]
  Exponentiating, we get Equation~\ref{eqn:bound}.

  Now let us conclude. 
  We use Bernoulli's inequality, which states that for every real number $0
  \leq r \leq 1$ and $x \geq -1$ we have:
  \[
    (1+x)^r \leq 1 + rx
  \]
  In particular, for $0 \leq x \leq 1$ and $\chi>0$, we have:
  \[
    (1-x)^{1/\chi} \leq 1 - \frac{x}{\chi}
  \]
  Equivalently:
  \begin{equation}
    \label{eqn:bernoulli}
    \frac{x}{\chi} \leq 1 - (1-x)^{1/\chi}
  \end{equation}
  Thus from Equation~\ref{eqn:bound} and Equation~\ref{eqn:bernoulli}, taking
  $x = \epsilon$, we get
  Equation~\ref{eqn:target}, which concludes the proof.
\end{proof}

We can now show Proposition~\ref{prp:stconhard2}:

\begin{proof}[Proof of Proposition~\ref{prp:stconhard2}]
  Let $\zeta$ and $\eta$ be the fixed vertex and edge probabilities. If the
  vertex probability~$\zeta$ is~$1$, we
  conclude by Proposition~\ref{prp:hard1q}.
  If the edge probability~$\eta$ is~$1$, we conclude by 
  Proposition~\ref{prp:hardp1}.
  Otherwise, we show the hardness of $\zeta,\eta$-U-ST-CON in the rest of the
  proof, by reducing from the case $\zeta,1$-U-ST-CON where edges have probability~$1$. 
  We use a saturation and rounding argument where we intuitively replace each
  edge by a large number of parallel paths.

  Specifically, let us consider an instance
  to the problem $\phi,1$-U-ST-CON, consisting of an
  undirected graph $G$ with distinguished source and sink vertices $r$
  and~$s$.
  Let $n$ be the number of vertices of~$G$, and let $m$ be its number of edges.
  We assume without loss of generality that $n, m > 0$ as the answer is trivial
  otherwise.
  For any integer $q > 0$, we denote by~$G_q = (V_q, E_q)$ the
  graph obtained from~$G$ by keeping the same source and sink vertex and replacing each edge by $q$ parallel paths of
  length~$2$, i.e., every edge $\{u, v\}$ is replaced by $2q$ edges $\{u, w_{u,v,1}\},
  \ldots, \{u, w_{u,v,q}\}$ and $\{w_{u,v,1}, v\}, \ldots, \{w_{u,v,q}, v\}$ where
  the intermediate vertices $w_{u,v,1},
  \ldots, w_{u,v,q}$ are fresh.

  A \emph{possible world} of~$G_q$ is a subset $(V', E')$ of $V_q \times E_q$;
  remember that we have an oracle that can tell us the total probability of the
  possible worlds of~$G_q$ that are \emph{good}, i.e., contain a path connecting $r$ and~$s$
  where all vertices and edges are kept. By contrast, what we want to compute is
  the probability of vertex subsets $V'$ of~$V$ that are \emph{good}, i.e., the
  induced subgraph of~$G$ on~$V'$ contains a path connecting~$r$ and~$s$.

  We will argue that for sufficiently large~$q$ it is very likely that each edge
  of~$G$ is reflected by one of the paths of length~$2$ that codes it. Formally,
  we say that a possible world 
  is \emph{valid} if,
  for each of the $m$ edges $\{u, v\}$ of~$G$,
  there is some choice of~$1 \leq p \leq q$ such that 
  the vertex $w_{u,v,p}$ and the edges $\{u, w_{u,v,p}\}$ and $\{v, w_{u,v,p}\}$
  were kept; otherwise the possible world is \emph{invalid}.
  We claim that, if we consider the possible worlds of~$G_q$ 
  where we condition the choices of the vertices $w_{u,v,i}$ for $\{u,v\} \in E$
  and $1 \leq i \leq q$ and of all edges to ensure that~$V'$ is valid,
  then there is a path connecting~$r$ and $s$
  (under draws of the other vertices
  i.e., those of $G$) with same probability 
  as in~$G$. Indeed, there is an immediate
  probability-preserving bijection between the subsets, and there is a
  length-2 path connecting two vertices $u$ and~$v$ in~$G_q$ precisely for the
  vertex pairs $u$ and $v$ connected by an edge in~$G$.

  So let us investigate the probability $\epsilon'$ of getting an invalid
  possible world of~$G_q$, and show
  that it can be made sufficiently small with a polynomial value of~$q$, so that
  the answer of our oracle on~$G_q$ is sufficiently close to the probability of
  the subsets of~$V$ containing a path connecting~$r$ and~$s$. 
  Let us write the rational~$\phi$ as~$a/b$
  where $a$ and $b$ are integers, so that $1-\phi = (b-a)/b$.
  Define $\epsilon \colonequals \frac{(1/b)^n}{2}$: this ensures that the
  probability of any subset of vertices of~$G$ is at most $2\epsilon$. 
  We want to show that the probability of invalid subsets is
  negligible, i.e., that we have $\epsilon' < \epsilon$. Now, the probability of an
  invalid subset is by definition the following:
  \[
    \epsilon' = 1 - (1 - (1 - \phi\eta^2)^q)^m
  \]
  because a subset is invalid if there is an edge for which for all $q$ paths of
  length~$2$ coding it we did not keep the two edges and the intermediate vertex
  in the path of length~$2$ that codes it.
  For an explanation of this expression, see the details below the statement of
  Lemma~\ref{lem:proba}.
  By Lemma~\ref{lem:proba}, taking $\zeta \colonequals \phi\eta^2$ and $\chi
  \colonequals m$, 
  we have $\epsilon' < \epsilon$ if we take $q \colonequals 1+\left\lfloor \frac{\ln(n) -
  \ln(\epsilon)}{-\ln(1-\zeta)}\right\rfloor$.
  Remembering that $\ln(1-\zeta)$ is a constant, and
  noting that by definition of~$\epsilon$ we have $\ln(\epsilon) = n \ln(1/b)
  - \ln 2$ where~$b$ is a constant, this value~$q$ is polynomial in~$n$, hence in the size of the input
  graph~$G$.

  This allows us to conclude the reduction. Given the graph~$G$ and vertices $r$
  and~$s$, we build~$G_q$ for this value of~$q$, which is in polynomial time
  in~$G$.
  We then call our oracle for
  the $\phi,\eta$-vertex-edge-probabilistic problem on~$G_q$. It returns the value:
  \[
    O = \epsilon' X + (1-\epsilon') Y
  \]
  Where $\epsilon' < \epsilon$ is the probability of getting an invalid
  possible world, where $X$ is the probability that an
  invalid possible world is good (which is unknown); and $Y$ is the probability 
  that a valid possible world is good (which as we argued is the answer to the  
  $\phi,1$-U-ST-CON instance~$G$, i.e., what we need to compute).
  Equivalently, we obtain:
  \[
    O = Y + \epsilon'(X-Y) + Y
  \]
  Recall that $\phi = a/b$, so the
  answer that we wish to compute is of the form $Y = Z/b^n$ for some
  integer~$Z$, and it is sufficient to determine the integer~$Z$.
  We can multiply the
  oracle result~$O$ by $b^n$ and obtain:
  \[
    O b^n = Z + b^n \epsilon' (X-Y)
  \]
  Now, let us argue that we can recover~$Z$ from~$O b^n$, hence recover $Y$
  from~$O$ and conclude.
  As $X$ and $Y$ are probabilities we
  have $-1 \leq X-Y \leq 1$, and as $0 \geq \epsilon' < \epsilon = \frac{b^{-n}}{2}$, we conclude
  that $-1/2 < b^n \epsilon' (X-Y) < 1/2$, so we recover~$Z$ from the oracle
  answer by rounding the rational $O b^n$ to the nearest integer, concluding the reduction.
\end{proof}

\end{toappendix}

\section{Basic Techniques: Dissociation, Tight Edges}
\label{sec:basic}
Having presented the hard problems, we now recall the notion
of \emph{edges} and how we \emph{copy} them, and the \emph{dissociation}
operation introduced in~\cite{amarilli2021dichotomy}. We also present 
\emph{tight edges} and re-state the result of~\cite{amarilli2021dichotomy}
showing that unbounded queries have models with tight edges.

\subparagraph*{Edges and copies.}
An \emph{edge} $e$ in an instance $I$ is an ordered pair $(u,v)$ of distinct elements of $\dom(I)$ such
that there is at least one fact of~$I$ using both~$u$ and~$v$, i.e., of the form
$R(u, v)$ or~$R(v, u)$, hence non-unary.
The \emph{covering facts} of~$e$ in~$I$ is the non-empty set of these facts.
Note that $(u,v)$ is an edge iff $(v, u)$ is, and they have the same covering
facts.

We call~$e = (u,v)$ a \emph{non-leaf edge}
if~$I$ contains facts using~$u$ but not $v$ (called \emph{left-incident facts})
and facts using~$v$ but not~$u$ (called \emph{right-incident facts}).
An example is shown in Figure~\ref{fig:edge} (with no unary facts).
The left-incident and right-incident facts are called together the \emph{incident
facts}; note that they may include unary facts.
In this paper we will often modify instances $I$ by \emph{copying} an edge $e = (u,
v)$ of~$I$ to some other ordered pair $(u', v')$ of elements.
This means that we modify~$I$ to add,
for each covering fact~$F$ of~$e$, the fact 
obtained by replacing~$u$ by~$u'$ and~$v$ by~$v'$. Note that, if $u'$ and~$v'$
are both fresh, or if $u'=u$ and $v'$ is fresh or $v'=v$ and~$u'$ is fresh,
then the result of this process has a homomorphism back to~$I$.
Clearly, copying~$(u,v)$ on~$(u',v')$ is equivalent to copying~$(v,u)$
on~$(v',u')$ (but different from copying, say, $(u,v)$ on~$(v',u')$).
Note that copying an edge does \emph{not} copy its incident facts, though
our constructions will often separately copy some of them.

\begin{example}
  In the instance $I = \{R(a), S(a, b), S'(b, a), T(b)\}$, copying $(a, b)$ on~$(a, b')$
  for a fresh element~$b'$ means adding the facts $S(a, b'), S'(b', a)$.
\end{example}

\subparagraph*{Dissociation.}
One basic operation on instances is \emph{dissociation}, which replaces one edge
by two copies connected to each endpoint:

\begin{definition}
  Let $I$ be an instance and $e = (u, v)$ be a non-leaf edge of~$I$. The
  \emph{dissociation} of~$e$ in~$I$ is obtained by modifying $I$ to add two
  fresh elements~$u'$ and~$v'$, copying~$e$ to~$(u', v)$ and
  to~$(u, v')$, and then removing the covering facts of~$e$.
\end{definition}

\begin{figure}
  \begin{subfigure}[t]{.25\linewidth}
\begin{tikzpicture}[xscale=.5,yscale=.6,inner sep=1.3]
  \node (u) at (0, 0) {$u$} ;
  \node (v) at (2, 0) {$v$} ;
  \draw[thick,orange,->] (u) -- (v);
  \node (t1) at (-2, -1) {$t_1$};
  \node (t2) at (-2, 1) {$t_2$};
  \node (w1) at (4, -1) {$w_1$};
  \node (w2) at (4, 1) {$w_2$};
  \draw[thick,black,->] (u) -- (t1);
  \draw[thick,blue,->] (u) -- (t2);
  \draw[thick,black,->] (w1) -- (v);
  \draw[thick,orange,->] (w2) -- (v);
\end{tikzpicture}
    \caption{A non-leaf edge~$e$ with incident facts}
    \label{fig:edge}
  \end{subfigure}
\hfill
  \begin{subfigure}[t]{.25\linewidth}
\begin{tikzpicture}[xscale=.5,yscale=.6,inner sep=1.3]
  \node (u) at (0, 0) {$u$} ;
  \node (up) at (0, -.5) {$u'$} ;
  \node (v) at (2, 0) {$v$} ;
  \node (vp) at (2, .5) {$v'$} ;
  \draw[thick,orange,->] (u) -- (vp);
  \draw[thick,orange,->] (up) -- (v);
  \node (t1) at (-2, -1) {$t_1$};
  \node (t2) at (-2, 1) {$t_2$};
  \node (w1) at (4, -1) {$w_1$};
  \node (w2) at (4, 1) {$w_2$};
  \draw[thick,black,->] (u) -- (t1);
  \draw[thick,blue,->] (u) -- (t2);
  \draw[thick,black,->] (w1) -- (v);
  \draw[thick,orange,->] (w2) -- (v);
\end{tikzpicture}
    \caption{The dissociation of~$e$}
    \label{fig:dissoc}
  \end{subfigure}
\hfill
  \begin{subfigure}[t]{.45\linewidth}
    \centering
\begin{tikzpicture}[xscale=.9,yscale=.6,inner sep=1.3]
  \node (u) at (0, 0) {$u$} ;
  \node (v) at (2, 0) {$v$} ;
  \draw[thick,orange,->] (u) -- (v);
  \node (t1) at (-2, -1) {$t_1$};
  \node (t2) at (-2, 1) {$t_2$};
  \node (w1) at (4, -1) {$w_1$};
  \node (w2) at (4, 1) {$w_2$};
  \draw[thick,orange,->] (u) -- (t1);
  \draw[thick,orange,->] (u) -- (t2);
  \draw[thick,orange,->] (w1) -- (v);
  \draw[thick,orange,->] (w2) -- (v);
  \node (g1) at (-.5,-1) {$g_1$};
  \node (g2) at (1,-1) {$g_2$};
  \node (g3) at (2.5,-1) {$g_3$};
  \draw[thick,orange,dashed,->] (u) -- (g1);
  \draw[thick,orange,dashed,->] (u) -- (g2);
  \draw[thick,orange,dashed,->] (g2) -- (v);
  \draw[thick,orange,dashed,->] (g3) -- (v);
  \node (x1) at (-.5,1) {$x_1$};
  \node (x2) at (.5,1) {$x_2$};
  \node (x3) at (1.5,1) {$x_3$};
  \node (x4) at (2.5,1) {$x_4$};
  \draw[thick,orange,dashed,->] (u) -- (x2);
  \draw[thick,orange,->] (u) -- (x2);
  \draw[thick,orange,->] (x2) -- (v);
  \draw[thick,orange,dashed,->] (u) -- (x4);
  \draw[thick,black,->] (u) -- (x1);
  \draw[thick,blue,->] (u) -- (x3);
  \draw[thick,black,->] (x3) -- (v);
  \draw[thick,orange,->] (x4) -- (v);
\end{tikzpicture}
    \caption{An edge and various kinds of incident facts. See
    Example~\ref{exa:minimal}}
    \label{fig:minimal}
  \end{subfigure}
  \caption{Examples of Section~\ref{sec:basic} and~\ref{sec:minimal}}
  \label{fig:basic}
\end{figure}

The process is illustrated in Figures~\ref{fig:edge} and~\ref{fig:dissoc}. Note the following
immediate observation:

\begin{claim}
  The dissociation of an edge in~$I$ has a homomorphism back to~$I$.
\end{claim}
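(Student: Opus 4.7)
I would prove the claim by exhibiting an explicit homomorphism. Let $I$ be the original instance, let $e = (u,v)$ be a non-leaf edge of $I$, and let $I'$ be the dissociation of $e$ in $I$, obtained by introducing fresh elements $u'$ and $v'$, copying $e$ to $(u',v)$ and to $(u,v')$, and removing the covering facts of $e$. I would define $h \colon \dom(I') \to \dom(I)$ by $h(u') \coloneqq u$, $h(v') \coloneqq v$, and $h(x) \coloneqq x$ for every other element of $\dom(I')$ (which is an element of $\dom(I)$, since $u'$ and $v'$ are the only fresh constants introduced).

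The verification is then a straightforward case analysis on the facts of $I'$. First, any fact of $I'$ that was already a fact of $I$ and did not belong to the covering facts of $e$ is unchanged; since its constants are fixed by $h$, its image under $h$ is itself, which belongs to $I$. Second, the facts of $I'$ obtained by copying $e$ to $(u',v)$ are, by definition, of the form $R(u',v)$ or $R(v,u')$ for each covering fact $R(u,v)$ or $R(v,u)$ of $e$ in $I$; under $h$ they map to $R(u,v)$ or $R(v,u)$, which are in $I$ because they are covering facts. The case of the copies of $e$ to $(u,v')$ is symmetric. Thus every fact of $I'$ has an image under $h$ that lies in $I$, so $h$ is a homomorphism.

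\textbf{Main obstacle.} There is essentially no obstacle: the claim is a direct consequence of the definition of dissociation, and the only subtlety is to be careful that the covering facts of $e$ that were removed from $I$ need not be in the image of $h$ (we just need every fact of $I'$ to map into $I$, not the converse). The observation that $u'$ and $v'$ are fresh is what guarantees that redefining $h$ on them does not conflict with any fact of $I'$ inherited from $I$.
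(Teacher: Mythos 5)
Your proposal is correct, and it is the intended proof: the paper itself treats this claim as an ``immediate observation'' and does not spell it out, but the sentence just before the definition of dissociation (noting that copying an edge to a pair where at least one endpoint is fresh yields a homomorphism back to $I$) points to exactly the map $u' \mapsto u$, $v' \mapsto v$, identity elsewhere, that you construct and verify.
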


\subparagraph*{Tight edges.}
We can then define a \emph{tight edge} as one whose dissociation breaks the
query:

\begin{definition}
  A non-leaf edge $(u, v)$ in an instance~$I$ is \emph{tight} for the query~$Q$
  if $I$ satisfies~$Q$ but the
  dissociation of $(u,v)$ in~$I$ does not.
\end{definition}

We use a result of~\cite{amarilli2021dichotomy} which shows that
unbounded queries must have a
model with a tight edge. This is the only point where we
use the unboundedness of the query.

\begin{theorem}[Theorem~6.6 in~\cite{amarilli2021dichotomy}]
  \label{thm:tight}
  Any unbounded query has a model with a tight edge.
\end{theorem}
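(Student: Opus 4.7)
The plan is to argue by contrapositive: assume $Q$ has no model with a tight edge, i.e., the dissociation of any non-leaf edge in any model of~$Q$ preserves $Q$-satisfaction, and deduce that $Q$ must be bounded (equivalent to a finite UCQ), contradicting unboundedness.

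First, starting from any model $I$ of $Q$, I would iteratively dissociate non-leaf edges. Each step strictly reduces the number of non-leaf binary edges in the instance: the dissociated edge disappears, the two new edges involving fresh endpoints $u', v'$ are leaf by construction (since $u', v'$ have no other non-unary facts), and the leaf status of every other edge is preserved (no existing edge gains or loses an incident fact that changes its status). Hence the process terminates on a ``fully dissociated'' instance $I^*$ which, by the hypothesis, still satisfies~$Q$; its binary-fact Gaifman graph is a disjoint union of stars, with each original vertex of $I$ being a center and fresh pendants as leaves. Composition of the per-step fold-back maps yields a homomorphism $I^* \to I$.

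Second, I would bound the size of any subinstance-minimal model $N \subseteq I^*$ by a constant $C$ depending only on the signature~$\sigma$. Each pendant in $N$ has only covering facts to its center, so its ``type'' is a subset of $\{R, R^{-1} : R \in \sigma\}$, yielding at most $2^{2|\sigma|}$ pendant types. If two same-type pendants $p_1, p_2$ attached to the same center appeared in~$N$, the homomorphism $h : N \to N$ sending $p_1 \mapsto p_2$ and fixing everything else would satisfy $h(N) \subseteq N \setminus \{\text{facts involving } p_1\}$; by homomorphism-closure of~$Q$ and monotonicity, removing any single covering fact of $p_1$ would still give a model, contradicting minimality. A parallel pairing argument collapses two same-type centers, where the ``type'' of a center records its unary facts plus the set of pendant types attached to it in~$N$. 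Hence both the number of centers in $N$ and the number of pendants per center are bounded by a constant depending only on~$\sigma$.

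Finally, for any minimal model $M$ of~$Q$, I would dissociate to get $M^*$ and pick a subinstance-minimal $N \subseteq M^*$ with $|N| \leq C$. The inclusion composed with the fold-back gives a homomorphism $h : N \to M$, whose image $h(N) \subseteq M$ satisfies~$Q$ (via the surjective homomorphism $N \to h(N)$ and homomorphism-closure); by the subinstance-minimality of~$M$ we then get $h(N) = M$, so $|M| \leq |N| \leq C$. Thus all minimal models of $Q$ have globally bounded size, so $Q$ has finitely many minimal models up to isomorphism and is equivalent to a finite UCQ, contradicting unboundedness. The main obstacle will be the center-collapsing part of Step~2: it requires a simultaneous pairing of all pendants of two candidate centers, together with a careful verification that the resulting homomorphism's image strictly omits some fact incident to the collapsed center, which is the combinatorial core of the proof carried out in~\cite{amarilli2021dichotomy}.
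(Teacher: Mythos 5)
Your proposal is correct and follows essentially the same route as the paper's (sketched) proof: iteratively dissociate non-leaf edges under the assumption that no dissociation ever breaks~$Q$, observe that the fully dissociated instance is a star forest whose subinstance-minimal satisfying subsets have constant size by a type-collapsing argument, fold back along the dissociation homomorphisms, and contradict the existence of arbitrarily large subinstance-minimal models. The only cosmetic differences are that you phrase it as a contrapositive applied to all minimal models rather than to one sufficiently large one, and you should also record unary facts in the pendant types, which only changes the constant.
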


We give a proof sketch for completeness (see~\cite{amarilli2021dichotomy} for
the proof):

\begin{proofsketch}
  As the query~$Q$ is unbounded, it has infinitely many minimal models: let~$I$ be a
  sufficiently large one. Iteratively dissociate the non-leaf edges of~$I$ until
  none remain (this always terminates), and let~$I'$ be the result. If $I'$
  violates~$Q$, then some dissociation broke~$Q$, i.e., was applied to 
  a tight edge in a model of~$Q$. Otherwise, $I'$
  has no non-leaf edges and satisfies~$Q$. We can then show thanks to the simple
  structure of~$I'$ that it has a constant-sized subset that satisfies~$Q$, and
  deduce that~$Q$ already holds on a
  constant-sized subinstance of~$I$. As $I$ is large, this contradicts the
  minimality of~$I$.
\end{proofsketch}

Thus, in the sequel, we fix the query~$Q$ and assume that it has a model with a
tight edge.
Note that some bounded queries may also have a tight edge, e.g., the
prototypical unsafe CQ $R(x), S(x, y), T(y)$; our results in this paper thus
also apply to some bounded queries.

\begin{toappendix}
  \label{apx:basic}
  In the appendix, to simplify some of the technical proofs that use the dissociation process, we
  explain how we can extend the definition of this process so that it also
  applies (in a vacuous way) to non-leaf edges and non-edges.
  This will make it easier to deal with several different
  cases in a unified fashion in proofs:
  \begin{definition}
    Let $I$ be an instance and let $(u,v)$ be a \emph{non-edge}, i.e., a ordered
    pair of distinct elements of
    $\dom(I)$ such that no fact of~$I$ uses both~$u$ and~$v$.
    The \emph{dissociation} of~$(u,v)$ in~$I$ is simply~$I$, i.e., we do not do anything.

    Let $I$ be an instance. A \emph{leaf edge} is an ordered pair $e= (u,v)$ of distinct
    elements of~$\dom(I)$ which is an edge but not a non-leaf edge. Hence, one
    of~$u$ and~$v$ (or both) is such that the only facts of~$I$ where this
    element appears are the covering facts of~$e$. We call such elements the
    \emph{leaf elements} of~$e$. (Specifically, if both $u$ and $v$ are leaf
    elements then the only facts of~$I$ using $u$ or~$v$ are the covering facts
    of~$e$.) Then the \emph{dissociation} of~$e$ in~$I$ is
    obtained by replacing the leaf elements by fresh constants. For instance, if
    $u$ occurs in other facts of~$I$ than the covering facts of~$e$ but~$v$ does
    not, then we rename~$v$ to be a fresh element. Likewise, if both~$u$ and~$v$
    are leaf elements, then we rename them both, i.e., the covering facts of~$e$
    are replaced by an isomorphic copy on two fresh elements.

    Note that, for any instance~$I$, if we take $e= (u,v)$ to be a non-edge or a
    leaf edge of~$I$, then the dissociation of~$e$ in~$I$ is isomorphic to~$I$,
    in particular it satisfies the query iff $I$ does. Specifically,
    dissociating a non-edge or a leaf edge never makes the query false.

    Further note that, for any ordered pair $e = (u,v)$ of distinct elements
    of~$\dom(I)$ (whether non-edge, leaf edge, or non-leaf edge) then the
    dissociation of~$e$ in~$I$ is an instance in which~$e$ is no longer an edge.
  \end{definition}
\end{toappendix}

\section{Minimality and Critical Models}
\label{sec:minimal}
In this section, we refine the notion of a tight edge to impose minimality
criteria and get to the notion of \emph{critical models}.
We define three successive minimality criteria, which we present intuitively
here before formalizing them in the rest of this section.
The first is called \emph{weight} and counts the covering facts;
the \emph{critical weight} $\Theta$ is the minimal weight of a tight edge. Having defined~$\Theta$,
we restrict our attention to \emph{clean} tight edges~$e$,
whose incident facts do not include so-called \emph{garbage facts}, i.e., strict subsets of the covering
facts of~$e$.
The second criterion is \emph{extra weight} and counts the incident facts that
are not isomorphic to the covering facts; the \emph{critical extra
weight} $\Xi$ is the minimal extra weight of a tight edge of weight~$\Theta$.
The third criterion is \emph{lexicographic weight} and counts the other left-incident and
right-incident facts, ordered lexicographically: the \emph{critical
lexicographic weight} $\Lambda$ is the minimal lexicographic weight of a tight
edge of weight~$\Theta$ and extra weight~$\Xi$.

We then define a \emph{critical model} as a \emph{subinstance-minimal} model with a clean tight edge that
optimizes these three weights in order,
and show that such models exist.

\subparagraph*{Weight.}
The \emph{weight} was defined in~\cite{amarilli2021dichotomy},
but unlike in~\cite{amarilli2021dichotomy} we do not count unary facts:

\begin{definition}
  The \emph{weight} of an edge $e = (u, v)$ in an instance~$I$ is the number
  of covering facts of~$e$ (it is necessarily greater than~$0$).
\end{definition}

\begin{example}
  The weight of $(a,b)$ in $I = \{R(b), T(b, c), S(b,a), S'(b,a), U(a,b)\}$
  is~$3$.
\end{example}

The minimal weight of a tight edge across all models is an intrinsic
characteristic of~$Q$, called the \emph{critical weight}:

\begin{definition}
  The \emph{critical weight} of the query~$Q$, written $\Theta \geq 1$, is the
  minimum, across all models $I$ of~$Q$ and tight edges $e$ of~$I$, of the weight
  of~$e$ in~$I$.
\end{definition}

The point of the critical weight is that edges with weight less than~$\Theta$ can never
be tight:

\begin{claimrep}
  \label{clm:dissocweight}
  Let $I$ be a model of~$Q$ and $e = (u, v)$ be a non-leaf edge of~$I$. If the
  weight of~$e$ is less than~$\Theta$,
  then the dissociation of~$e$ in~$I$ is also a model of~$Q$.
\end{claimrep}

\begin{proof}
  If the dissociation did not satisfy~$Q$, then $e$ would be tight, which would
  contradict the minimality of~$\Theta$.
\end{proof}

\begin{example}
  \label{exa:weight}
  The bounded CQ $Q': R(x), S(x, y), S'(x, y), T(y)$ has critical weight~$2$,
  as witnessed by the model $I' = \{R(a), S(a, b), S'(a, b), T(b)\}$ with a
  tight non-leaf edge
  $(a, b)$ of weight 2 and the inexistence of a model with a tight non-leaf edge of weight $1$.

  As $Q'$ has critical weight~$2$, in any model $I$ of~$Q'$, if we have an edge
  $e= (u, v)$ with only one covering fact using both~$u$
  and~$v$, we know that dissociating~$e$ cannot make~$Q'$ false.

\end{example}

Having defined~$\Theta$, to simplify further definitions, we introduce the
notion of a \emph{clean} edge as one that does not have incident facts achieving
strict subsets of its covering facts:
\begin{definition}
  Let $I$ be an instance, let $e=(u,v)$ be an edge of~$I$,
  and let~$C \subseteq I$ be the covering facts of~$e$.
  For any edge $(u,t)$, if its covering facts are isomorphic to a strict subset
  of~$C$ when renaming $t$ to~$v$, then we call these left-incident facts
  \emph{left garbage facts}. Likewise, the \emph{right garbage facts} are the
  right-incident facts that are covering facts of edges $(w,v)$ that are
  isomorphic to a strict subset of~$C$ when renaming~$w$ to~$u$.

  We call~$e$ \emph{clean} if
  it has no left or right garbage facts (called collectively \emph{garbage
  facts}).
\end{definition}

\begin{example}
  In the instance $I = \{S(a, b'), U(a), S(a, b), S'(b, a), T(c, b), S(c, b), S'(d, b), 
  \allowbreak S'(b, e), \allowbreak S(f, b)\}$, the left garbage facts of the
  edge~$(a, b)$ are $\{S(a, b')\}$ on the edge $(a,
  b')$, and the right
  garbage facts are $\{S'(b, e)\}$ on the edge $(e, b)$ and $\{S(f, b)\}$ on the
  edge $(f, b)$. Note that there are no
  garbage facts on the edge $(b, c)$, because the covering facts $\{T(c, b),
  S(c, b)\}$ of this edge are not isomorphic to a strict subset of the covering
  facts of~$(a, b)$. Further note that there are no garbage facts on the edge
  $(d, b)$, because the covering facts $\{S'(d, b)\}$ are not isomorphic to a
  strict subset of the covering facts of $(a,b)$ when renaming $d$ to~$a$.
\end{example}

We will always be able to ensure that tight edges with critical weight are
clean, justifying that we restrict our attention to clean tight edges in the
sequel:

\begin{claimrep}
  \label{clm:clean}
  If $Q$ has a model with a tight edge, then it has a model with a clean tight
  edge of weight~$\Theta$.
\end{claimrep}

\begin{proofsketch}
  We find a model with a tight edge of weight~$\Theta$ by definition
  of~$\Theta$. Then, any edges with garbage facts have weight~$<\Theta$,
  so they can be dissociated using Claim~\ref{clm:dissocweight} and homomorphically
  merged to~$e$. At the end of this process, $e$ is clean and is
  still tight.
\end{proofsketch}

\begin{toappendix}
  \begin{figure}
    \begin{tikzpicture}[xscale=.8,yscale=.4,inner sep=1.5]
    \node (u) at (0, 0) {$u$};
    \node (v) at (1, 0) {$v$};
    \node (t1) at (-1, 1) {$t_1$};
    \node (t2) at (-1, 0) {$t_2$};
    \node (t3) at (-1, -2) {$x_1$};
    \node (w1) at (2, 1) {$w_1$};
    \node (w2) at (2, 0) {$w_2$};
    \node (w3) at (2, -2) {$y_1$};
    \draw[orange,->,thick] (u) -- (v);
    \draw[orange,->,thick] (u) -- (t1);
    \draw[black,->,thick] (u) -- (t2);
    \draw[orange,dashed,->,thick] (u) -- (t3);
    \draw[orange,->,thick] (w1) -- (v);
    \draw[black,->,thick] (w2) -- (v);
    \draw[orange,dashed,->,thick] (w3) -- (v);
    \end{tikzpicture}
    \hfill
    \begin{tikzpicture}[xscale=.8,yscale=.4,inner sep=1.5]
    \node (u) at (0, 0) {$u$};
    \node (up) at (0, -1) {$u'$};
    \node (v) at (1, 0) {$v$};
    \node (vp) at (1, -1) {$v'$};
    \node (t1) at (-1, 1) {$t_1$};
    \node (t2) at (-1, 0) {$t_2$};
    \node (t3) at (-1, -3) {$x_1$};
    \node (t3p) at (-1, -2) {$x_1'$};
    \node (w1) at (2, 1) {$w_1$};
    \node (w2) at (2, 0) {$w_2$};
    \node (w3) at (2, -3) {$y_1$};
    \node (w3p) at (2, -2) {$y_1'$};
    \draw[orange,->,thick] (u) -- (v);
    \draw[orange,->,thick] (u) -- (t1);
    \draw[black,->,thick] (u) -- (t2);
    \draw[orange,dashed,->,thick] (up) -- (t3);
    \draw[orange,dashed,->,thick] (u) -- (t3p);
    \draw[orange,->,thick] (w1) -- (v);
    \draw[black,->,thick] (w2) -- (v);
    \draw[orange,dashed,->,thick] (w3p) -- (v);
    \draw[orange,dashed,->,thick] (w3) -- (vp);
    \end{tikzpicture}
    \hfill
    \begin{tikzpicture}[xscale=.8,yscale=.4,inner sep=1.5]
    \node (u) at (0, 0) {$u$};
    \node (up) at (0, -1) {$u'$};
    \node (v) at (1, 0) {$v$};
    \node (vp) at (1, -1) {$v'$};
    \node (t1) at (-1, 1) {$t_1$};
    \node (t2) at (-1, 0) {$t_2$};
    \node (t3) at (-1, -3) {$x_1$};
    \node (w1) at (2, 1) {$w_1$};
    \node (w2) at (2, 0) {$w_2$};
    \node (w3) at (2, -3) {$y_1$};
    \draw[orange,->,thick] (u) -- (v);
    \draw[orange,->,thick] (u) -- (t1);
    \draw[black,->,thick] (u) -- (t2);
    \draw[orange,dashed,->,thick] (up) -- (t3);
    \draw[orange,->,thick] (w1) -- (v);
    \draw[black,->,thick] (w2) -- (v);
    \draw[orange,dashed,->,thick] (w3) -- (vp);
    \end{tikzpicture}
    \hfill
    \begin{tikzpicture}[xscale=.8,yscale=.4,inner sep=1.5]
    \node (u) at (0, 0) {$u$};
    \node (up) at (0, 1) {$u'$};
    \node (v) at (1, 0) {$v$};
    \node (vp) at (1, -1) {$v'$};
    \node (t1) at (-1, 1) {$t_1$};
    \node (t2) at (-1, 0) {$t_2$};
    \node (t3) at (-1, -2) {$x_1$};
    \node (w1) at (2, 1) {$w_1$};
    \node (w2) at (2, 0) {$w_2$};
    \node (w3) at (2, -2) {$y_1$};
    \draw[orange,->,thick] (u) -- (vp);
    \draw[orange,->,thick] (up) -- (v);
    \draw[orange,->,thick] (u) -- (t1);
    \draw[black,->,thick] (u) -- (t2);
    \draw[orange,dashed,->,thick] (u) -- (t3);
    \draw[orange,->,thick] (w1) -- (v);
    \draw[black,->,thick] (w2) -- (v);
    \draw[orange,dashed,->,thick] (w3) -- (v);
    \end{tikzpicture}
    \caption{Illustration of the proof of Claim~\ref{clm:clean}: an instance $I$
    with a tight edge~$e$ with incident garbage facts (dashed orange edges),
    copy facts (orange edges), and extra facts (black edges); the dissociation
    $I_1$ of the edges containing garbage facts; the result $I_2$ of merging the
    dangling copies on~$(u,v)$; the dissociation of~$e$ in~$I$.}
    \label{fig:clean}
  \end{figure}
\end{toappendix}

\begin{proof}
  In this proof, we use the conventions on dissociation introduced in
  Appendix~\ref{apx:basic}. The process of the proof is illustrated in
  Figure~\ref{fig:clean}.

  By definition of $\Theta$, we know that~$Q$ has a model~$I$ with a tight
  edge~$e$  of weight~$\Theta$. Let us modify~$I$ to obtain a model~$I'$
  where~$e$ is a tight edge of weight~$\Theta$ which is clean. To do so,
  consider all the elements $x_1, \ldots, x_n$ with which $e$ has left garbage
  facts, and the elements~$y_1, \ldots, y_m$ with which $e$ has right garbage
  facts; note that these sets may not be disjoint (though they are on
  Figure~\ref{fig:clean}).

  The edges $(x_1,v), \ldots,
  (x_n,v)$ and $(u,y_1), \ldots, (u,y_m)$ have weight~$<\Theta$, so they can be
  dissociated without breaking the query (using Claim~\ref{clm:dissocweight} if
  they are non-leaf, or trivially if they are leaf using the conventions of
  Appendix~\ref{apx:basic}). Let $I_1$ be the result of performing all these
  dissociations: we know that $I_1$ satisfies the query. In $I_1$, the edge~$e$ still has garbage facts because of the
  edge copies created in the dissociations (called \emph{dangling copies}), but all the elements with which it
  has garbage facts are now leaf elements. Also note that in $I_1$ relative
  to~$I$ we have created other dangling edges (on the elements $x_1, \ldots,
  x_n$ and $y_1, \ldots, y_m$).

  Now, let $I_2$ be the result of homomorphically mapping the dangling copies
  into~$e$, which is possible because their covering facts are (up to renaming)
  a subset of those of~$e$. We know that $I_2$ satisfies the query.
  Further, by construction the edge~$e$ 
  does not have any garbage facts in~$I_2$, and obviously it still has weight~$\Theta$.

  The only remaining point is to show that 
  $e$ is still tight in~$I_2$. To see why, let $I_2'$ be the
  result of dissociating~$e$ in~$I_2$ (i.e., isomorphic to~$I_2$ if~$e$ is a
  leaf edge, though this case can in fact never happen), and let~$I'$ be the result of dissociating~$e$ in~$I$: we know
  that $I'$ violates~$Q$ because $e$ is tight in~$I$. Now, we claim that~$I_2'$
  has a homomorphism to~$I'$, defined by mapping the two copies of the
  edge~$(u,v)$ dissociated in~$I_2$ to the two copies of that edge in~$I'$, and
  mapping the other dangling edges created when going from~$I$ to~$I_1$ to the
  incident edges of~$(u,v)$ that were dissociated to create them. Thus, as $I'$
  does not satisfy the query, neither does~$I_2'$. Thus $e$ is indeed tight
  in~$I_2$, in particular it must be non-leaf. This concludes the proof.
\end{proof}

\subparagraph*{Extra weight.}
We further restrict tight edges~$e$ by limiting their number of incident facts,
similarly to the notion of \emph{side weight} in~\cite{amarilli2021dichotomy}.
However, in this paper, we additionally
partition the incident facts between so-called \emph{extra facts} and \emph{copy
facts}. Intuitively, our reductions will use codings that introduce copies of
the edge~$e$, and the \emph{extra facts} are those that can be
``distinguished'' from incident copies of~$e$ added in codings; by contrast copy
facts are non-unary facts in edges that are isomorphic copies of~$e$ and therefore ``indistinguishable''.

We want to minimize the number of extra facts, to intuitively ensure that they
are all ``necessary'', in the sense that a copy of~$e$ missing an incident extra
fact can be dissociated. Let us formally define the extra facts: among the
non-garbage incident facts, they are those
that are part of a so-called \emph{triangle} (i.e., involve an element occurring
both in a left-incident in a
right-incident fact), those which are unary, or those which are a covering fact of an edge whose covering facts are not
isomorphic to the covering facts of~$e$.

\begin{definition}
    Let $I$ be an instance with an edge $e = (u, v)$, and
    let~$C \subseteq I$ be the covering facts of~$e$. An element~$w
    \in \dom(I)$ forms a \emph{triangle} with~$e$ 
    if both~$(u, w)$ and $(v, w)$ are edges.

    Let $(u', v')$ be
  some edge of~$I$. We call $(u', v')$ a \emph{copy} of~$(u, v)$ if the
  covering facts of~$(u', v')$ are isomorphic to $C$ by the
  isomorphism mapping~$u'$ to~$u$ and~$v'$ to~$v$.

  We partition the non-garbage left-incident facts of~$(u, v)$ between:
  \begin{itemize}
    \item The \emph{left copy facts}, i.e., the binary facts involving~$u$
      and an element~$v'$ such that $(u,v')$ is a copy of~$(u,v)$ and $v'$ does
      \emph{not} form a triangle with~$e$: we call~$v'$ a \emph{left copy
      element} of~$e$.
    \item The \emph{left extra facts}, which comprise all other non-garbage
      left-incident facts, namely:
      \begin{itemize}
        \item The unary facts on~$u$.
        \item The non-garbage binary facts involving~$u$ and some element~$x$
          such that:
          \begin{itemize}
            \item the element~$x$ forms a triangle with~$e$; or
            \item the covering facts of the edge~$(u,x)$ are not isomorphic to~$C$.
              \end{itemize}
      \end{itemize}
  \end{itemize}
  We partition the non-garbage right-incident facts into \emph{right extra facts} and
  \emph{right copy facts} with
  \emph{right copy elements} in a similar way.
  Note that, as we prohibit triangles, the
  left copy elements and right copy elements are disjoint.
  We talk of the \emph{copy elements},
  \emph{copy facts}, \emph{extra facts}
  of~$e$ to denote both the left and right kinds.
\end{definition}

\begin{example}
  \label{exa:minimal}
  Consider the instance of Figure~\ref{fig:minimal} and the edge
  $e = (u,v)$. The covering facts $C$ of~$e$ are represented as an orange edge,
  and the other orange edges represent edges which are copies of~$e$.
  The left and right copy elements are respectively $t_1$ and $t_2$
  and $w_1$ and $w_2$. The dashed orange edges represent edges whose covering
  facts are a strict subset of~$C$,
  i.e., they are garbage facts. The extra facts include
  unary facts (not pictured), facts with $x_1$ (the black edge $(u, x_1)$ represents
  non-garbage facts not isomorphic to~$C$), and facts with $x_2$, $x_3$, and
  $x_4$ (which form triangles).
\end{example}

Note that garbage facts are neither extra facts nor copy facts, and are ignored
in the definition above except in that
they may help form triangles. This does not matter: thanks to Claim~\ref{clm:clean},
garbage facts will only appear in intermediate steps of some proofs.
We can now define the
\emph{critical extra weight} as the minimal extra weight of a tight edge with
weight~$\Theta$:

\begin{definition}
  The \emph{critical extra weight} of~$Q$, written $\Xi \geq 0$, is
  the minimum across all models~$I$ of~$Q$ and tight edges~$e$ of~$I$ of
  weight~$\Theta$, of the number of extra facts of~$e$ in~$I$.
\end{definition}

\begin{example}
  Continuing Example~\ref{exa:weight}, the query~$Q'$ had critical extra
  weight~$2$, as witnessed by~$I'$. The query $Q'': R(x), S(x, y), S(x', y), S(x', y'),
  T(y')$, has critical weight~$1$ and critical extra weight~$0$, as
  witnessed by the model $I'' = \{R(a), S(a, b), S(a', b), S(a', b'), T(b')\}$
  where the edge $(a', b)$ is tight and has weight~$1$ and extra weight~$0$.
\end{example}

Again, the definition of critical extra weight clearly ensures:

\begin{claim}
  \label{clm:dissocextra}
  Let $I$ be a model of~$Q$ and $e = (u, v)$ be a non-leaf edge. If $e$ has
  weight~$\Theta$ and extra weight $< \Xi$, then the dissociation of~$e$ in~$I$
  is also a model of~$Q$.
\end{claim}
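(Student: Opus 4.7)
The plan is a direct proof by contradiction, exactly mirroring the argument used for Claim~\ref{clm:dissocweight}. Suppose, for contradiction, that the dissociation of~$e$ in~$I$ fails to satisfy~$Q$. Then by the very definition of tightness, $e$ is a tight edge in the model~$I$.

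Now, by the hypotheses of the claim, $e$ has weight exactly~$\Theta$ and its extra weight is strictly less than~$\Xi$. Hence the pair $(I, e)$ consists of a model of~$Q$ together with a tight edge of that model of weight~$\Theta$, whose number of extra facts is strictly smaller than~$\Xi$. This directly contradicts the definition of~$\Xi$ as the minimum, over all such pairs, of the number of extra facts. We conclude that the initial assumption fails and that the dissociation of~$e$ in~$I$ indeed satisfies~$Q$.

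I do not anticipate any real obstacle here: unlike Claim~\ref{clm:clean}, which needed genuine combinatorial work to dissociate garbage-bearing edges and homomorphically fold the resulting dangling copies, the present statement follows immediately from the optimality built into the definition of the critical extra weight. In particular, one does not have to worry about cleanness of~$e$, since $\Xi$ is defined as a minimum over \emph{all} tight edges of weight~$\Theta$ in models of~$Q$, whether clean or not; likewise, the partition of incident facts into copy facts versus extra facts is not touched during the argument, so no case analysis on triangles or on isomorphism of covering facts is required.
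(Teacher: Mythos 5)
Your argument is correct and is essentially the one the paper has in mind: the claim is stated immediately after the sentence ``Again, the definition of critical extra weight clearly ensures,'' and the intended proof is precisely the one-line contrapositive you give, mirroring the proof of Claim~\ref{clm:dissocweight}. Your side remark about cleanness is also accurate --- $\Xi$ is defined as a minimum over all tight edges of weight~$\Theta$, clean or not, so no reduction to the clean case is needed.
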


\subparagraph*{Lexicographic weight.}
We then impose a third minimality requirement on tight edges~$e$, which is needed in
Section~\ref{sec:iter} (but unused in Section~\ref{sec:noniter}). The intuition is
that we want to limit the number of copy elements.
Specifically, we
minimize first the
number~$\tau$ of left copy elements,
then the number~$\omega$ of right copy elements, hence the name
\emph{lexicographic weight}. This is why, when choosing a tight edge, we also choose
an orientation (i.e., choosing $(u,v)$ as a tight edge is different from choosing $(v,u)$):

\begin{definition}
  Let $I$ be an instance with an edge $e = (u, v)$. Let $\tau$ be the
  number of left copy elements and~$\omega$ be the number of right copy
  elements of~$e$. The \emph{lexicographic weight} of~$e$ is the ordered pair $(\tau,
  \omega)$. We order these ordered pairs lexicographically, i.e., $(\tau, \omega) <
  (\tau', \omega')$ with $\tau, \tau', \omega, \omega' \in \NN$ iff $\tau <
  \tau'$ or $\tau = \tau'$ and $\omega < \omega'$.

  The \emph{critical lexicographic weight} $\Lambda$ of~$Q$ is the minimum, over
  all models~$I$ of~$Q$ and all tight edges of~$e$ with weight~$\Theta$ and
  extra weight~$\Xi$, of the lexicographic weight of~$e$.
\end{definition}

Note that minimizing the lexicographic weight does not always minimize the
total number of copy facts\footnote{Minimizing the total number of copy facts,
or minimizing along the componentwise partial order on $\NN \times \NN$, would
suffice almost everywhere in the proof except in part of Section~\ref{sec:iter}.},
e.g., $(1, 3) < (2, 1)$
but $1+3 > 2+1$. However, it is always the case that removing a
copy fact of an edge~$e$ causes the lexicographic weight of~$e$ to decrease
(and does not cause the extra weight to increase, as the remaining covering
facts of the edge are garbage facts).

Again, we have:

\begin{claim}
  \label{clm:dissoclex}
  Let $I$ be a model of~$Q$ and $e = (u, v)$ be a non-leaf edge with
  weight~$\Theta$, extra weight $\Xi$, and lexicographic weight $<\Lambda$. 
  Then, the dissociation of~$e$ in~$I$ is also a model of~$Q$.
\end{claim}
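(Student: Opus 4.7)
The plan is to carry out the same one-line contradiction argument that establishes Claims~\ref{clm:dissocweight} and~\ref{clm:dissocextra}, now invoking the definition of the critical lexicographic weight~$\Lambda$ instead of~$\Theta$ or~$\Xi$. Concretely, I would suppose for contradiction that the dissociation of~$e$ in~$I$ does \emph{not} satisfy~$Q$. Since $I$ does satisfy~$Q$ by hypothesis, and $e$ is a non-leaf edge, the definition of a tight edge then immediately tells us that $e$ is tight in~$I$.

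Having obtained that $e$ is tight, I would then observe that the three quantities we have assumed about~$e$ (weight exactly~$\Theta$, extra weight exactly~$\Xi$, and lexicographic weight strictly below~$\Lambda$) directly contradict the definition of~$\Lambda$. Indeed, $\Lambda$ is defined as the \emph{minimum} lexicographic weight, ranging over all models of~$Q$ and all tight edges of weight~$\Theta$ and extra weight~$\Xi$ in those models; exhibiting a tight edge in this very class whose lexicographic weight is strictly smaller than this minimum is a contradiction.

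I do not expect any genuine obstacle here: the point of defining $\Lambda$ as an infimum over exactly this class of edges is precisely to make this implication hold by construction, in the same way that $\Theta$ and~$\Xi$ were introduced to yield Claims~\ref{clm:dissocweight} and~\ref{clm:dissocextra} trivially. The only minor care needed is to check that~$e$, being non-leaf with weight~$\Theta$ and extra weight~$\Xi$, actually lies in the range of edges over which $\Lambda$ is computed, which is immediate from the hypotheses of the claim. No use is made of the structural notions (copy elements, triangles, garbage facts) beyond the bare fact that they enter the definitions of weight, extra weight, and lexicographic weight.
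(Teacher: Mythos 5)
Your proof is correct and matches the paper's approach exactly: the paper does not spell out a proof for this claim but signals (``Again, we have:'') that it follows by the same one-line contradiction as Claims~\ref{clm:dissocweight} and~\ref{clm:dissocextra}, which is precisely what you write.
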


\subparagraph*{Critical models.}
We now define \emph{critical models} (significantly refining
the so-called \emph{minimal tight patterns} of~\cite{amarilli2021dichotomy}).
A critical model $I$ is intuitively a
model of~$Q$ with a clean tight edge~$e$ that achieves the minimum of our three weight
criteria, and where we additionally impose that~$I$ is subinstance-minimal. For
convenience we also specify a choice of incident facts in the critical model,
but this choice is arbitrary, i.e., we can pick any pair of a left-incident fact and right-incident fact.

\begin{definition}
  A \emph{critical model} $(I, e, F_\LL, F_\RR)$ is a model~$I$ of~$Q$ which is subinstance-minimal,
  a clean tight edge~$e$ of~$I$ having weight~$\Theta$, extra weight
  $\Xi$, and lexicographic weight~$\Lambda$, and a left-incident
  fact~$F_\LL \in I$ and a right-incident fact~$F_\RR \in I$ of~$e$.
\end{definition}

We can now claim that critical models exist:

\begin{propositionrep}
  \label{prp:critical}
  If a query $Q$ has a model with a tight edge, then it has a critical model.
\end{propositionrep}

\begin{proofsketch}
  The existence of models with tight edges achieving the critical weights is by
  definition, cleanliness can be imposed by the process used to prove Claim~\ref{clm:clean}, and
  subinstance-minimality can easily be imposed by picking
  some minimal subset of facts of the model that satisfy the query.
\end{proofsketch}

  \begin{toappendix}
    To prove the proposition, we first prove a lemma:
    \begin{lemma}
      \label{lem:fctremove}
      Let $I$ be a model of~$Q$ and let $e$ be a non-leaf clean edge of~$I$. Let
      $J$ be a subinstance of~$I$. Then:

      \begin{itemize}
        \item If we removed in~$J$ a covering fact of~$e$ in~$I$, then the weight of~$e$
      in~$J$ is less than the weight of~$e$ in~$I$.
        \item Otherwise, the weight of~$e$ in~$J$ is the same as the weight
          of~$e$ in~$I$. Now, if we removed in~$J$ an extra fact of~$e$ in~$I$,
          then the extra weight of~$e$ in~$J$ is less than the extra weight
          of~$e$ in~$I$.
        \item Otherwise, the extra weight of~$e$ in~$J$ is the same as the
          extra weight of~$e$ in~$I$. Now, if we removed in~$J$ a copy fact
          of~$e$ in~$I$, then the lexicographic weight of~$e$ in~$J$ is less
          than the lexicographic weight of~$e$ in~$I$.
        \item Otherwise, the lexicographic weight of~$e$ in~$J$ is the same as
          the lexicographic weight of~$e$ in~$I$, and in fact we did not remove
          any incident fact of~$e$ in~$I$.
      \end{itemize}
    \end{lemma}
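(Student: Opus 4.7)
The plan is a case analysis on what type of fact among those associated with $e$ has been removed when passing from $I$ to $J$, exploiting that the cleanliness of $e$ in $I$ rules out garbage facts, so every incident fact of $e$ in $I$ is extra or copy. Writing $C$ for the set of covering facts of $e$ in $I$, we have immediately that the covering facts of $e$ in $J$ are $C \cap J$, so removing any covering fact strictly decreases the weight while removing none preserves it; this gives the first bullet.

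The heart of the argument is a \emph{classification monotonicity} statement: if $F \in J$ was a copy fact of $e$ in $I$ at an element $v'$, then in $J$ the fact $F$ is either still a copy fact or has become a garbage fact, but is never an extra fact. Indeed, the failure of the triangle condition at $v'$ is preserved when passing to $J$ (removing facts can only destroy edges), and the covering facts of the edge $(u, v')$ shrink from an isomorphic copy of $C$ to a subset thereof, which is therefore isomorphic either to $C$ or to a strict subset of $C$. This yields $\mathrm{ext}(e, J) \subseteq \mathrm{ext}(e, I)$, so if a removed extra fact $F$ exists then $\mathrm{ext}(e, J) \subseteq \mathrm{ext}(e, I) \setminus \{F\}$, which establishes the strict decrease for the second bullet.

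For the converse direction (extra weight preserved when no extra fact is removed), I would verify case by case that each reason for a fact being extra in $I$ is preserved in $J$. Unary facts on $u$ remain unary. If $F$ is extra because its partner $v'$ forms a triangle with $e$, then the right-incident facts witnessing the edge $(v, v')$ in $I$ are all extra (they cannot be copy, since copy requires the absence of a triangle, and cannot be garbage by cleanliness), so they survive in $J$ and the triangle persists. If $F$ is extra because the covering facts of $(u, v')$ in $I$ are not iso to $C$, then each of those covering facts must be extra in $I$ (being a copy fact would force iso to $C$, contradicting the assumption, and being garbage is ruled out by cleanliness), so all survive in $J$ and non-isomorphism is preserved. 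This gives $\mathrm{ext}(e, J) = \mathrm{ext}(e, I)$ and completes the third bullet.

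Finally, for lex weight, assume no covering and no extra fact was removed. If a left copy fact $F$ at an element $v'$ is removed, then $v'$ is no longer a left copy element of $J$ because the covering facts of $(u, v')$ in $J$ form a strict subset of the isomorphic copy of $C$ that they formed in $I$, hence are not iso to $C$. No new left copy element can appear: any candidate $v''$ that was not a left copy element of $I$ either had no edge $(u, v'')$ (still holds in $J$), or had a persistent triangle, or had persistent non-iso covering by the same argument as in the previous paragraph. Thus $\tau$ strictly decreases, so the lex weight does; symmetrically for right copy facts, which decrease $\omega$ while leaving $\tau$ unchanged. If no copy fact is removed either, then by cleanliness no incident fact is removed at all and the whole local structure around $e$ coincides in $I$ and $J$, yielding the fourth bullet. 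The main obstacle is the bookkeeping around possible classification changes, where invoking cleanliness of $e$ in $I$ at each step is essential to pin down whether each incident fact under consideration must be extra or copy in $I$.
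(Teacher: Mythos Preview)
Your proposal is correct and follows essentially the same route as the paper: a case analysis on the type of removed fact, using cleanliness to rule out garbage and arguing that classification of surviving incident facts can only move ``downward'' (extra in $J$ implies extra in $I$; copy in $I$ stays copy or becomes garbage in $J$). One point worth noting: you are actually more careful than the paper in one place. The paper only argues $\mathrm{ext}(e,J)\subseteq\mathrm{ext}(e,I)$ and then asserts that this establishes ``extra weight same when no extra fact is removed''; strictly speaking that requires the reverse inclusion $\mathrm{ext}(e,I)\cap J\subseteq\mathrm{ext}(e,J)$, which you supply explicitly via your third paragraph (showing that triangles and non-isomorphism persist because the witnessing facts are themselves extra and hence survive). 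So your argument is a mild strengthening of the paper's presentation rather than a different approach.
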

    \begin{proof}
      It is clear that removing facts cannot increase the weight, and that the
      weight decreases strictly iff some covering fact is removed, so the first
      bullet point and the first part of the second bullet point are immediate. 
      (Note that removing an extra fact may create a new copy element, and
      increase the lexicographic weight while the extra weight is decreased, but
      this does not contradict the statement.)
      From now on, we assume that no covering fact was removed.

      For the extra weight, we claim that the extra facts of~$e$ in~$J$ are also
      extra facts of~$e$ in~$I$. This implies that removing facts did not cause
      the extra weight to increase and removing an extra fact of~$e$ in~$I$
      must have strictly decreased the extra weight. We consider the possible
      extra facts of~$e$ in~$J$:
      \begin{itemize}
        \item The unary facts on~$u$ or~$v$ in~$J$ are also unary facts on~$u$
          or~$v$ in~$J$.
        \item For the facts involving one of~$u$ or~$v$ in~$J$ along with some
          element~$x$ achieving a triangle with~$e$ in~$J$, the element~$x$ must
          also achieve a triangle with~$e$ in~$J$, so they are also extra facts
          of~$e$ in~$I$. In other words, removing facts can never cause new
          triangles to appear.
        \item For the facts involving one of~$u$ or~$v$ in~$J$ along with an
          element~$x$ such that the set $C'$ of covering facts of the edge are not
          isomorphic to the set~$C$ of covering facts of~$e$ in~$I$, we know
          that in~$I$ the covering facts of the edge are some (not necessarily
          strict) superset of~$C'$. Here it is important that we excluded
          garbage facts in the definition of extra facts: as $C'$ is in fact not
          isomorphic to a subset of~$C$, a superset of~$C'$ cannot be isomorphic
          to~$C$.
      \end{itemize}
      Thus we have established the second part of the second bullet point and
      the first part of the third bullet point. From now on, we assume that no
      extra fact was removed.
      
      For the lexicographic weight, it is now clear that, if no covering fact or
      extra fact of~$e$ is removed, then the lexicographic weight decreases iff
      we remove a copy fact of~$e$ (this may create garbage facts, which are not
      accounted in the extra weight or lexicographic weight). This establishes
      the second part of the third bullet point, and the first part of the
      fourth bullet point.

      The second part of the fourth bullet point is because we assumed that $e$
      was clean, so all its incident facts are extra facts or copy facts.
    \end{proof}

  We are now ready to prove Proposition~\ref{prp:critical}:
  \end{toappendix}

  \begin{proof}[Proof of Proposition~\ref{prp:critical}]
    We know by hypothesis that $Q$ has a
    model with a tight edge $e = (u, v)$, and by definition of~$\Theta$
    and~$\Xi$ and~$\Lambda$ there is such
  a model $I$ with a tight edge~$e$ having weight~$\Theta$ and extra
  weight~$\Xi$ and lexicographic weight~$\Lambda$.  We use in this proof the
    conventions introduced in Appendix~\ref{apx:basic}.

    Up to modifying~$I$ with the process of Claim~\ref{clm:clean}, we can ensure
    that~$e$ is clean in~$I$. The only additional point to verify is that the
    process in the proof does not change the extra weight or lexicographic weight
    of~$e$. Indeed, as~$e$ is still a tight non-leaf edge after the process, we
    know that the extra weight did not decrease (otherwise
    Claim~\ref{clm:dissocextra} would apply, contradicting tightness),
    and dissociating edges involving
    incident garbage facts and merging them in~$e$ cannot have caused the extra weight
    to increase. Thus, the extra weight is still~$\Xi$. Further, the
    lexicographic weight did not decrease (otherwise Claim~\ref{clm:dissoclex}
    would apply, contradicting tightness),
    and again the process cannot have caused the lexicographic
    weight to increase. Hence, we can additionally ensure that $e$ is clean.

    To achieve subinstance-minimality,
    let $J$ be a minimal subset of~$I$ which still satisfies the query. By
    definition, $J$ is subinstance-minimal. Consider the dissociation $J'$ of~$e$
    in~$J$, in which $(u,v)$ is no longer an edge, and which can add dangling
    copies of~$e$ as $(u,v')$ or $(u',v)$ with $u'$ and $v'$ leaf elements.
    (Specifically, unless $(u,v)$ in~$J$ is an edge which is non-leaf, then the
    dissociation does nothing except renaming elements, following our
    conventions). We can map~$J'$ to the dissociation~$I'$ of~$e$ in~$I$, with
    the identity homomorphism extended to map the dangling copies on the
    dangling copies added in~$I'$ relative to~$I$ by the dissociation. Thus,
    $J'$ does not satisfy the query, whereas $J$ does. Hence, $(u,v)$ is still a
    leaf in~$I$, and it is a non-leaf edge which is tight.

    We now claim that the covering facts of~$e$ in~$J$ are the same as in~$I$.
    Indeed, assuming the contrary, the covering facts of~$e$ in~$J$ would be a
    strict subset of the covering facts of~$e$ in~$I$, so the weight of~$e$
    in~$J$ would be strictly less than~$\Theta$. Thus, as $e$ is non-leaf, by
    Claim~\ref{clm:dissocweight}, we could dissociate~$e$ in~$J$ and obtain a
    model~$J'$ of~$Q$, contradicting the fact that it is tight.

    Last, we claim that the incident facts of~$e$ in~$J$ are the same as in~$I$.
    Indeed, assuming otherwise, the incident facts of~$e$ in~$J$ would be a strict subset
    of those of~$e$ in~$I$. By Lemma~\ref{lem:fctremove}, as~$e$ is clean, removing incident
    facts of~$e$ without removing covering facts of~$e$ must reduce the extra
    weight, or keep the extra weight unchanged and reduce the lexicographic
    weight. Thus, as $e$ is a non-leaf edge in~$J$, by Claim~\ref{clm:dissocextra} or by
    Claim~\ref{clm:dissoclex}, we could dissociate the edge without breaking the
    query, contradicting the fact that~$e$ is tight in~$J$.

    Thus, in particular the extra weight (resp., lexicographic weight) of~$e$
    in~$J$  is the same as the extra weight (resp., lexicographic weight) of~$e$
    in~$I$. Thus we have established that $I$ is a subinstance-minimal model
    of~$Q$ with a tight clean edge~$e$ having the right weight, extra weight,
    and lexicographic weight, which concludes the proof.
  \end{proof}

\section{Hardness with a Non-Iterable Critical Model}
\label{sec:noniter}
\begin{toappendix}
  \label{apx:noniter}
\end{toappendix}

\begin{figure}
  \begin{subfigure}[t]{.3\linewidth}
    \begin{tikzpicture}[inner sep=1,xscale=.9]
      \node (u) at (.5, 0) {$u$};
      \node (v) at (1.5, 0) {$v$};
      \draw[thick,orange] (u) -- (v);
      \node (x1) at (-.25, 1) {$x_1$};
      \node (x2) at (1, .5) {$x_2$};
      \node (x3) at (2.25, 1) {$x_3$};
      \draw[thick,blue] (u) -- (x1);
      \draw[thick,blue] (u) -- (x2);
      \draw[thick,olive] (v) -- (x2);
      \draw[thick,olive] (v) -- (x3);

      \draw[thick,black] (x1) -- (x2);
      \draw[thick,black] (x2) -- (x3);
      \draw[thick,black] (x1) -- (x3);

      \node (t1) at (-1, 0) {$t_{1}$};
      \node (t2) at (-1, -1) {$t_{2}$};
      \draw[purple] (u) -- (t1.east);
      \draw[purple] (u) -- (t2.east);
      \path[purple] (t1.south)  edge     (t2.north);

      \node (w1) at (3, 0) {$w_{1}$};
      \node (w2) at (3, -1) {$w_{2}$};
      \draw[purple] (v) -- (w1.west);
      \draw[purple] (v) -- (w2.west);
      \path[purple] (w1.south)  edge     (w2.north);
      \path[purple] (t2.east) edge [bend right = 15] (w2.west);
      \draw[purple] (x1) -- (t1.east);

      \draw[purple] (x3) -- (w1.west);
    \end{tikzpicture}

    \caption{Example critical model~$M$}
    \label{fig:critnonit}
  \end{subfigure}
  \hfill
  \begin{subfigure}[t]{.3\linewidth}

    \begin{tikzpicture}[inner sep=1,xscale=.9]
      \node (u1) at (.5, 0) {$u\phantom{'}$};
      \node (v1) at (1.5, 0) {$v'$};
      \node (u2) at (.5, -1) {$u'$};
      \node (v2) at (1.5, -1) {$v\phantom{'}$};
      \draw[thick,orange] (u1) -- (v1);
      \draw[thick,orange] (u2) -- (v1);
      \draw[thick,orange] (u2) -- (v2);
      \node (x1) at (-.25, 1) {$x_1$};
      \node (x2) at (1, .5) {$x_2$};
      \node (x3) at (2.25, 1) {$x_3$};
      \draw[thick,blue] (u1) -- (x1);
      \draw[thick,blue] (u1) -- (x2);
      \draw[thick,olive] (v1) -- (x2);
      \draw[thick,olive,dashed] (v1) -- (x3);
      \draw[thick,blue,dashed] (u2) -- (x1);
      \draw[thick,blue] (u2) -- (x2);
      \draw[thick,olive] (v2) -- (x2);
      \draw[thick,olive] (v2) -- (x3);

      \draw[thick,black] (x1) -- (x2);
      \draw[thick,black] (x2) -- (x3);
      \draw[thick,black] (x1) -- (x3);

      \node (t1) at (-1, 0) {$t_{1}$};
      \node (t2) at (-1, -1) {$t_{2}$};
      \draw[purple] (u1.west) -- (t1.east);
      \draw[purple] (u1.west) -- (t2.east);
      \draw[purple] (u2.west) -- (t1.east);
      \draw[purple] (u2.west) -- (t2.east);
      \path[purple] (t1.south)  edge     (t2.north);

      \node (w1) at (3, 0) {$w_{1}$};
      \node (w2) at (3, -1) {$w_{2}$};
      \draw[purple] (v1.east) -- (w1.west);
      \draw[purple] (v1.east) -- (w2.west);
      \draw[purple] (v2.east) -- (w1.west);
      \draw[purple] (v2.east) -- (w2.west);
      \path[purple] (w1.south)  edge     (w2.north);
      \path[purple] (t2.east) edge [bend right = 15] (w2.west);
      \draw[purple] (x1) -- (t1.east);

      \draw[purple] (x3) -- (w1.west);
    \end{tikzpicture}

    \caption{Iteration of~$M$}
    \label{fig:iternonit}
  \end{subfigure}
  \hfill
  \begin{subfigure}[t]{.36\linewidth}
    \begin{tikzpicture}[inner sep=1,yscale=.85]
      \node (u1) at (.5, 0) {$u_1$};
      \node (v1) at (1.5, 0) {$v_1$};
      \node (u2) at (.5, -1) {$u_2$};
      \node (v2) at (1.5, -1) {$v_2$};
      \draw[thick,orange] (u1) -- (v1);
      \draw[thick,orange] (u1) -- (v2);
      \draw[thick,orange] (u2) -- (v2);
      \node (x1) at (-.25, 1) {$x_1$};
      \node (x2) at (1, .5) {$x_2$};
      \node (x3) at (2.25, 1) {$x_3$};
      \draw[thick,blue] (u1) -- (x1);
      \draw[thick,blue] (u1) -- (x2);
      \draw[thick,olive] (v1) -- (x2);
      \draw[thick,olive] (v1) -- (x3);
      \draw[thick,blue] (u2) -- (x1);
      \draw[thick,blue] (u2) -- (x2);
      \draw[thick,olive] (v2) -- (x2);
      \draw[thick,olive] (v2) -- (x3);

      \draw[thick,black] (x1) -- (x2);
      \draw[thick,black] (x2) -- (x3);
      \draw[thick,black] (x1) -- (x3);

      \node (t11) at (-1, 0) {\footnotesize $t_{1,1}$};
      \node (t12) at (-1, -.2) {\footnotesize $t_{1,2}$};
      \node (t13) at (-1, -.4) {\footnotesize $t_{1,3}$};
      \node (t21) at (-1, -1) {\footnotesize $t_{2,1}$};
      \node (t22) at (-1, -1.2) {\footnotesize $t_{2,2}$};
      \node (t23) at (-1, -1.4) {\footnotesize $t_{2,3}$};
      \draw[purple] (u1.west) -- (t11.east);
      \draw[purple] (u1.west) -- (t12.east);
      \draw[purple] (u1.west) -- (t13.east);
      \draw[purple] (u1.west) -- (t21.east);
      \draw[purple] (u1.west) -- (t22.east);
      \draw[purple] (u1.west) -- (t23.east);
      \draw[purple] (u2.west) -- (t11.east);
      \draw[purple] (u2.west) -- (t12.east);
      \draw[purple] (u2.west) -- (t13.east);
      \draw[purple] (u2.west) -- (t21.east);
      \draw[purple] (u2.west) -- (t22.east);
      \draw[purple] (u2.west) -- (t23.east);
      \path[purple] (t11.west)  edge   [bend right=20]  (t21.west);
      \path[purple] (t12.west)  edge   [bend right=20]  (t22.west);
      \path[purple] (t13.west)  edge   [bend right=20]  (t23.west);

      \node (w11) at (3, 0) {\footnotesize $w_{1,1}$};
      \node (w12) at (3, -.2) {\footnotesize $w_{1,2}$};
      \node (w13) at (3, -.4) {\footnotesize $w_{1,3}$};
      \node (w21) at (3, -1) {\footnotesize $w_{2,1}$};
      \node (w22) at (3, -1.2) {\footnotesize $w_{2,2}$};
      \node (w23) at (3, -1.4) {\footnotesize $w_{2,3}$};
      \draw[purple] (v1.east) -- (w11.west);
      \draw[purple] (v1.east) -- (w12.west);
      \draw[purple] (v1.east) -- (w13.west);
      \draw[purple] (v1.east) -- (w21.west);
      \draw[purple] (v1.east) -- (w22.west);
      \draw[purple] (v1.east) -- (w23.west);
      \draw[purple] (v2.east) -- (w11.west);
      \draw[purple] (v2.east) -- (w12.west);
      \draw[purple] (v2.east) -- (w13.west);
      \draw[purple] (v2.east) -- (w21.west);
      \draw[purple] (v2.east) -- (w22.west);
      \draw[purple] (v2.east) -- (w23.west);
      \path[purple] (w11.east)  edge   [bend left=20]  (w21.east);
      \path[purple] (w12.east)  edge   [bend left=20]  (w22.east);
      \path[purple] (w13.east)  edge   [bend left=20]  (w23.east);
      \path[purple] (t21.east) edge [bend right = 15] (w21.west);
      \path[purple] (t22.east) edge [bend right = 15] (w22.west);
      \path[purple] (t23.east) edge [bend right = 15] (w23.west);
      \draw[purple] (x1) -- (t11.east);
      \draw[purple] (x1) -- (t12.east);
      \draw[purple] (x1) -- (t13.east);

      \draw[purple] (x3) -- (w11.west);
      \draw[purple] (x3) -- (w12.west);
      \draw[purple] (x3) -- (w13.west);
    \end{tikzpicture}

    \caption{$3$-saturated coding $I_{G,3}$ in~$M$ of $G = (\{1, 2\},
    \{(1,1), (1,2),(2,2)\})$}
    \label{fig:nonitercode}
  \end{subfigure}
  \caption{Examples of Section~\ref{sec:noniter} and illustration of the
  notation}
\end{figure}

Having defined critical models, we now start our hardness proof. As
in~\cite{amarilli2021dichotomy}, we will distinguish two cases, based on whether
we can break~$Q$ with an \emph{iteration} process on a critical
model.

\begin{definition}
  Let $M = (I, e, F_\LL, F_\RR)$ be a critical model, let $e = (u, v)$, and
  let $C$ be the covering facts of~$e$.
  Let $A$ and $B$ be the set of the left-incident and right-incident facts
  of~$e$ in~$I$, respectively.
  The \emph{iteration} of~$M$ is obtained by modifying~$I$ in the following way:
  \begin{itemize}
    \item Add fresh elements $u'$ and~$v'$, copy~$e$ 
      on $(u, v')$, $(u', v')$, $(u', v)$, and remove the facts
      of~$C$.
    \item Create a copy of the facts of~$A \setminus \{F_\LL\}$ where we
      replace~$u$ by~$u'$.
    \item Create a copy of the facts of~$B \setminus \{F_\RR\}$ where we replace~$v$ by~$v'$.
  \end{itemize}
\end{definition}

\begin{example}
  Consider the critical model in Figure~\ref{fig:critnonit}, with edge
  $(u,v)$ and where $F_\LL$
and $F_\RR$ are binary facts respectively using $u$ and $x_1$ and $v$ and~$x_3$.
  Its iteration is shown in
Figure~\ref{fig:iternonit}, with dashed edges representing edges
where~$F_\LL$ and~$F_\RR$ are missing.
\end{example}

A \emph{non-iterable} critical model~$M$ is one whose iteration no longer
satisfies the query; otherwise $M$ is \emph{iterable}.
In this section, we show hardness when there is a non-iterable critical model:

\begin{proposition}
  \label{prp:pp2dnf2}
  Assume that $Q$ has a non-iterable critical model.
  Then the uniform reliability problem for~$Q$ is \#P-hard.
\end{proposition}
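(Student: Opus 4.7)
The plan is to reduce from $\lambda,\mu,\nu$-\#PP2DNF for a triple of fixed rationals depending on the non-iterable critical model $M = (I, e, F_\LL, F_\RR)$ with $e = (u,v)$. Given a bipartite graph $G = (U \cup V, E)$ and a parameter $k$ polynomial in $|G|$ to be fixed later, I would build the \emph{$k$-saturated coding} $I_{G,k}$ generalizing Figure~\ref{fig:nonitercode}: include one copy of the ``background'' facts of $I$ (those using neither $u$ nor $v$); for each $u_i \in U$ (resp.\ $v_j \in V$), introduce a fresh element and, for every non-$F_\LL$ left-incident (resp.\ non-$F_\RR$ right-incident) fact of $e$ in $M$, attach $k$ fresh copies of that fact using fresh non-$u_i$ (resp.\ non-$v_j$) endpoints; for each $(u_i, v_j) \in E$, add the covering facts of $e$ on the pair $(u_i, v_j)$ together with single copies of $F_\LL$ and $F_\RR$ plugged into one of the saturated left- and right-extras of $u_i$ and $v_j$.

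The forward direction is then immediate from homomorphism-closure: whenever $(u_i,v_j) \in E$ is such that the covering facts of its $e$-copy and the associated $F_\LL, F_\RR$ facts all survive in a drawn subinstance $J \subseteq I_{G,k}$, and every non-$F_\LL$ (resp.\ non-$F_\RR$) left-incident (right-incident) fact type has at least one surviving saturated copy on the $u_i$-side (resp.\ $v_j$-side), then $M$ maps homomorphically into $J$, and hence $J \models Q$.

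The main obstacle is the converse: ruling out pathological matches that span several $G$-edges. Given $J \models Q$, I would pass to a subinstance-minimal $J' \subseteq J$ with $J' \models Q$ and examine where the tight edge $e'$ of a match sits inside $I_{G,k}$. Non-iterability will rule out any match in which the image of $e'$ interacts with two distinct $e$-copies of $I_{G,k}$ coming from two $G$-edges sharing a $U$- or $V$-endpoint: otherwise, identifying the ``far'' $U$- and $V$-vertices via a homomorphism would exhibit the iteration of $M$ inside the model $J'$, contradicting non-iterability. The minimality Claims~\ref{clm:dissocweight} and~\ref{clm:dissocextra}, together with the cleanliness of $e$ in $M$, then force every covering fact of the used $e$-copy, every extra-fact type (at least one of its $k$ saturated copies), and both $F_\LL$ and $F_\RR$ to survive in $J'$, since otherwise a dissociation inside $J'$ would apply and contradict tightness of $e'$.

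Finally, applying Lemma~\ref{lem:proba}, I would choose $k$ polynomial in $|G|$ such that the probability of the ``bad'' event of some saturated extra-fact type being entirely absent for some $u_i$ or $v_j$ falls below $2^{-|I_{G,k}|-2}$, i.e., smaller than the weight of a single subinstance under the uniform measure. Conditioned on its complement, a uniform draw on $I_{G,k}$ projects to an independent keep/drop experiment on $G$ with fixed rational vertex probabilities $\lambda,\nu \in (0,1)$ (determined by the surviving non-saturated incident facts of $M$) and edge probability $\mu \in (0,1]$, all depending only on $M$ and not on $G$. The $\UR$ count on $I_{G,k}$ then equals, up to an error absorbed by rounding, a known scalar multiple of the $\lambda,\mu,\nu$-\#PP2DNF answer on $G$, yielding the desired polynomial-time Turing reduction and \#P-hardness.
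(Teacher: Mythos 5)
The core machinery you outline — $q$-saturation, negligibility via Lemma~\ref{lem:proba}, a good/bad dichotomy on subinstances, and a rounding argument to recover the \#PP2DNF answer — is the right framework and matches the paper's strategy. But the saturation scheme you propose is wrong in a way that breaks the reduction at two separate places.

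First, you saturate \emph{all} non-$F_\LL$ left-incident facts (and dually on the right), but the paper is careful to saturate only the facts on copy elements $T \sqcup W$, i.e., edges isomorphic to~$e$ that do not form triangles. The distinction is forced: unary facts on~$u$ have no ``non-$u_i$ endpoint'' to freshen; a triangle element~$x$ appears in \emph{both} a left- and a right-incident fact, so saturating the left side with fresh copies of~$x$ would decouple the triangle and make the forward homomorphism $I \to J$ impossible to define (there is no single element to send~$x$ to). Moreover, if you saturate the extra facts, then a well-formed valid subinstance missing one extra-fact copy on~$u_i$ still has $k-1$ surviving copies, so the extra weight of $(u_i,v_j)$ never drops below~$\Xi$; Claim~\ref{clm:dissocextra} can then never be invoked, and the entire bad-case dissociation argument (the paper's Proposition~\ref{prp:bad2}) has no lever to act with.

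Second, and more fatally, you attach $F_\LL$ and $F_\RR$ \emph{per edge} rather than per vertex. Together with saturating everything else, this leaves no non-saturated per-vertex fact at all, so after conditioning on validity the effective vertex-keep probabilities are $\lambda = \nu \approx 1$. Definition~\ref{def:pp2dnf} requires $0 < \lambda, \nu < 1$, and indeed with $\lambda = \nu = 1$ the problem degenerates to ``is some edge kept?'', which is trivially polynomial. The paper instead keeps $F_\LL$, $F_\RR$, and all extra facts as a single non-saturated per-vertex copy; that is exactly what produces $\lambda = 2^{-\Psi_\LL} \leq 1/2$ and $\nu = 2^{-\Psi_\RR} \leq 1/2$ and makes the vertex probabilities nontrivial.

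Finally, your converse argument invokes ``the tight edge $e'$ of a match'' inside a subinstance-minimal $J' \subseteq J$, but a subinstance-minimal model of $Q$ need not contain any tight edge at all, and even if it does, nothing guarantees that its weight is $\Theta$ or that it aligns with an $e$-copy in $I_{G,k}$. The paper's Proposition~\ref{prp:bad2} does not reason about tight edges of $J'$; it globally classifies vertices and edges of~$G$ as complete or (extra-/$F_\LL$-)incomplete, dissociates accordingly using Claims~\ref{clm:dissocweight} and~\ref{clm:dissocextra}, and then exhibits an explicit homomorphism from the dissociated instance to the \emph{iteration} of~$M$, which violates $Q$ by the non-iterability hypothesis. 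You would need to develop that case analysis (in particular the separate handling of $F_\LL$- and $F_\RR$-incomplete vertices when $F_\LL$ or $F_\RR$ is itself a copy fact) for the reduction to go through.
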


We prove this result in the rest of this section.

\subparagraph*{Fixing notation.}
Fix the critical model $M = (I, e, F_\LL, F_\RR)$ and let $e = (u, v)$ be the tight
clean edge. 
We must introduce some notation to talk about the incident facts of~$e$ in~$I$,
which is summarized in Figure~\ref{fig:critnonit}. As $e$ is clean, we know that its
incident facts are either extra facts or copy facts --- there are no garbage
facts.

Let $C \subseteq I$ be the covering facts of~$e$ in~$I$ (in orange on the
picture), with $|C| = \Theta$.
Let~$X = \{x_1, \ldots, x_k\}$ be the elements different from~$u$ and~$v$
with which one of~$u$ or~$v$ has a (non-unary) extra
fact or has one of the two facts~$F_\LL$ and~$F_\RR$. Note that some of the
elements in~$X$
may have facts with both~$u$ and~$v$ (i.e., triangles), like~$x_2$ in the
picture. We may have $k=0$, specifically when $F_\LL$ and $F_\RR$ are unary
facts and any other extra facts are unary.

Further let $T = \{t_1, \ldots, t_\tau\}$ be the left copy elements of~$e$ not
in~$X$, and let $W = \{w_1, \ldots, w_\omega\}$ be the right copy elements of~$e$ not
in~$X$, with $T$ and~$W$ disjoint (because copy elements cannot form
triangles). We exclude elements of~$X$ because, if $F_\LL$ (resp., 
$F_\RR$) is a copy fact, then $X$ contains exactly one left copy element (resp.,
exactly one right copy element)\footnote{
Because of this, in general $(\tau,\omega)$ may be less than the critical
lexicographic weight $\Lambda$.}.
Also note that we may have $\tau = \omega = 0$, i.e., if there
are no copy facts except possibly those of the edges of~$F_\LL$ and
of~$F_\RR$.

To recapitulate, the incident facts of~$e$ in~$I$ only involve elements from~$X
\sqcup T \sqcup W$. Specifically, they are the unary facts on~$u$, the unary facts on~$v$,
the non-unary extra facts (which involve one of~$\{u,v\}$ and one element of~$X$), the facts~$F_\LL$
and~$F_\RR$ which respectively involve $u$ and~$v$ and (if they are non unary) one
element of~$X$, and the other left and right copy facts forming isomorphic
copies of~$e$ as edges $(u, t_j)$ with $1 \leq j \leq \tau$ and $(w_i, v)$ with
$1 \leq i \leq \omega$. Notice again how, if $F_\LL$ or $F_\RR$ are copy facts,
then these notations handle them as extra facts along with any other covering
facts of their edge.
Note that our description of the incident facts of~$e$ does not describe the 
facts that may exist between elements of $X \sqcup T \sqcup W$, and indeed these
may be arbitrary (some are pictured in Figure~\ref{fig:critnonit}).

\subparagraph*{Coding bipartite graphs.}
We will reduce from our variant of \#PP2DNF (Definition~\ref{def:pp2dnf}) by
using~$M$ to code 
a bipartite graph $G = (U \sqcup V, E)$.
Intuitively, we will create
one copy $u_i$ of~$u$ for each vertex $i$ of~$U$, one copy $v_j$ of~$v$ for each vertex
$j$ of~$V$,
and copy the edge~$e$ on $(u_i, v_j)$ for each edge $(i,j)$ of $E$.
The reason why we distinguish $X$ and $T$ and $W$ is because
we will handle them differently.
For the incident facts of~$e$ that are unary or involve elements of~$X$, we 
will create one single copy of them for each $u_i$ and each $v_j$. Indeed, we
 will show that edges $(u_i,v_j)$ that are missing one such incident fact can be 
dissociated (if an extra fact is
missing, using Claim~\ref{clm:dissocextra}) or mapped in a specific way in the
iteration (if one of $F_\LL$ or $F_\RR$ is a copy fact and we are missing one of
the covering facts of their edge). For the (copy) facts involving~$T \sqcup W$, we
will copy them (using the fact that they are binary)
by creating a large number~$q$ of copies of $T \sqcup W$. This
\emph{saturation} process will
in fact create a large number of copies of all facts involving some element
of~$T \sqcup W$, which we call
the \emph{saturated facts}.

Let us accordingly define the \emph{saturated coding} of a bipartite graph
in~$M$:

\begin{definition}
  \label{def:satcoding}

  Let $G = (U \sqcup V, E)$ be a non-empty bipartite graph,
  and assume without loss of
  generality that $U = \{1, \ldots, n\}$ and $V = \{1, \ldots, m\}$.

  Let $q > 0$ be some integer.
  The \emph{$q$-saturated coding} of~$G$ in~$M$, written~$I_{G,q}$, is the instance
  defined by modifying~$I$ in the following way:
  \begin{itemize}
    \item For all $1 \leq p \leq q$, create fresh elements $T_p = \{t_{1,p},
      \ldots, t_{\tau,p}\}$ and $W_p = \{w_{1,p},
      \ldots, w_{\omega,p}\}$.
      Identify~$t_j = 
      t_{j,1}$ for $1 \leq j \leq \tau$ and $w_i = w_{i,1}$ for $1 \leq i \leq
      \omega$.
    \item Letting $\Phi$ be the set of the saturated facts, 
      for each $1 \leq p \leq q$, create a copy of~$\Phi$ where each element
      $t_j$ is replaced by~$t_{j,p}$ and each element
      $w_i$ is replaced by~$w_{i,p}$.
    \item Create elements $u_1, \ldots, u_n$ and $v_1, \ldots, v_n$,
      where we identify $u=u_1$ and $v=v_1$.
    \item Create a copy of all incident facts of~$e$ for all~$u_i$ and
      $v_j$. Formally, let $A$ and $B$ be the set of the left-incident and
      right-incident facts of~$e$ \emph{in the current
      model} (i.e., involving the $t_{j,p}$ and $w_{i,p}$): note that $A$
      (resp., $B$) contains in particular $F_\LL$ (resp., $F_\RR$) and any unary facts on~$u$
      (resp., on~$v$).
      For each $1 \leq i
      \leq n$, create a copy of the facts of~$A$ replacing~$u$ by~$u_i$,
      and for each $1 \leq j \leq m$ create a copy of the facts of~$B$ replacing~$v$ by~$v_j$.
    \item Copy~$e$ (i.e., $C$) on $(u_i,v_j)$ for each
      $(i,j) \in E$, and remove the facts of~$C$ if $(u_1,v_1) \notin E$.
  \end{itemize}
\end{definition} 

The saturated coding process is illustrated in Figure~\ref{fig:nonitercode}. Note that
the process is in polynomial time if the value~$q$ is polynomial in the
size~$|G|$ of the input bipartite graph.

\subparagraph*{Understanding the coding.}
Letting $G = (U \sqcup V, E)$ be a non-empty bipartite graph and writing $U = \{1,
\ldots, n\}$ and $V = \{1, \ldots, m\}$,
we study the coding $I_{G,q}$ 
to relate subsets of~$I_{G,q}$ to subsets of~$U \times E \times V$. For this, we
partition the facts of~$I_{G,q}$ in five kinds (see
Figure~\ref{fig:nonitercode}):

\begin{itemize}
  \item The \emph{base facts} (pictured in black), which are the facts that do not involve any of
    the elements $u_1, \ldots, u_n$, $v_1, \ldots, v_m$ or any element of
    $\bigsqcup_{1\leq p \leq q} T_p \sqcup W_p$ (but they may involve elements
    of~$X$). These facts are precisely the facts of~$I$ that do not involve the elements~$u$
    or~$v$ or any element of~$T \sqcup W$, and they are unchanged in the coding.
  \item The \emph{saturated facts} (in purple), i.e., the facts involving some element
    of~$T_p \sqcup W_p$ for some~$1 \leq p \leq q$. These facts exist in~$q$ copies, %
    and some (corresponding to facts of~$I$ between $u$ or $v$ and an
    element of $T \sqcup W$) have been further copied $n$ times (if they
    involve~$u$)
    or $m$ times (if they involve~$v$).
  \item The \emph{non-saturated left-incident facts} (in blue) of each vertex $i
    \in U$, which are the facts which involve $u_i$ and do not involve the
    $T_p \sqcup W_p$, i.e., are unary or involve an element of~$X$. These facts 
    include in particular one copy of~$F_\LL$.
  \item The \emph{non-saturated right-incident facts} (in green)
    of each vertex $j \in V$,
    that involve~$v_j$ and not the $T_p \sqcup W_p$, i.e., are unary
    or involve an element of~$X$; they include one copy of~$F_\RR$.
  \item The \emph{copy of~$e$} (in orange) for each edge $(i,j) \in E$,
    which is on the edge~$(u_i,v_j)$ of~$I_{G,q}$.
\end{itemize}

The last three kinds are what we are interested in for the reduction,
but the first two kinds need to be dealt with. We will show that the base facts
must all be present to satisfy the query, and that each edge has some copy of
the saturated facts with high probability.

\subparagraph*{Base facts.}
We say that a subinstance of~$I_{G,q}$ is \emph{well-formed} if
all base facts are present, and \emph{ill-formed} if at least one is missing. 
The following is easy to see by subinstance-minimality of~$I$:

\begin{propositionrep}
  \label{prp:badill2}
  The ill-formed subinstances do not satisfy the query.
\end{propositionrep}

\begin{proof}
  Let $F$ be some missing base fact in the ill-formed subinstance~$I'$. Consider the function from~$I_{G,q}$ to~$I$
  that identifies all copies of~$u$, and all copies of each $t_j$ and $w_i$.
  This is a homomorphism, and the preimage of~$F$ is the single fact~$F$. Hence,
  $I'$, as a subinstance of~$I_{G,q} \setminus \{F\}$,
  has a homomorphism to~$I \setminus \{F\}$. As $I$ is subinstance-minimal, we
  conclude that $I \setminus \{F\}$ does not satisfy the query, hence~$I'$ also
  does not.
\end{proof}

Hence, the number of subinstances of~$G_{I,q}$ satisfying the query is the
number of well-formed subinstances that do. Thus, in the sequel, we 
only consider well-formed subinstances.

\subparagraph*{Saturated facts.}
For the saturated facts, we will intuitively define \emph{valid} subinstances where, 
for each 
ordered pair of vertices $(i,j) \in U \times V$, considering the copies $u_i$ and $v_j$
of~$u$ and~$v$, there is a complete copy of the saturated facts that are
``relevant'' to them.
More precisely, looking back at the original instance~$I$,
and considering the facts of~$I$ involving an
element of~$T \sqcup W$, there are of two types. The first type are the
facts that do not involve~$u$ or~$v$, i.e., they only
involve elements of~$T \sqcup W$ and possibly of $\dom(I) \setminus \{u, v\}$.
Each such fact has been copied
$q$ times in~$I_{G,q}$, and the copy numbered $1 \leq p \leq q$ uses one or two
elements of~$T_p \sqcup W_p$. The second type are the facts involving $u$ or~$v$
in~$I$ (they cannot involve both).
These facts have been copied $n\times q$ or $m\times q$ times in~$I_{G,q}$, each
copy using one element of~$T_p \sqcup W_p$ for some $1 \leq p \leq q$ and one 
$u_i$ for some $1 \leq i \leq n$ or one $v_j$ for some $1 \leq j \leq m$.
What we require of a valid subinstance $J \subseteq I_{G,q}$ is that, for each
pair of vertices $(i,j) \in U \times
V$, we have in $J$ some copy $1 \leq p \leq q$ containing all facts of the first
type and all facts of the second type involving $u_i$ and~$v_j$:

\begin{definition}
  We partition the saturated facts of~$I_{G,q}$ in~$q$ copies: formally, 
  the \emph{$p$-th saturated copy} for~$1 \leq p \leq q$ is the subset of the
  saturated facts of~$I_{G,q}$ that involve some element of~$T_p \sqcup W_p$.
  A \emph{saturation index} for~$I_{G,q}$ is a function $\iota\colon U \times V \to
  \{1, \ldots, q\}$.
 
  For $J \subseteq I_{G,q}$, we say that $J$ is
  \emph{valid} for~$\iota$ if, for each $(i,j) \in U\times V$, letting $p
  \colonequals \iota(i,j)$, considering the facts of the $p$-th
  saturated copy, $J$ contains all such facts that are:
  \begin{itemize}
    \item of the first type, i.e., $J$ contains all facts of~$I_{G,q}$ that
      involve some element
      of $T_p \sqcup W_p$ and do not
  involve any elements of $\{u_{i'} \mid 1 \leq i' \leq n\} \sqcup
  \{v_{j'} \mid 1 \leq j' \leq m\}$;
\item of the second type and involve $u_i$ or~$v_j$, i.e., $J$ contains all facts
  of~$I_{G,q}$ that involve some element of~$T_p \sqcup W_p$ and
  involve either~$u_i$ or~$v_j$.
  \end{itemize}

  We call $J$ \emph{valid} if there is
  a saturation index for which it is valid; otherwise $J$ is \emph{invalid}.
\end{definition}

Note that, for 
each choice of ordered pair $(i,j) \in U \times V$, the required facts can be found in a different
saturated copy $\iota(i,j)$, i.e., we do not require that there is a~$p$ such that $J$ contains all facts of the
$p$-th saturated copy. Indeed this stronger requirement
would be too hard to ensure: intuitively, the number of
facts required for each $(i,j)$ is constant (it only depends on~$I$),
but the number of facts in the $p$-th saturated copy depends on~$G$ (it is
linear in $|U|\times |V|$).

We now show that we can pick a number~$q$ of copies which is polynomial in the
input~$G$, but makes it very unlikely that a
random subinstance is invalid. Thanks to this, we do not need to know which ones
of the invalid subinstances satisfy~$Q$. Indeed, the proportion of subinstances
of~$I_{G,q}$ that satisfy~$Q$ will be the proportion of valid subinstances
that do, up to an error which is much less than the probability of any valid
subinstance and can be eliminated by rounding:

\begin{lemmarep}
  \label{lem:negligible}
  There is a polynomial $P_M$ depending on the critical model~$M$ such that, for 
  any non-empty bipartite graph~$G = (U \sqcup V, E)$,
  letting $\chi \colonequals |U| + |V| + |E|$ be the size of~$G$ and defining $q
  \colonequals P_M(\chi)$, the proportion of subinstances of~$I_{G,q}$ that are
  invalid is strictly less than $2^{-(\chi |I|+1)}$.
\end{lemmarep}

  \begin{proof}
    We define $n \colonequals |U|$ and $m \colonequals |V|$. The first step in our proof is to change
    slightly the notion of valid subinstances to the more stringent
    requirement of a \emph{routine} subinstance, where we partition the
    copies of saturated facts among the ordered pairs of vertices of~$U \times V$. The
    intuition is that each ordered pair of~$U \times V$ will look for witnessing facts
    in a different subset of the copies, which will simplify the analysis. 

    Formally, let us first define $q = n \times m \times q'$, with $q'$ to be defined
    later.
  From this definition of~$q$, we will label the saturated copies by talking
    about the $(i, j, p)$-th saturated copy with $1 \leq i \leq
  n$, $1 \leq j \leq m$, and $1 \leq p \leq q'$. 
    Further, the copies of the elements $t_1, \ldots, t_\tau$ of~$T$ 
    will be indexed as $t_{j', i, j,
    k}$ with $1 \leq j' \leq \tau$, $1 \leq i \leq n$, $1 \leq j \leq m$,
    and $1 \leq p \leq q'$, and likewise the copies of the 
    $w_1, \ldots, w_\omega$ of~$W$ 
    will be indexed as $w_{i', i, j, k}$ with $1 \leq i' \leq \omega$, $1 \leq i \leq n$, $1 \leq j \leq m$,
    and $1 \leq p \leq q'$.
    The intuition is that,
    when we consider one ordered pair $(u_i, v_j)$ with $1 \leq i \leq n$ and $1 \leq j
    \leq m$, we will only consider the $(i,j,p)$-th saturated copies for $1 \leq p
    \leq q'$. In other words, we will only consider saturation indexes
    $\iota\colon U \times V \to \{1, \ldots, q\}$ which, seeing them by abuse
    of notation as functions of type $\iota\colon U \times V \to U \times V \times \{1, \ldots,
    q'\}$, ensure that for all $(i,j) \in U \times V$ we have $\iota(i,j) =
    (i,j,p)$ for some $1 \leq p \leq q'$.
    We say that $J \subseteq I_{G,q}$ is \emph{routine} if there is a
    saturation index of this restricted form for which it is valid.

    Given that routine subinstances are in particular valid, to show our lower
    bound on the probability of valid subinstances, it suffices to show it on
    the routine subinstances. Thus, we only study routine subinstances in
    the rest of this proof. (The only
    reason why we define valid subinstances rather than routine subinstances
    in the main text is that the definition of valid is somewhat
    easier to present.)

    We will adopt the probabilistic perspective in which facts of~$I_{G,q}$ have
    probability~$q$, and will talk of the \emph{probability} that a subinstance
    is routine instead of the proportion of routine subinstances. The
    probability that~$J \subseteq I_{G,q}$ is routine is the probability that it
    admits a saturation index of the restricted form above for which it is
    valid, i.e., for each $(i,j) \in U\times V$ there is a choice of $1 \leq p
    \leq q'$ such
    the $(i,j,p)$-th saturated copy contains all facts that involve $u_i$ or
    involve~$v_j$ or involve no
    element of the form $u_i$ or~$v_j$.
    The definition of routine subinstances (unlike valid
    subinstances) ensures that these events across the $(i,j) \in U \times V$
    are independent, because they talk about of facts in disjoint
    subsets of the saturated copies, i.e., disjoint subsets of facts. Hence, the
    probability that a subinstance is routine is an independent conjunction
    requiring, for all $(i,j) \in U \times V$,
    that there is a choice of $p$ for this $i$ and~$j$.

    Further, for a choice of $1 \leq i \leq n$ and $1 \leq j \leq m$, the
    existence of a suitable~$1 \leq p \leq q'$ is a disjunction across $q'$ disjoint
    saturated copies, i.e., disjoint subsets of facts again. So, the probability
    that a suitable $1 \leq p \leq q'$ exists for some $(i,j) \in U \times V$
    is a disjunction of independent probabilistic events, each of which for $1
    \leq p \leq q'$ states that in the $(i,j,p)$-th saturated copy all necessary facts
    are present.

    Last, for a choice of $1 \leq i \leq n$ and $1 \leq j \leq n$ and $1 \leq p
    \leq q'$, the probabilistic event in question 
    is the conjunction of the presence of all the
    required facts, i.e., again a conjunction of independent probabilistic events.
    Each fact has probability $1/2$, and the number of such facts is at
    most~$|I|$, because the facts of a saturated copy that either involve $u_i$
    or involve~$v_j$ or involve no copy of~$u$ and~$v$ are in bijection
    with the facts of~$I$ involving some element of $T \sqcup W$, i.e., a subset
    of~$I$, thus having cardinality at most~$|I|$.

    Thus, the probability of getting a non-routine subinstance is the
    conjunction of $n \times m$ independent events, each of which is the
    disjunction of $q'$ independent events, each of which is the conjunction of
    at most $|I|$ events having probability~$1/2$. Thus, the probability of
    non-routine subinstances is at most:
  \[
    1 - (1 - (1 - 2^{-|I|})^{q'})^{|U| \times |V|}
  \]
  We can now use Lemma~\ref{lem:proba}, which is specifically intended for this
  purpose (see Appendix~\ref{apx:prelim}), with $\zeta \colonequals 2^{-|I|}$,
  and $q \colonequals q'$ and $\chi \colonequals |U| \times |V|$.
  The lemma shows that, for the above quantity to be less than
  $2^{-(\chi |I|+1)}$, or more stringently less than $\epsilon \colonequals \left(2^{-(|I|+1)}\right)^\chi$, we can
  take:
  \[ q' \colonequals 1 + \left\lfloor\frac{\ln(\chi) - \chi \ln\left(2^{-(|I|+1)}\right)}{-\ln(1 -
  2^{-|I|})}\right\rfloor.
  \]
  Remembering that~$|I|$ is a constant, this value is bounded by a polynomial in~$\chi$, thus the same is
  true of~$q$. So indeed we can define a polynomial~$P_M$ giving us a
  suitable~$q$ as a function of $\chi = |U|\times|V|$, and for this $q$ the
  probability of non-routine subinstances, hence of invalid subinstances, is
  less than~$\epsilon$, which concludes.
\end{proof}

Thanks to this, we focus on the well-formed
subinstances~$J$ where we keep some subset of the saturated facts making~$J$
valid.
We now fix $q$ to the value
of Lemma~\ref{lem:negligible}, and build $I_{G,q}$ in polynomial time in the
input bipartite graph~$G$ (with the critical model~$M$ being fixed).

\subparagraph*{Good and bad subinstances.}
Let us now study the status of the last
three kinds of facts:

\begin{definition}
  \label{def:complete}
  Let $J \subseteq I_{G,q}$.
  For $1 \leq i \leq n$ (resp., $1 \leq j \leq m$), the
  vertex~$i \in U$ (resp., $j \in V$) is
  \emph{complete} in~$J$ if all its non-saturated left-incident facts (resp.,
  non-saturated right-incident facts) are present in~$J$, and \emph{incomplete}
  otherwise.
  The edge $(i, j) \in E$ is \emph{complete} in~$J$ if all covering facts of
  $(u_i, v_j)$ in~$I_{G,q}$ are present in~$J$, and \emph{incomplete}
  otherwise.
  We call $J$ \emph{good} if there is an edge~$(i,j)\in E$ with $(i,j)$, $i$, and $j$ complete, and \emph{bad} otherwise.
\end{definition}

We now claim that, among the well-formed valid subinstances, the good ones
satisfy the query, and the bad ones do not. This is easy to see for good
subinstances:

\begin{propositionrep}
  \label{prp:good2}
  For any good valid well-formed subinstance $J\subseteq I_{G,q}$,
  there is a homomorphism from~$I$ to~$J$.
\end{propositionrep}
\begin{proofsketch}
  As $J$ is well-formed all base facts are present, and $J$ is valid for some
  saturation index~$\iota$.
  Let $(i,j) \in E$ be an edge witnessing that~$J$ is good.
  The homomorphism maps $T \sqcup W$ to $T_{\iota(i,j)} \sqcup W_{\iota(i,j)}$,
  maps~$u$ to~$u_i$ and~$v$ to~$v_j$, and is the identity otherwise.
\end{proofsketch}

\begin{proof}
  Let $(i, j) \in E$ be an edge witnessing that $J$ is good, i.e., $i$ and $j$ are
  complete and $(i, j)$ is complete, let $\iota$ be a saturation index for
  which~$J$ is valid, and let $p \colonequals \iota(i,j)$.

  Define a homomorphism $h$ to map $u$ and~$v$ to~$u_i$ and~$v_j$, to map $T
  \sqcup W$ to $T_p \sqcup W_p$, and to be the identity otherwise. Let us
  consider all facts of~$I$:
  \begin{itemize}
    \item The facts that do not involve $u$, $v$, or $T \sqcup W$ are unchanged
      by the homomorphism and are mapped to base facts, which are all present
      because $J$ is well-formed.
    \item For the facts that involve $T \sqcup W$, we know by definition of a
      saturation index that all facts of the $p$-th saturated copy that either
      involve~$u_i$ or involve~$v_j$ or involve no copy of~$u$ and~$v$ 
      are present in~$J$, and these are in one-to-one correspondence with the
      facts of~$T \sqcup W$ involving~$u$ and~$v$ in~$I$, i.e., they define
      suitable images for these facts.
    \item For the facts that involve $u$ and not~$v$ and possibly elements
      of~$X$, they are mapped by~$h$ to the non-saturated left-incident facts
      of~$u_i$, which are present because $i$ is complete. The same applies to
      the facts involving~$v$ and not~$u$ because~$j$ is complete.
    \item For the facts of~$I$ that involve~$u$ and~$v$, i.e., the covering
      facts of~$u$ and~$v$, they are mapped by~$h$ to the copy of~$e$
      on~$(u_i,v_j)$, which contains all these facts because $(i,j)$ is
      complete.
  \end{itemize}
\end{proof}

For bad subinstances, we show with much more effort that they do not satisfy the
query:

\begin{propositionrep}
  \label{prp:bad2}
  Any bad subinstance~$J \subseteq I_{G,q}$ does not satisfy the query.
\end{propositionrep}

\begin{proofsketch}
  It suffices to study the case with no saturation, i.e., $q = 1$.
  We dissociate incomplete edges with Claim~\ref{clm:dissocweight}, and dissociate
  complete edges missing at least one incident extra fact with
  Claim~\ref{clm:dissocextra}, which does not break~$Q$. Then we show how
  to map this homomorphically to the iteration $I'$ of~$M$, by mapping complete
  vertices to~$u$ and~$v$ in the dissociation, and mapping the vertices which are missing facts
  of the edges of~$F_\LL$ or~$F_\RR$
  to~$u'$ and~$v'$ respectively (after dissociating these edges if~$F_\LL$
  or~$F_\RR$ are copy facts). This contradicts the assumption that $M$ was
  non-iterable, i.e., that $I'$ violates~$Q$.
\end{proofsketch}

\begin{proof}
  The definition of the $q$-saturated coding clearly ensures that for any~$q >
  0$ the instance $I_{G,q}$
  has a homomorphism $h$ to~$I_{G,1}$, obtained by mapping the elements
  of~$T_p \sqcup W_p$ to~$T_1 \sqcup W_1$. Note now that if a subinstance
  of~$I_{G,q}$ is bad then its image by~$h$ is also a bad subinstance
  of~$I_{G,1}$, because the homomorphism only merges saturated facts. So it
  suffices to consider the case~$q=1$, i.e.,
  show that any bad subinstance $J \subseteq I_{G,1}$ does not satisfy the
  query, which we do in the sequel. For convenience we identify $T_1 \sqcup W_1$ with $T \sqcup
  W$ in the notation.

  \emph{[While not necessary for the proof, it may help the reader to assume that all
  the saturated facts are present in the subinstances that we consider in the
  proof.
  The intuitive reason why the presence or absence of the saturated facts does not matter is
  that they are copy facts of the edges $(u_i, v_j)$, so they only contribute to the
  lexicographic weight (or, if they are incomplete, consist of garbage facts
  that do not contribute to the weight at all); but the notion of the lexicographic
  weight will not intervene in this proof at all.]}

  We adopt in this proof the convention explained in Appendix~\ref{apx:basic}.
  Assume by contradiction that~$J$ satisfies the query.
  Our goal in the proof will be to modify~$J$ without breaking the query,
  intuitively by dissociating edges; and then map the result to the iteration,
  which was assumed to violate the query.

  \bigskip

  Recall the notion of vertices of~$U$ and~$V$ and edges of~$E$ being
  \emph{complete}.

  In the case where $F_\LL$ is a copy fact of~$e$ in~$I$ and not an extra fact
  of~$e$ in~$I$, we must distinguish two kinds of incomplete vertices of~$U$:
  the \emph{extra incomplete} and the \emph{$F_\LL$-incomplete}. In this case,
  letting $x_\LL$ be the other element than~$u$ used in~$F_\LL$, we call $i \in U$
  \emph{extra incomplete} if it is missing a non-saturated left-incident fact
  corresponding to a left extra fact of~$e$ in~$I$, i.e., a unary fact on~$u$ or
  a binary fact between~$u$ and some element of~$X \setminus \{x_\LL\}$. We call $i
  \in U$ \emph{$F_\LL$-incomplete} if it is only missing non-saturated
  left-incident facts that are binary facts between~$u$ and~$x_\LL$. When~$F_\LL$ is
  an extra fact of~$e$ in~$I$, we consider that all incomplete $i \in U$ are
  extra-incomplete.

  Likewise, we distinguish the incomplete $j \in V$ between the \emph{extra
  incomplete} and \emph{$F_\RR$-incomplete}: if $F_\RR$ is a copy fact of~$e$
  in~$I$, letting $x_\RR$ be the other element than~$v$ that it uses, then the
  \emph{$F_\RR$-incomplete} $j \in V$ are the incomplete $j \in V$ missing only
  facts between~$x_\RR$ and~$v$, and the \emph{extra-incomplete} $j \in V$ are the
  other ones; if $F_\RR$ is an extra fact of~$e$ in~$I$ then all incomplete $j \in
  V$ are extra-incomplete.

  \emph{[It may help the reader to understand that there are really four cases in the
  proof, depending on whether $F_\LL$ is an extra fact or a copy fact, and
  whether $F_\RR$ is an extra fact or a copy fact. The presentation of the
  definitions and of the proof is designed so that, in the interest of brevity, all four cases are handled at
  once. Intuitively, the easiest case of the proof is when both $F_\LL$ and
  $F_\RR$ are extra facts, in which case all incomplete vertices are
  extra-incomplete, and all edges $(u_i,v_j)$ where one of the
  vertices $i$ and $j$ is incomplete can be dissociated because they have extra
  weight $< \Xi$. This makes it easy to
  define the homomorphism to the iteration. By
  contrast, if $F_\LL$ and $F_\RR$ are copy facts, in particular when $\Xi = 0$
  so that all incident facts of~$e$ in~$I$ are copy facts, then the incomplete
  left vertices $i$ and right vertices $j$ must be missing some facts involving the same
  elements as~$F_\LL$ and $F_\RR$ respectively. Then the argument is that their
  edge with~$x_\LL$ and $x_\RR$ has weight $< \Theta$ and can be dissociated, and
  this dissociation allows us to map the elements $u_i$ and $v_j$ corresponding
  to incomplete left vertices $i$ and right vertices $j$ to the elements $u'$
  and $v'$ of the iteration.]}

  \bigskip

  The first step is to dissociate incomplete copies of~$e$ on ordered pairs
  $(u_i,v_j)$ for $(i,j) \in E$ because they have weight $< \Theta$.
  For any incomplete edge~$(i,j) \in E$, consider the ordered pair $(u_i,v_j)$. It is
  either a non-edge or leaf edge and can be dissociated without breaking the
  query, or it has weight $< \Theta$ and can be dissociated by
  Claim~\ref{clm:dissocweight} without breaking the query. Let $J_1$ be the
  result of performing these dissociations on~$J$: it still satisfies the query.
  In $J_1$, each ordered pair $(u_i,v_j)$ is either an edge of
  weight~$\Theta$ (if $(i,j)$ is complete) or a non-edge (otherwise), i.e.,
  relative to~$J$, the
  covering facts of the ordered pairs $(u_i,v_j)$ for incomplete edges $(i,j)$ have
  been removed. In exchange, we have added leaf edges involving some of the
  $u_i$ and~$v_j$ together with a fresh vertex, whose covering facts are (up to
  renaming) a
  (strict) subset of~$C$ (i.e., they will intuitively be garbage facts).
  We call these \emph{dangling edges}, and we say that a
  dangling edge is \emph{attached} to the element of $\{u_i \mid i \in U\}
  \sqcup \{v_j \mid j \in V\}$ that occurs in its covering facts.

  The second step is to get dissociate complete copies of~$e$ relating vertices $u_i$
  and~$v_j$ such that one of~$i$ and~$j$ is extra-incomplete, because they have  
  extra weight $< \Xi$. For any complete edge~$(i,j) \in E$ where one of~$i$
  and~$j$ is extra-incomplete, consider the edge $(u_i,v_j)$. It has weight~$\Theta$.
  Its incident facts are the following:
  \begin{itemize}
    \item Covering facts of other edges of the
  form~$(u_i,v_{j'})$ or~$(u_{i'},v_j)$, whose covering facts are an isomorphic
      copy of some (not necessarily strict) subset of~$C$,
      so they are accounted for in the extra weight or
      not at all. Note that by construction they do not achieve triangles, so they are not reflected in the extra weight of~$e$.
    \item Dangling edges created in the first step, but these consist of a
      strict subset of~$C$ (up to renaming), so they are garbage facts and are not
      reflected in the weight at all.
    \item Covering facts of edges of the form $(u_i,t_{j'})$ or $(w_{i'},v_j)$
      with elements of $T_1 \sqcup W_1$, which are a subset of the covering
      facts of the same edges in~$I$, i.e., copies of the edge~$e$ that do not
      achieve triangles. Hence, these facts are either accounted in the
      lexicographic weight (if all facts are present) or not at all (if they are
      garbage facts), and in all cases they are not reflected in the extra
      weight of~$e$.
    \item The non-saturated left-incident facts of~$u_i$ and the non-saturated
      right-incident facts of~$v_j$; as one of $i$ and $j$ is extra incomplete, one of them is missing which
      corresponds to an extra fact of~$e$ in~$I$. So these are the only facts
      that define the extra
      weight of $(u_i,v_j)$, and we can therefore see that the extra weight is strictly less
      than~$\Xi$.
  \end{itemize}
  Thus, the edge $(u_i,v_j)$ is either non-leaf and can vacuously be
  dissociated without breaking the query, or it has weight~$\Theta$ and extra weight~$< \Xi$ and can be
  dissociated without breaking the query by Claim~\ref{clm:dissocextra}. Let $J_2$ be the result of performing
  these dissociations: $J_2$ satisfies the query, and
  in~$J_2$ compared to~$J_1$ all edges where one of the
  endpoints is extra-incomplete have been dissociated.

  The third step is only necessary in the case where~$F_\LL$ is a copy fact of~$e$
  in~$I$, and letting $x_\LL$ be the other element than~$u$ used in~$F_\LL$, 
  it consists of dissociating the edges $(u_i,x_\LL)$ where $i$ is
  $F_\LL$-incomplete, because they have weight $< \Theta$.
  Formally, let us consider all $F_\LL$-incomplete $i \in U$, and consider the
  ordered pair $(u_i,x_\LL)$. Either it is
  a non-edge or leaf edge and can be vacuously dissociated without breaking the
  query, or as $i \in U$ is $F_\LL$-incomplete its weight is $<\Theta$, so by
  Claim~\ref{clm:dissocweight} again it can be dissociated without breaking the
  query. Let $J_3$ be the result of performing these dissociations: in~$J_3$
  compared to~$J_2$ for all $F_\LL$-incomplete $i \in U$ there is no edge
  $(u_i,x_\LL)$. If $F_\LL$ is an extra fact of~$e$ in~$I$, we simply let $J_3 =
  J_2$ and the latter requirement is vacuously true as there are no $F_\LL$-incomplete $i
  \in U$ at all. In both cases, $J_3$ satisfies the query.

  The fourth step is the symmetric of the third step: if $F_\RR$ is a copy fact
  of~$e$ in~$I$, letting $x_\RR$ be the other element than~$v$ that it uses, 
  for each $F_\RR$-incomplete $j \in V$, we dissociate the edge $(x_\RR,v_j)$
  without breaking the query because it has weight $<\Theta$, and let $J_4$ be
  the result; otherwise if $F_\RR$
  is an extra fact
  of~$e$ in~$I$ we let $J_4 \colonequals J_3$.
  In both cases, $J_4$ satisfies the query.

  \bigskip

  We are now ready to define a homomorphism~$h$ from~$J_4$ to the iteration~$I'$
  of the critical model~$M$, which by assumption does not satisfy the query,
  giving us the desired contradiction. Define $h$ in the following way:
  \begin{itemize}
    \item We define $h$ to be the identity on the elements not in $T \sqcup W
      \sqcup \{u_i \mid i \in I\} \sqcup \{v_j \mid j \in I\}$ that are
      in~$\dom(I)$, i.e., are not leaf elements of dangling edges created in one
      of the four steps. The facts in the induced subinstance on these elements
      is exactly the induced subinstance of~$I$ on the same elements, i.e.,
      $\dom(I) \setminus (T \sqcup W \sqcup \{u, v\})$, so this correctly maps
      the facts (intuitively these correspond to the base facts).
    \item We define $h$ to be the identity on~$T \sqcup W$, as we know that the
      facts of~$J_4$ involving $T \sqcup W$ but not the $u_i$ and $v_j$ are a subset of the facts of~$I$
      involving $T \sqcup W$ but not $u$ and $v$ (intuitively these correspond
      to the saturated facts of the first type, i.e., those that do not use the $u_i$ and $v_j$)
    \item We map the $u_i$ and $v_j$ in the following way:
      \begin{itemize}
        \item If $i$ is complete or extra-incomplete, we map $u_i$ to~$u$.
        \item If $i$ is $F_\LL$-incomplete, we map~$u_i$ to~$u'$.
        \item If $j$ is complete or extra-incomplete, we map $v_j$ to~$v$.
        \item If $j$ is $F_\RR$-incomplete, we map $v_j$ to~$v'$.
      \end{itemize}
    \item We map the dangling edges in the following way:
      \begin{itemize}
        \item For dangling edges created in the first step, their covering edges
          are a (strict) subset of~$C$ up to renaming, so if they are attached
          to an element~$v_j$ then if $h(v_j) = v$ we map the leaf element
          to~$u'$ otherwise $h(v_j) = v'$ and we map it to~$u$; and if they are attached to an element~$u_i$
          then if $h(u_i) = u$ we map the leaf element to~$v'$ otherwise $h(u_i)
          = u'$ and we map it to~$v$.
        \item For dangling edges created in the second step, their covering
          edges are an isomorphic copy of~$C$ up to renaming, and they are again
          attached to the $u_i$ and $v_j$ so we proceed in the same way as the
          previous bullet point.
        \item For dangling edges created in the third step, their covering edges
          are a (strict) subset of~$C$ again. The copy attached to an $u_i$ is
          dealt with as above, and for the copy attached to~$x_\LL$ we map the
          leaf element to~$u$.
        \item For the fourth step, we proceed in the same way, mapping the leaf
          element of the copy attached to~$x_\RR$ to~$v$.
      \end{itemize}
  \end{itemize}

  We must show that~$h$ is indeed a homomorphism. Clearly the only important point
  is to show that the copies of~$e$ and their incident facts are correctly
  mapped. It is easy that the dangling edges are correctly mapped, so we do not
  consider them in what follows.

  Consider an ordered pair $(u_i,v_j)$. For the left-incident facts which are not edges
  of the form $(u_i,v_{j'})$, if $h(u_i) = u$ then correctness is easy to see: the
  left-incident facts in~$J_4$ are with the elements of~$T\sqcup W$ and the
  elements of~$X$ which are reflected in~$I'$ on~$u$. Otherwise, we know that $i$ was $F_\LL$-incomplete meaning that
  in the third step we ensured that there was no edge $(u_i,x_\LL)$ in~$J_4$ so
  that we can indeed map all incident facts: relative to~$I$, the edge $(u',v')$
  in~$I'$ is only missing the left-incident fact corresponding to~$F_\LL$ up to
  renaming, i.e., between~$u'$ and~$x_\LL$ but in fact we have no fact to map to
  this edge. The reasoning for right-incident facts is symmetric.

  Now, consider the copies of~$e$ themselves. From the definition of~$I'$, the
  only point to verify is that we have no copy $(u_i,v_j)$ of~$e$ that remains in~$J_4$
  where $h(u_i) = u$ and $h(v_j) = v$, i.e., each of $i$ and $j$ is complete or
  extra-incomplete. Remember that the copies of~$e$ on ordered pairs~$(u_i,v_j)$
  that remain after
  the first step only correspond to edges $(i,j)$ which are complete, and that
  those that remain after the second step only correspond to vertices $i$ and $j$ for
  which none are extra-incomplete. Hence, if $(u_i,v_j)$ is a copy of~$e$ in~$J$
  then $i$ and $j$ are both complete and so is the edge $(i,j)$.
  Together with $(i,j)$ being complete, this would witness the fact that $J$ was
  a good subinstance of~$I_{G,1}$, which is impossible because $J$ was defined to be bad.

  Thus, we have shown a homomorphism from~$J_4$, which satisfies the query, to~$I'$,
  which does not, so we have reached a contradiction and the proof is finished.
\end{proof}

This establishes that the status of~$Q$ on a valid well-formed subinstance~$J$
depends on whether $J$ is good or bad, i.e., depends on which of the last three kinds
of facts were kept in~$J$. Now, the subsets of these facts are
clearly in correspondence with the subsets of $U \times E \times V$ for
the $\lambda,\mu,\nu$-\#PP2DNF problem
(see Definition~\ref{def:pp2dnf}),
for some choice of constant probabilities $\lambda,\mu,\nu$. Further, a subset of $U
\times E \times V$ is counted in $\lambda,\mu,\nu$-\#PP2DNF if and only if the
corresponding subset of the last three kinds of facts yields a good subinstance.
As the ill-formed subinstances are easy to count, and
the invalid ones are negligible, we can conclude the reduction and establish
Proposition~\ref{prp:pp2dnf2}. The full proof is in the appendix.

\begin{toappendix}
\begin{proof}[Proof of Proposition~\ref{prp:pp2dnf2}]
  We reduce from the problem $\lambda,\mu,\nu$-\#PP2DNF, with constant probabilities
  $\lambda$ and $\mu$ and~$\nu$ to be defined later.

  We are given as input
  a bipartite graph $G = (U \sqcup V, E)$, and let $\chi \colonequals |U| + |V|
  + |E|$. We assume without loss of generality
  that~$G$ is non-empty, otherwise the answer on this input instance is trivial. We let $n \colonequals
  |U| + |V| + |E|$, which
  is polynomial in the input. Let $q$ be the value given in
  Lemma~\ref{lem:negligible}, which is again polynomial in the input. Construct the
  coding~$I_{G,q}$ of~$G$ in~$I$ using the fixed critical model $M = (I, e, F_\LL, F_\RR)$,
  writing $e = (u,v)$.
  We want to show that the value computed by our oracle, i.e., the number of
  subinstances of~$I_{G,q}$ satisfying~$Q$, reveals what we wanted
  to compute about~$G$, namely, the total probability of subsets $U' \times E'
  \times V'$ of $U \times E \times V$ where $U' \times V' \cap E'$ is nonempty.
  Note that what our oracle returns is, up to renormalization, the
  probability of getting a subinstance of~$I_{G,q}$ that satisfies~$Q$.

  Remember that in the main text we partitioned the facts of~$I_{G,q}$ in five kinds.
  Let $\Psi_\LL$ (resp, $\Psi_\RR$) be the number of non-saturated left-incident
  facts (resp., non-saturated right-incident facts)
  of a vertex of~$U$ (resp., of a vertex of~$V$) in~$I_{G,q}$;
  equivalently, these are respectively the
  number of left-incident and right-incident facts of~$e$ in~$I$ not involving
  $T \sqcup W$.
  Further remember that, by Proposition~\ref{prp:badill2}, the subinstances where
  some base fact is missing, i.e., the ill-formed subinstances, do not satisfy
  the query; so we can assume that the base facts are present and restrict our
  attention to these subinstances, i.e., the well-formed subinstances.

  For the saturated facts, we have partitioned the subinstances between
  those that are valid and those that are invalid. So, restricting our attention
  to the well-formed subinstances, the probability
  returned by the oracle is:
  \begin{equation}
    \label{eqn:defo}
    O = \epsilon X + (1-\epsilon) Y
  \end{equation}
  where $\epsilon$ is the probability that a subinstance is invalid
  conditioned on the fact of being well-formed,
  $X$ is the probability that a subinstance satisfies the query conditioned on the fact of being well-formed and invalid, and 
  $Y$ is the probability that a subinstance satisfies the query conditioned
  on the fact of being well-formed and valid.
  Note that, because the base facts and saturated facts are disjoint, the
  probabilistic events ``the subinstance is valid'' and ``the subinstance is
  well-formed'' are independent, so in fact $\epsilon$ is also the probability
  that a subinstance is invalid without conditioning. Thus, we know by
  Lemma~\ref{lem:negligible} that:
  \begin{equation}
    \label{eqn:epsbound2}
    \epsilon < 2^{-(\chi |I|+1)}
  \end{equation}
  Now, we have $\Theta+\Psi_\LL+\Psi_\RR \leq |I|$ because,
  considering~$I$, the facts of~$\Theta$ are the covering facts of~$e$, the facts
  of~$\Psi_\LL$ are a subset of the left-incident facts of~$e$,
  and the facts of~$\Psi_\RR$ are a
  subset of the right-incident facts of~$e$. So we have:
  \begin{equation}
    \label{eqn:epsbound}
    \epsilon < 2^{-(\chi (\Theta+\Psi_\LL+\Psi_\RR)+1)}
  \end{equation}

  Now, for the value~$Y$, by Propositions~\ref{prp:good2} and~\ref{prp:bad2}, 
  we know that the probability that a valid and well-formed subinstance
  satisfies the query is precisely the probability that it is good, i.e., the
  probability that it has a complete edge connecting two complete vertices.
  The probability that $i\in U$ is complete is 
  $\lambda \colonequals 2^{-\Psi_\LL}$, the probability that $(i,j) \in E$ is complete is 
  $\mu \colonequals 2^{-\Theta}$,
  and the probability that $j\in V$ is complete is $\nu \colonequals
  2^{-\Psi_\RR}$, and there is a clear probability-preserving correspondence
  between the choices of subsets $(U', E', V') \subseteq U \times E \times V$
  and the choice of which left vertices, edges, and right vertices will be
  complete in a valid well-formed subinstance of~$I_{G,q}$.
  So the probability that a valid and well-formed subinstance satisfies~$Q$ is
  exactly
  the probability of obtaining a good triple $(U', E', V') \subseteq U \times E \times
  V$
  in the problem $\lambda,\mu,\nu$-\#PP2DNF, i.e., the answer that we wish to
  compute to conclude the reduction.
  Now, there are 
  $Y' \colonequals 2^{(|U| + |V| + |E|) \times (\Theta+\Psi_\LL + \Psi_\RR)}$
  possible triples overall in that problem,
  so the value~$Y$ is of the form $\frac{Y''}{Y'}$ with $0 \leq
  Y'' \leq Y'$, with~$Y''$ being the answer to the problem on~$G$, i.e., the value
  that we wish to recover to conclude the reduction.

  We now claim that we can recover~$Y$ from the oracle answer~$O$.
  To see why, note that we can
  rewrite Equation~\ref{eqn:defo} to:
  \[
    O = Y + \epsilon (X - Y)
  \]
  where $Y$ and~$X$ are conditional probabilities, so:
  \begin{equation}
    \label{eqn:xy}
    -1 \leq X - Y \leq 1
  \end{equation}
  We can multiply by~$Y'$ and get:
  \[
    Y' O = Y'' + \epsilon Y' (X - Y)
  \]
  We know that~$Y''$ is a number of subinstances, i.e., an integer. Now,
  remembering that $0 \leq \epsilon \leq 1$ because it is a probability, we
  have by Equation~\ref{eqn:epsbound} that as $Y' =
  2^{\chi (\Theta+\Psi_\LL+\Psi_\RR)}$
  we have $0 \leq \epsilon Y' < 1/2$ and by Equation~\ref{eqn:epsbound}
  we have
  $-1/2 < \epsilon Y' (X - Y) < 1/2$. This means that we can recover from~$Y' O$ the
  value~$Y''$ by rounding, hence we can recover~$Y$ from~$O$, which concludes the proof.
\end{proof}
\end{toappendix}

\section{Hardness when all Critical Models are Iterable}
\label{sec:iter}
\begin{toappendix}
  \label{apx:iter}
\end{toappendix}

In this last section, we show hardness in the case where all critical models are
iterable:

\begin{propositionrep}
  \label{prp:ustcon}
  Assume that $Q$ has a critical model and that all critical models of~$Q$ are
  iterable. Then the uniform reliability problem for~$Q$ is \#P-hard.
\end{propositionrep}

\begin{figure}
  \begin{subfigure}[b]{.29\linewidth}
    \centering

    \begin{tikzpicture}[xscale=1,yscale=.4,inner sep=1.5]
    \node (u) at (0, 0) {$u$};
    \node (v) at (1, 0) {$v$};
    \node (t) at (-1, 1) {$t$};
    \node (t1) at (-1, 0) {$t_1$};
    \node (t2) at (-1, -1) {$t_2$};
    \node (w1) at (2, 1) {$w_1$};
    \node (w2) at (2, 0) {$w_2$};
    \node (w) at (2, -1) {$w$};
    \draw[orange,->,thick] (u) -- (v);
    \draw[orange,->,thick] (u) -- (t);
    \draw[orange,->,thick] (u) -- (t1);
    \draw[orange,->,thick] (u) -- (t2);
    \draw[orange,->,thick] (w) -- (v);
    \draw[orange,->,thick] (w1) -- (v);
    \draw[orange,->,thick] (w2) -- (v);
    \end{tikzpicture}
    \medskip

    \begin{tikzpicture}[xscale=1,yscale=.75,inner sep=1.5]
    \node (u1) at (0, 1) {$u_1$};
    \node (v1) at (1, 1) {$v_1$};
    \node (u2) at (0, .33) {$u_2$};
    \node (v2) at (1, .33) {$v_2$};
    \node (u3) at (0, -.33) {$u_3$};
    \node (v3) at (1, -.33) {$v_3$};
    \node (u4) at (0, -1) {$u_4$};
    \node (v4) at (1, -1) {$v_4$};
    \node (t) at (-1, 1) {$t$};
    \node (t1) at (-1, 0) {$t_1$};
    \node (t2) at (-1, -1) {$t_2$};
    \node (w1) at (2, 1) {$w_1$};
    \node (w2) at (2, 0) {$w_2$};
    \node (w) at (2, -1) {$w$};
    \draw[orange,->,thick] (u1) -- (v1);
    \draw[orange,->,thick] (u2) -- (v1);
    \draw[orange,->,thick] (u2) -- (v2);
    \draw[orange,->,thick] (u3) -- (v2);
    \draw[orange,->,thick] (u3) -- (v3);
    \draw[orange,->,thick] (u4) -- (v3);
    \draw[orange,->,thick] (u4) -- (v4);
    \draw[orange,->,thick] (u1) -- (t);
    \draw[orange,->,thick] (u1) -- (t1);
    \draw[orange,->,thick] (u1) -- (t2);
    \draw[orange,->,thick] (u2) -- (t1);
    \draw[orange,->,thick] (u2) -- (t2);
    \draw[orange,->,thick] (u3) -- (t1);
    \draw[orange,->,thick] (u3) -- (t2);
    \draw[orange,->,thick] (u4) -- (t1);
    \draw[orange,->,thick] (u4) -- (t2);
    \draw[orange,->,thick] (w1) -- (v1);
    \draw[orange,->,thick] (w2) -- (v1);
    \draw[orange,->,thick] (w1) -- (v2);
    \draw[orange,->,thick] (w2) -- (v2);
    \draw[orange,->,thick] (w1) -- (v3);
    \draw[orange,->,thick] (w2) -- (v3);
    \draw[orange,->,thick] (w1) -- (v4);
    \draw[orange,->,thick] (w2) -- (v4);
    \draw[orange,->,thick] (w) -- (v4);
    \end{tikzpicture}
    \caption{Example critical model~$M$ (top), 4-step iteration (bottom)}
    \label{fig:iter}
  \end{subfigure}
\hfill
  \begin{subfigure}[b]{.4\linewidth}
    \begin{tikzpicture}[xscale=1,yscale=1,inner sep=1.5]
      \node (ur) at (0, 1.5) {$u_r$};
      \node (uar) at (0, .9) {$u_{ar}$};
      \node (urs) at (0, .3) {$u_{rs}$};
      \node (ua) at (0, -.3) {$u_a$};
      \node (uas) at (0, -.9) {$u_{as}$};
      \node (us) at (0, -1.5) {$u_s$};
      \node (varr) at (2, 1.5) {$v_{ar,r}$};
      \node (vrsr) at (2, 1) {$v_{rs,r}$};
      \node (vara) at (2, .5) {$v_{ar,a}$};
      \node (vrss) at (2, 0) {$v_{rs,s}$};
      \node (vasa) at (2, -.5) {$v_{as,a}$};
      \node (vass) at (2, -1) {$v_{as,s}$};
      \node (v) at (2.2, -1.5) {$v$};
    \node (t) at (-1.5, 1) {$t$};
    \node (t1) at (-1.5, 0) {$t_1$};
    \node (t2) at (-1.5, -1) {$t_2$};
    \node (w1) at (3.5, 1) {$w_1$};
    \node (w2) at (3.5, 0) {$w_2$};
    \node (w) at (3.5, -1) {$w$};
    \draw[orange,->,thick] (ur.east) -- (varr.west);
    \draw[orange,->,thick] (ur.east) -- (vrsr.west);
    \draw[orange,->,thick] (uar.east) -- (varr.west);
    \draw[orange,->,thick] (uar.east) -- (vara.west);
    \draw[orange,->,thick] (uas.east) -- (vass.west);
    \draw[orange,->,thick] (uas.east) -- (vasa.west);
    \draw[orange,->,thick] (urs.east) -- (vrsr.west);
    \draw[orange,->,thick] (urs.east) -- (vrss.west);
    \draw[orange,->,thick] (ua.east) -- (vara.west);
    \draw[orange,->,thick] (ua.east) -- (vasa.west);
    \draw[orange,->,thick] (us.east) -- (vrss.west);
    \draw[orange,->,thick] (us.east) -- (vass.west);
    \draw[black,->,thick] (us.east) -- (v.west);

    \draw[black,->,thick] (ur.west) -- (t);

    \draw[black,->,thick] (ur.west) -- (t1);
    \draw[black,->,thick] (ur.west) -- (t2);
    \draw[purple,->,thick] (ua.west) -- (t1);
    \draw[purple,->,thick] (ua.west) -- (t2);
    \draw[purple,->,thick] (us.west) -- (t1);
    \draw[purple,->,thick] (us.west) -- (t2);
    \draw[orange,->,thick] (uas.west) -- (t1);
    \draw[orange,->,thick] (uas.west) -- (t2);
    \draw[orange,->,thick] (urs.west) -- (t1);
    \draw[orange,->,thick] (urs.west) -- (t2);
    \draw[orange,->,thick] (uar.west) -- (t1);
    \draw[orange,->,thick] (uar.west) -- (t2);

    \draw[orange,->,thick] (w1) -- (varr.east);
    \draw[orange,->,thick] (w2) -- (varr.east);
    \draw[orange,->,thick] (w1) -- (vara.east);
    \draw[orange,->,thick] (w2) -- (vara.east);
    \draw[orange,->,thick] (w1) -- (vasa.east);
    \draw[orange,->,thick] (w2) -- (vasa.east);
    \draw[orange,->,thick] (w1) -- (vass.east);
    \draw[orange,->,thick] (w2) -- (vass.east);
    \draw[orange,->,thick] (w1) -- (vrsr.east);
    \draw[orange,->,thick] (w2) -- (vrsr.east);
    \draw[orange,->,thick] (w1) -- (vrss.east);
    \draw[orange,->,thick] (w2) -- (vrss.east);

    \draw[black,->,thick] (w1) -- (v.east);
    \draw[black,->,thick] (w2) -- (v.east);
    \draw[black,->,thick] (w) -- (v.east);
    \end{tikzpicture}
    \caption{The coding $I_G$ of a graph~$G$ in~$M$:\\ $G = (\{a,r,s\}, \{\{r,s\}, \{a,r\},
    \{a,s\}\})$.}
    \label{fig:itercode}
  \end{subfigure}
\hfill
    \begin{subfigure}[b]{.28\linewidth}
    \begin{tikzpicture}[xscale=1,yscale=.45,inner sep=1.5]
    \node (u) at (0, .5) {$u$};
    \node (v) at (1, -.5) {$v$};
    \node (up) at (0, -.5) {$u'$};
    \node (vp) at (1, .5) {$v'$};

    \node (t) at (-1, 1) {$t$};
    \node (t1) at (-1, 0) {$t_1$};
    \node (t2) at (-1, -1) {$t_2$};
    \node (w1) at (2, 1) {$w_1$};
    \node (w2) at (2, 0) {$w_2$};
    \node (w) at (2, -1) {$w$};

    \draw[orange,->,thick] (u) -- (vp);
    \draw[orange,->,thick] (up) -- (v);

    \draw[orange,->,thick] (u) -- (t);
    \draw[orange,->,thick] (u) -- (t1);
    \draw[orange,->,thick] (u) -- (t2);
    \draw[orange,->,thick] (up) -- (t1);
    \draw[orange,->,thick] (up) -- (t2);
    \draw[orange,->,thick,dashed] (up) -- (t);

    \draw[orange,->,thick] (w) -- (v);
    \draw[orange,->,thick] (w1) -- (v);
    \draw[orange,->,thick] (w2) -- (v);
    \draw[orange,->,thick] (w1) -- (vp);
    \draw[orange,->,thick] (w2) -- (vp);
    \draw[orange,->,thick,dashed] (w) -- (vp);
      \end{tikzpicture}
      \medskip

    \begin{tikzpicture}[xscale=1,yscale=.45,inner sep=1.5]
      \centering
    \node (u) at (0, 1) {$u$};
    \node (vp) at (1, 1) {$v'$};
      \node (u1) at (0, 0) {$u_{1}$};
      \node (u2) at (0, -1) {$u_{2}$};
    \node (up) at (0, -2) {$u'$};
      \node (v) at (1, -2) {$v\phantom{'}$};

    \node (t) at (-1, 1) {$t$};
    \node (t1) at (-1, -.5) {$t_1$};
    \node (t2) at (-1, -2) {$t_2$};
    \node (w1) at (2, 1) {$w_1$};
    \node (w2) at (2, -.5) {$w_2$};
    \node (w) at (2, -2) {$w$};

    \draw[orange,->,thick]  (u.east) -- (vp);
    \draw[orange,->,thick] (u1.east) -- (vp);
    \draw[orange,->,thick] (u2.east) -- (vp);
    \draw[orange,->,thick] (u1.east) --  (v);
    \draw[orange,->,thick] (u2.east) --  (v);
    \draw[orange,->,thick] (up.east) --  (v);

    \draw[orange,->,thick] (u.west) -- (t);
    \draw[orange,->,thick] (u.west) -- (t1);
    \draw[orange,->,thick] (u1.west) -- (t1);
    \draw[orange,->,thick] (u.west) -- (t2);
    \draw[orange,->,thick] (u2.west) -- (t2);
    \draw[orange,->,thick] (up.west) -- (t1);
    \draw[orange,->,thick] (up.west) -- (t2);

    \draw[orange,->,thick] (w.west) -- (v);
    \draw[orange,->,thick] (w1.west) -- (v);
    \draw[orange,->,thick] (w2.west) -- (v);
    \draw[orange,->,thick] (w1.west) -- (vp);
    \draw[orange,->,thick] (w2.west) -- (vp);
      \end{tikzpicture}
\caption{Fine dissociation (top) and explosion (bottom) of~$M$}
\label{fig:other}
    \end{subfigure}
  \caption{Examples of Section~\ref{sec:iter} and illustration of the notation}
\end{figure}

\begin{toappendix}
In the proofs of this section, we will use the \emph{fine dissociation}:

\begin{definition}
  \label{def:finedissoc}
  Let $M = (I, e, F_\LL,
  F_\RR)$ be a critical model, and let $e = (u, v)$.
  The \emph{fine dissociation} of~$M$ is the instance obtained by modifying~$I$
  in the following way:
  \begin{itemize}
    \item Create fresh elements~$u'$ and~$v'$.
    \item For each left-incident fact $F_\LL'$ of~$e$ in~$I$ except~$F_\LL$, create
      the fact obtained from~$F_\LL'$ by replacing~$u$ by~$u'$.
    \item For each right-incident fact $F_\RR'$ of~$e$ in~$I$ except~$F_\RR$, create
      the fact obtained from~$F_\RR'$ by replacing~$v$ by~$v'$.
    \item Copy the edge~$e$ on~$(u, v')$ and~$(u',v)$, and remove the covering
      facts of~$e$.
  \end{itemize}
\end{definition}

  We gave an illustration of the fine dissociation in Figure~\ref{fig:other} in
  the main text, but specialized to the case where $\Xi = 0$ (i.e., all incident
  facts are copy facts) because this is the setting of most of this section; but
  the fine dissociation can also be defined without this assumption, and we will
  use it in this general sense here.

  Notice that the definition of the fine dissociation is closely related to the notion of the same name
  in~\cite{amarilli2021dichotomy} but with a slight difference,
  e.g., we do not
  create incomplete copies of~$e$ on the edges $(u, v)$ and $(u', v')$. This
  difference is inessential, because these edges would be garbage facts of the
  edges $(u, v')$ and $(u', v)$.

Equivalently, the fine dissociation is the iteration of~$M$ but without the copy
  of~$e$ on~$(u',v')$; or it is like the dissociation but the two copies of~$e$
  are created not with leaf elements but with elements respectively involved in all the
  left-incident facts of~$e$ except~$F_\LL$ and all the right-incident facts
  of~$e$ except~$F_\RR$.

We make the following claim, which relies on the notion of lexicographic weight (but,
  for now, only in the 
componentwise sense). The claim holds no matter whether the critical model
  is iterable or not. The proof method is the same as that of Lemma~7.6
  of~\cite{amarilli2021dichotomy} (see in particular Figure~8
  of~\cite{amarilli2021dichotomy}), but skipping the first two steps thanks to
  the omission of the incomplete edges $(u,v)$ and $(u',v')$, and reasoning
  about extra weight and (componentwise) lexicographic weight instead of side
  weight.
\begin{claim}
  \label{clm:finedissoc}
  The fine dissociation of a critical model does not satisfy the query.
\end{claim}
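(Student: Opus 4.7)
The plan is to argue by contradiction: suppose that the fine dissociation~$I_{\mathrm{fd}}$ of the critical model satisfies~$Q$, and derive a contradiction with the tightness of~$e = (u,v)$ in~$I$. My strategy is to dissociate the two fresh copies of~$e$ in~$I_{\mathrm{fd}}$, namely the edges $(u, v')$ and $(u', v)$, without breaking~$Q$, producing an instance~$J$ that still satisfies~$Q$; then to exhibit a homomorphism from~$J$ to the dissociation~$I'$ of~$e$ in~$I$, which violates~$Q$ by tightness of~$e$. Homomorphism-closure then forces $J \not\models Q$, the required contradiction.

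The two edges are handled symmetrically; I focus on~$(u,v')$, a copy of~$e$ of weight~$\Theta$ whose incident facts at~$v'$ are those of~$v$ in~$I$ except that the distinguished fact~$F_\RR$ is missing. I would split on whether $F_\RR$ is an extra fact or a copy fact of~$e$ in~$I$. If it is an extra fact, removing it strictly decreases the extra weight of~$(u,v')$ below~$\Xi$, so Claim~\ref{clm:dissocextra} permits dissociation. If it is a copy fact involving a right copy element~$x_\RR$, then the edge $(x_\RR, v')$ in~$I_{\mathrm{fd}}$ is missing the covering fact corresponding to~$F_\RR$ and hence is either a non-edge (if~$\Theta = 1$) or carries only garbage facts; in particular $x_\RR$ ceases to be a right copy element of~$(u,v')$, so the lexicographic weight of~$(u,v')$ drops strictly below~$\Lambda$ while its extra weight stays at~$\Xi$, and Claim~\ref{clm:dissoclex} permits dissociation. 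An analogous split on~$F_\LL$ handles~$(u',v)$.

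To construct the homomorphism $J \to I'$, I would send $u' \mapsto u$ and $v' \mapsto v$, so that the copies of incident facts placed at~$u'$ and~$v'$ land on the genuine incident facts of~$u$ and~$v$ in~$I'$ (which survive there intact); send each of the four fresh endpoints created by the two dissociations to the appropriate one of the two fresh dissociation endpoints of~$I'$, so that each of the four fresh copies of~$e$ in~$J$ lands on one of the two copies of~$e$ in~$I'$; and use the identity everywhere else. Any residual garbage edges like those on~$(x_\RR, v')$ then map into strict subsets of the covering facts of the corresponding copy edges of~$e$ in~$I$, which are still present in~$I'$.

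The main obstacle I foresee is making the weight inequalities of the second paragraph watertight. I need to rule out that deleting the contribution of~$F_\RR$ at~$v'$ might spuriously inflate the extra weight of~$(u,v')$ back up to~$\Xi$, or the lexicographic weight back up to~$\Lambda$, by eliminating a triangle through~$v'$ and causing facts at~$u$ to be reclassified from copy or garbage into extra. This will require a careful local analysis at~$v'$, showing that triangles and non-isomorphic-covering neighbourhoods at~$v'$ mirror those at~$v$ in~$I$ with~$x_\RR$ as the only exception in the copy case; the cleanliness of~$(u,v)$ in~$I$ should be what guarantees that subtracting one covering fact from a copy edge can produce only garbage and not new extra structure.
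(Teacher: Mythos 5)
Your proposal follows essentially the same approach as the paper's proof: dissociate both fresh copies of~$e$ in the fine dissociation using Claim~\ref{clm:dissocextra} (when the omitted distinguished fact is an extra fact) or Claim~\ref{clm:dissoclex} (when it is a copy fact), then homomorphically map the result onto the dissociation of~$e$ in~$I$, contradicting tightness. The concern you flag at the end is resolved exactly as you suspect and the paper handles it by appealing to reasoning analogous to Lemma~\ref{lem:fctremove}: removing $F_\RR$ at~$v'$ cannot create new triangles or inflate the extra weight, and cleanliness of~$e$ together with the exclusion of garbage facts from the weight accounting ensures that subtracting a covering fact from a copy edge produces only garbage.
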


\begin{proof}
  Letting $M = (I, e, F_\LL, F_\RR)$ and writing $e = (u, v)$, 
  consider the copy of~$e$ on~$(u, v')$ in the fine dissociation~$I'$ of~$M$.
  If $(u,v')$ is a leaf edge in~$I'$, i.e., $F_\RR$ was the only right-incident
  fact of~$e$ in~$I$, then we can vacuously dissociate $(u,v')$, using the
  convention introduced in Appendix~\ref{apx:basic}. Otherwise, let us show that
  we can dissociate the non-leaf edge~$(u,v')$ without breaking the query.
  The incident facts of~$(u,v')$ are the same as
  those of~$e$ in~$I$, except it is missing the right-incident fact $F_\RR$
  on~$v'$. As $M$ is a critical model, $e$ is clean, so $F_\RR$ is either an extra
  fact or a copy fact of~$e$ in~$I$ (not a garbage fact). We know by similar reasoning to
  Lemma~\ref{lem:fctremove} that either the extra weight of~$(u,v')$ in~$I'$ is
  less than that of~$e$ in~$I$, i.e., is $< \Xi$ (if $F_\RR$ is an extra fact
  of~$e$ in~$I$); or that the extra weight of~$(u,v')$ in~$I'$ is~$\Xi$ but the
  lexicographic weight of~$(u,v')$ in~$I'$ is less than that of~$e$ in~$I$,
  i.e., it is $< \Lambda$. Thus, by applying Claim~\ref{clm:dissocextra} in the
  first case and Claim~\ref{clm:dissoclex} in the second case, we know that we
  can dissociate $(u,v')$ in~$I$ without breaking the query. Let $I'_1$ be the
  result of this dissociation.

  Considering $I'_1$ and the edge $(u',v)$, but noticing that the incident facts
  of~$(u',v)$ in~$I'_1$ are the same as in~$I'$, the 
  symmetric argument (but
  noticing the absence of~$F_\LL$ instead of~$F_\RR$), shows that we can dissociate
  $(u',v)$ in~$I'_1$ without breaking the query. We do so, obtaining~$I_2'$
  which satisfies the query.

  Now, we can homomorphically
  merge $u$ and $u'$, merge $v$ and $v'$, and merge the leaf copies of~$e$
  involving~$u$ and~$u'$ (respectively in steps 1 and 2) and $v$ and~$v'$
  (respectively in steps 2 and 1). This maps~$I_2'$ to the dissociation of~$e$
  in~$I$, which does not satisfy the query because $e$ is tight. We have reached
  a contradiction, so the proof is concluded.
\end{proof}

Thanks to this observation, in the case where the iteration cannot break the query, we will be able to
simplify critical models by noticing that the critical extra weight is~$0$.
Namely:
\end{toappendix}

A first observation is that, in this case, we have $\Xi=0$, by contraposition of
the following:

\begin{claimrep}
  \label{clm:pure}
  If the critical extra weight is $>0$, then $Q$ has a non-iterable critical model.
\end{claimrep}

\begin{proofsketch}
  Take a critical model $M =
  (I, e, F_\LL, F_\RR)$ with $e = (u,v)$ and one of~$F_\LL, F_\RR$ an extra fact.
  The edge $(u',v')$ in
  the iteration of~$M$ has weight~$\Theta$ 
  and extra weight $<\Xi$, so we can dissociate it without
  breaking~$Q$ and merge the two resulting copies. This yields the so-called \emph{fine dissociation} (see
  Figure~\ref{fig:other}, and Definition~\ref{def:finedissoc} in the appendix), which
  violates~$Q$.
\end{proofsketch}

\begin{proof}
  Let $(I, e, F_\LL', F_\RR')$ be a critical model of the query, and let $e = (u,
  v)$. As the critical extra
  weight is $>0$, we know that one of $u$ and $v$ has an extra fact.
  Let us replace the choice of incident facts $F_\LL'$ and $F_\RR'$ to use extra
  facts if possible, i.e., pick incident facts $F_\LL$ and $F_\RR$ where 
  at least one of them
  is an extra fact of~$e$. This yields another critical model $M = (I, e, F_\LL, F_\RR)$, with the same
  weight and extra weight and lexicographic weight. Let us show that $M$ is
  non-iterable.

  Consider the iteration $I'$ of~$M$. Consider first the copy of~$e$
  of~$C$ on $(u', v')$. This edge has weight~$\Theta$, and has extra weight
  $< \Xi$. Indeed, it is missing the copies of~$F_\LL$ and~$F_\RR$ and its other
  incident facts are copies of the other incident facts of~$e$ in~$I$ (i.e.,
  copy facts and extra facts) and the covering facts of the copies of~$e$
  on~$(u',v)$ and~$(u,v')$ but these are copy facts of~$(u',v')$ in~$I'$ (note
  in particular that neither $u$ nor~$v$ forms a triangle with $(u',v')$).
  Thus, indeed the edge $(u',v')$ in~$I'$ can be dissociated without breaking
  the query (by Claim~\ref{clm:dissocextra}).
  Next, we can homomorphically map the leaf element of the
  two leaf edges thus created into~$u$ and~$v$, merging the two leaf edges
  into~$(u,v)$ and~$(u',v')$, i.e., removing them
  without breaking the query.
  We have obtained as a result of this process the fine dissociation of~$M$, and
  shown that it satisfies the query, which contradicts the result of
  Claim~\ref{clm:finedissoc}, concluding the proof.
\end{proof}

Hence, in the rest of the section, we assume $\Xi=0$, and fix an iterable critical model $M = (I, e, F_\LL, F_\RR)$. 
All incident facts of~$e = (u,v)$ in~$I$ are copy
facts,
so we let $t, t_1, \ldots, t_{\tau-1}$ be
the left copy elements and $w, w_1, \ldots, w_{\omega-1}$ be the right copy
elements, where $t$ and $w$ are the elements that occur in~$F_\LL$
and~$F_\RR$ respectively (the choice of $F_\LL$ and $F_\RR$ from now on only
matters in that it distinguishes two copy elements~$t$ and~$w$).
The lexicographic weight of~$e$ in~$I$ is thus $\Lambda =
(\tau, \omega)$ with $\tau, \omega \geq 1$.
We let $C$ be the covering facts of~$e$ in~$I$. See
Figure~\ref{fig:iter}.

\subparagraph*{$\bm{n}$-step iteration.}
Let us now define the \emph{$n$-step iteration} of~$M$.
It is related to iteration
in~\cite{amarilli2021dichotomy}, but specialized to the case where $\Xi = 0$, i.e., all
incident facts are copy facts.

\begin{definition}
  For $n > 0$, the \emph{$n$-step iteration} of~$M$ is obtained by modifying~$I$:
  \begin{itemize}
    \item Create elements $u_1, \ldots, u_n$ and $v_1, \ldots, v_n$, where we
      identify~$u$ and~$u_1$ and $v_n$ and~$v$.
    \item For all $1 \leq i, j \leq n$,
      copy~$e$ on $(u_i, t_{j'})$ and $(w_{i'},v_j)$
      for all $1 \leq j' < \tau$ and $1 \leq i' < \omega$.
    \item For all $1 \leq i \leq n$, copy~$e$ on $(u_i, v_i)$ for all $1 \leq i \leq n$ and on
      $(u_{i+1}, v_i)$ for all~$1 \leq i < n$.
    \item Remove the facts of~$C$, except in the trivial case where $n = 1$.
  \end{itemize}
\end{definition}

The iteration is illustrated in Figure~\ref{fig:iter}. Note that the 1-step
iteration is exactly~$I$. Further, the 2-step iteration
resembles the iteration in Section~\ref{sec:noniter}, but omits some
incomplete copies of~$(u,t)$ and~$(w,v)$ (i.e., the dashed edges in
Figure~\ref{fig:iternonit}): as $t$ and $w$ are copy elements 
these facts would
be garbage facts so the difference is inessential.

We now show that, if the iteration process of Section~\ref{sec:noniter} cannot
break~$Q$ on any critical model, then 
$Q$ must also be satisfied in the $n$-step iteration of any critical
model~$M$ for any $n > 0$. This proposition summarizes how we use the hypothesis that all
critical models are iterable:

\begin{propositionrep}
  \label{prp:iterability}
  Let $Q$ be a query that has a critical model. Assume that all critical models
  for~$Q$ are
  iterable. Then $\Xi = 0$ and, for any critical model $M$ of~$Q$, for any~$n >
  0$, the $n$-iteration of~$M$ satisfies~$Q$; further it is a subinstance-minimal model of~$Q$.
\end{propositionrep}

\begin{proofsketch}
  Intuitively, the $n$-step iteration can be achieved by repeatedly performing the
  iteration from Section~\ref{sec:noniter}. A tedious point in the proof is to show that
  subinstance-minimality is preserved throughout this process.
\end{proofsketch}

\begin{toappendix}
\begin{figure}
  \hfill
    \begin{tikzpicture}[xscale=1.3,yscale=.75,inner sep=1.5]
    \node (u1) at (0, 1) {$u_1$};
    \node (v1) at (1, 1) {$v_1$};
    \node (u2) at (0, -1) {$u_2$};
    \node (v2) at (1, -1) {$v_2$};
    \node (t) at (-1, 1) {$t$};
    \node (t1) at (-1, 0) {$t_1$};
    \node (t2) at (-1, -1) {$t_2$};
    \node (w1) at (2, 1) {$w_1$};
    \node (w2) at (2, 0) {$w_2$};
    \node (w) at (2, -1) {$w$};
    \draw[orange,->,thick] (u1) -- (v1);
    \draw[orange,->,thick] (u2) -- (v1);
    \draw[orange,->,thick] (u2) -- (v2);
    \draw[orange,->,thick] (u1) -- (t);
    \draw[orange,->,thick] (u1) -- (t1);
    \draw[orange,->,thick] (u1) -- (t2);
    \draw[orange,->,thick] (u2) -- (t1);
    \draw[orange,->,thick] (u2) -- (t2);
    \draw[orange,->,thick] (w1) -- (v1);
    \draw[orange,->,thick] (w2) -- (v1);
    \draw[orange,->,thick] (w1) -- (v2);
    \draw[orange,->,thick] (w2) -- (v2);
    \draw[orange,->,thick] (w) -- (v2);
    \end{tikzpicture}
    \hfill
    \hfill
    \begin{tikzpicture}[xscale=1.3,yscale=.75,inner sep=1.5]
    \node (u1) at (0, 1) {$u_1$};
    \node (v1) at (1, 1) {$v_1$};
    \node (u2) at (0, -1) {$u_2$};
    \node (v2) at (1, -1) {$v_2$};
    \node (t) at (-1, 1) {$t$};
    \node (t1) at (-1, 0) {$t_1$};
    \node (t2) at (-1, -1) {$t_2$};
    \node (w1) at (2, 1) {$w_1$};
    \node (w2) at (2, 0) {$w_2$};
    \node (w) at (2, -1) {$w$};
    \draw[orange,->,thick] (u1) -- (v1);
    \draw[orange,->,thick] (u2) -- (v1);
    \draw[orange,->,thick] (u2) -- (v2);
    \draw[orange,->,thick] (u1) -- (t);
    \draw[orange,->,thick,dashed] (u2) -- (t);
    \draw[orange,->,thick] (u1) -- (t1);
    \draw[orange,->,thick] (u1) -- (t2);
    \draw[orange,->,thick] (u2) -- (t1);
    \draw[orange,->,thick] (u2) -- (t2);
    \draw[orange,->,thick] (w1) -- (v1);
    \draw[orange,->,thick] (w2) -- (v1);
    \draw[orange,->,thick] (w1) -- (v2);
    \draw[orange,->,thick] (w2) -- (v2);
    \draw[orange,->,thick] (w) -- (v2);
    \draw[orange,->,thick,dashed] (w) -- (v1);
    \end{tikzpicture}
    \hfill\null

    \caption{Illustration of the proof of Proposition~\ref{prp:iterability}. The
    left picture represents the 2-step iteration of the model~$M$ of
    Figure~\ref{fig:iter} (top), and the right picture represents its iteration
    in the sense of Section~\ref{sec:noniter}. The only difference are the two
    dashed edges which are copies of the edge $(u,t)$ and $(w,v)$ respectively, missing the facts
    $F_\LL$ and~$F_\RR$ respectively. They have weight $< \Theta$, so can be
    dissociated, and merged in $(u,v')$ and~$(u',v)$, so indeed the difference
    between the two processes is inessential. (However, the 2-step iteration,
    unlike the iteration, is a subinstance-minimal model
    of~$Q$, as we show.)}
    \label{fig:compareiter}
\end{figure}
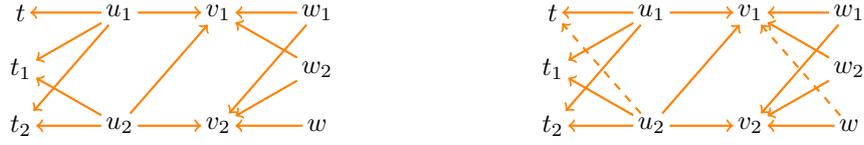
\begin{figure}
    \hfill
    \begin{tikzpicture}[xscale=1.3,yscale=.75,inner sep=1.5]
    \node (u1) at (0, 1) {$u_1$};
    \node (v1) at (1, 1) {$v_1$};
    \node (u2) at (0, 0) {$u_2$};
    \node (v2) at (1, 0) {$v_2$};
    \node (u3) at (0, -2) {$u_3$};
    \node (v3) at (1, -2) {$v_3$};
    \node (t) at (-1, 1) {$t$};
    \node (t1) at (-1, -.5) {$t_1$};
    \node (t2) at (-1, -2) {$t_2$};
    \node (w1) at (2, 1) {$w_1$};
    \node (w2) at (2, -.5) {$w_2$};
    \node (w) at (2, -2) {$w$};
    \draw[orange,->,thick] (u1) -- (v1);
    \draw[orange,->,thick] (u2) -- (v1);
    \draw[orange,->,thick] (u2) -- (v2);
    \draw[orange,->,thick] (u3) -- (v2);
    \draw[orange,->,thick] (u3) -- (v3);
    \draw[orange,->,thick] (u1) -- (t);
    \draw[orange,->,thick] (u1) -- (t1);
    \draw[orange,->,thick] (u1) -- (t2);
    \draw[orange,->,thick] (u2) -- (t1);
    \draw[orange,->,thick] (u2) -- (t2);
    \draw[orange,->,thick] (u3) -- (t1);
    \draw[orange,->,thick] (u3) -- (t2);
    \draw[orange,->,thick] (w1) -- (v1);
    \draw[orange,->,thick] (w2) -- (v1);
    \draw[orange,->,thick] (w1) -- (v2);
    \draw[orange,->,thick] (w2) -- (v2);
    \draw[orange,->,thick] (w1) -- (v3);
    \draw[orange,->,thick] (w2) -- (v3);
    \draw[orange,->,thick] (w) -- (v3);
    \end{tikzpicture}
    \hfill
    \hfill
    \begin{tikzpicture}[xscale=1.3,yscale=.75,inner sep=1.5]
    \node (u1) at (0, 1) {$u_1$};
    \node (v1) at (1, 1) {$v_1$};
    \node (u2) at (0, 0) {$u_2$};
    \node (v2) at (1, 0) {$v_2$};
    \node (v2p) at (1, -.7) {$v_2'$};
    \node (u3) at (0, -2) {$u_3$};
    \node (u3p) at (0, -1.3) {$u_3'$};
    \node (v3) at (1, -2) {$v_3$};
    \node (t) at (-1, 1) {$t$};
    \node (t1) at (-1, -.5) {$t_1$};
    \node (t2) at (-1, -2) {$t_2$};
    \node (w1) at (2, 1) {$w_1$};
    \node (w2) at (2, -.5) {$w_2$};
    \node (w) at (2, -2) {$w$};
    \draw[orange,->,thick] (u1) -- (v1);
    \draw[orange,->,thick] (u2) -- (v1);
    \draw[orange,->,thick] (u2) -- (v2);
    \draw[orange,->,thick] (u3p) -- (v2);
    \draw[orange,->,thick] (u3p) -- (v2p);
    \draw[orange,->,thick] (u3) -- (v2p);
    \draw[orange,->,thick] (u3) -- (v3);
    \draw[orange,->,thick] (u1) -- (t);
    \draw[orange,->,thick] (u1) -- (t1);
    \draw[orange,->,thick] (u1) -- (t2);
    \draw[orange,->,thick] (u2) -- (t1);
    \draw[orange,->,thick] (u2) -- (t2);
    \draw[orange,->,thick] (u3) -- (t1);
    \draw[orange,->,thick] (u3) -- (t2);
    \draw[orange,->,thick] (u3p) -- (t1);
    \draw[orange,->,thick] (u3p) -- (t2);
    \draw[orange,->,thick] (w1) -- (v1);
    \draw[orange,->,thick] (w2) -- (v1);
    \draw[orange,->,thick] (w1) -- (v2);
    \draw[orange,->,thick] (w2) -- (v2);
    \draw[orange,->,thick] (w1) -- (v2p);
    \draw[orange,->,thick] (w2) -- (v2p);
    \draw[orange,->,thick] (w1) -- (v3);
    \draw[orange,->,thick] (w2) -- (v3);
    \draw[orange,->,thick] (w) -- (v3);
    \end{tikzpicture}
    \hfill\null
    \caption{Illustration of the proof of Proposition~\ref{prp:iterability}. The
    left picture
    illustrates the 3-step iteration of the model~$M$ of Figure~\ref{fig:iter}
    (top). Here we have $\Lambda = (3,3)$. Observe how removing a fact in any of
    the edges $(u_i,v_i)$ or $(u_{i+1}, v_i)$ allows us to dissociate it
    (because its weight is now~$<\Theta$) and
    merge to the fine dissociation (Figure~\ref{fig:other}, top). Further,
    removing a fact in any other orange edge (involving~$u_i$ or~$v_i$) allows us to dissociate it (because
    its weight is now~$<\Theta$), 
    merge the dangling copy on~$u_i$ or~$v_i$ on the edge of the form
    $(u_i,v_i)$, and then the edge $(u_i,v_i)$ now has lexicographic weight
    $(3,2)$ or~$(2,3)$ and can also be dissociated. The right picture
    illustrates the result of performing the 2-step iteration
    of $(u_3,v_2)$ in the left picture with
    some choice of left-incident fact in the edge $(u_2,v_2)$ and a choice of
    right-incident fact in the edge~$(u_3,v_3)$: one can see that the result is
    isomorphic to the 4-step iteration of~$M$ (Figure~\ref{fig:iter}, bottom).}
    \label{fig:process}
\end{figure}
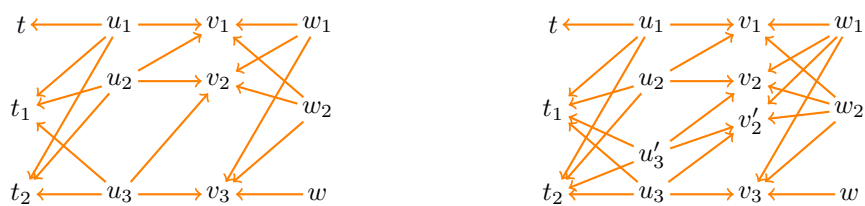
\end{toappendix}

\begin{proof}
  The fact that $\Xi = 0$ is directly given by the contrapositive of Claim~\ref{clm:pure}.

  Let us show
  that the $n$-iteration of any critical model~$M$
  is a subinstance-minimal model of~$Q$,
  by induction on~$n$. 

  The base case of $n=1$ is vacuous because the $1$-iteration of any
  critical model~$M$ is $M$ itself, which is by definition a subinstance-minimal
  model of~$Q$.

  For the case $n=2$, we first show that the $2$-step iteration $I_2$ satisfies the
  query, and then that it is subinstance-minimal. 
  We assume that the iteration of~$M$ satisfied the query,
  now the $2$-step iteration $I_2$ of~$M$ is identical to the iteration of~$M$
  up to a minor difference, illustrated in Figure~\ref{fig:compareiter}.
  Specifically, letting $t$ and $w$ be the other elements of~$F_\LL$ and $F_\RR$
  respectively (which are copy facts as $\Xi = 0$), if $\Theta > 1$ then $I_2$
  has no edge $(u_2,t)$ and $(w,v_1)$ whereas the iteration has
  edges $(u',t)$ and $(w,v')$ containing a copy of the covering facts of~$e$
  except $F_\LL$ and $F_\RR$ respectively. But these facts are garbage facts; given
  the iteration, we can dissociate these two edges (they have weight $<\Theta$,
  so we can use Claim~\ref{clm:dissocweight} if they are non-leaf edges) without
  breaking the query, homomorphically merge
  the incident leaf edge on~$u'$ and~$v'$ with $(u',v)$ and $(u,v')$, merge the
  other incident leaf edges to~$(u,t)$ and~$(w,v)$, and this establishes that
  $I_2$ satisfies the query. 

    \bigskip

    Now, let us show subinstance-minimality. Consider a strict subset $J$
    of~$I_2$.
    By monotonicity, it suffices to consider the case of a single missing fact
    $F$, i.e., $J = I_2 \setminus \{F\}$. The argument is sketched on an
    example for the 3-step iteration on Figure~\ref{fig:process} (left).

    If $F$ is one of the facts of~$I_2$ that does not involve $u_1, u_2, v_1,
    v_2$, then considering the homomorphism~$h$ mapping~$J$ to~$I$
    by mapping $u_1$ and $u_2$ to~$u$ and~$v_1$ and~$v_2$ to~$v$  and being the
    identity otherwise, then the only preimage of~$F$ by~$h$ would be~$F$, so
    $h$ is in fact a homomorphism from~$J$ to $I \setminus \{F\}$. As $I$ is a
    subinstance-minimal model of~$Q$, we know that $I \setminus \{F\}$ does not
    satisfy~$Q$, and neither does~$J$.

    If $F$ is a fact of the edge $(u_1, t_{j'})$ for some $1 \leq j' < \tau$
    (first case), or
    a fact of the edge $(u_1, t)$, i.e., $F_\LL$ or another fact on the same
    elements (second case), then it~$J$ that edge has weight $< \Theta$ and can be
    dissociated without breaking the query. 
    Consider
    now the edge $(u_1, v_1)$: its weight is~$\Theta$, its extra weight is $\Xi
    = 0$, its left copy elements are $t$ and the $t_{j''}$ for $j'' \in \{1,
    \ldots \tau-1\} \setminus \{i'\}$ (in the first case) or the $t_{j''}$ for
    $1 \leq j'' < \tau$ (second case), i.e., $\tau-1$ left copy elements, so its
    lexicographic weight is less than $\Lambda = (\tau, \omega)$ and we can
    dissociate it without breaking the query by Claim~\ref{clm:dissoclex}. We
    can now map the result homomorphically to the fine dissociation of~$M$ (Definition~\ref{def:finedissoc}) 
    by mapping $u_1$ to~$u$, mapping $u_2$ to~$u'$, mapping $v_1$ and $v_2$
    to~$v$, mapping the dangling copies of~$e$ on $u_1$ to~$(u,v')$ and on~$v_1$
    to~$(u',v)$, and the dangling copy of~$e$ on~$t_{j''}$ to $(u,t_{j''})$ (in the
    first case)  or the dangling copy of~$e$ on~$t$ to~$(u,t)$ (in the second
    case).
We know by Claim~\ref{clm:finedissoc} that the fine dissociation does not
satisfy the query, which concludes.

    If $F$ is a fact of the edge $(w_{i'}, v_2)$ for some $1 \leq i' < \omega$ (first
    case), or a fact of the edge $(w, v_2)$, i.e., $F_\RR$ or a fact on the same
    elements (second case), then we reason in the same way: we dissociate this
    edge, and then the edge $(u_2, v_2)$ has left copy elements $v_1$ and the
    $t_{j''}$ for $1 \leq j'' < \tau$, i.e., $\tau$ left copy elements, and it
    has $\omega-1$ right copy elements, so it has strictly smaller lexicographic
    weight, and we can dissociate it and conclude in the symmetric way as above.

    If $F$ is a fact of the edge $(u_2, t_{j'})$ for some $1 \leq j' < \tau$,
    then in~$J$ that edge has weight~$< \Theta$ and can be dissociated. We now
    consider the edge $(u_2, v_1)$: its weight is~$\Theta$, its extra weight is
    $\Xi = 0$, its left copy elements are $v_2$ and the $t_{j''}$ for $j'' \in
    \{1, \ldots \tau-1\} \setminus \{i'\}$ (note that the dangling edge
    created on the previous dissociation consists of garbage facts), 
    and its right copy elements are $u_1$ and the $w_{i'}$
    for $1 \leq i' < \omega$. So there are $\tau-1$ left-incident elements, so the lexicographic
    weight of the edge is less than $\Lambda$. So we can dissociate the edge
    without breaking the query by Claim~\ref{clm:dissoclex}. We can now map the
    result to the fine dissociation as
    the identity except that the dangling edges on~$u_2$ and~$v_1$ and $t_{j'}$ are mapped
    to~$(u_1, u_2)$ and $(v_1, v_2)$ and $(u_1, t_{j'})$ respectively. 
We conclude again by 
    Claim~\ref{clm:finedissoc}.

    If $F$ is a fact of the edge $(w_{i'}, v_1)$ for some $1 \leq i' < \omega$,
    the reasoning is symmetric: we dissociate the edge $(w_{i'}, v_1)$, we
    notice that the edge $(u_2, v_1)$ now has $\tau$ left-incident elements and
    $\omega-1$ right-incident elements, so we dissociate again by
    Claim~\ref{clm:dissoclex}, and map to the fine dissociation and conclude by
    Claim~\ref{clm:finedissoc}.

    Last, if $F$ is a fact of a copy of~$e$, we distinguish the three cases. If
    $F$ is a fact of $(u_1, v_1)$, then we dissociate this edge in~$J$ as it has
    weight $< \Theta$, and we can map to the fine dissociation of~$M$ 
    mapping $u_1$ to~$u$ and $u_2$ to~$u'$ and $v_1$ and $v_2$ to~$v$ and
    mapping the dangling edges on~$u_1$ and~$v_1$ to $(u, v')$ and $(u', v')$.
    If $F$ is a fact of $(u_2, v_2)$, the reasoning is similar. If $F$ is a fact
    of $(u_2, v_1)$, we dissociate it, and we map to the fine dissociation in
    the same way as in the previous paragraph.

    Thus, no matter the missing fact~$F$, we know that $I_2 \setminus \{F\}$
    does not satisfy the query, which shows that it is subinstance minimal.

    \bigskip

  We now take care of the induction step of the induction. Now, let $n>2$,
  let $M$ be a critical model
  assume that the $(n-1)$ iteration $I_{n-1}$ of~$M$ 
  is a subinstance-minimal model of the
  query, and let us show that the same is true for
  the $n$-iteration $I_n$. We first show that it satisfies the query, and then that it
  is subinstance-minimal.

  Consider the copy of~$e$ on~$(u_n, v_{n-1})$. We claim that this edge is tight
  in~$I_{n-1}$. Indeed, it is a non-leaf edge, as witnessed by the covering facts of
  the edge $(u_n, v_n)$ and $(u_{n-1}, v_{n-1})$. Further, it is tight because
  if we dissociate it then the result has a homomorphism to the fine
  dissociation which by Claim~\ref{clm:finedissoc} does not satisfy the
  query. Specifically, the homomorphism maps $u_1, \ldots, u_{n-1}$ to~$u$, maps
  $u_n$ to~$u'$, maps $v_1, \ldots, v_{n-1}$ to~$v'$, maps $v_n$ to~$v$, and maps
  the dangling edge on~$u_{n-1}$ and on $v_{n-1}$ to $(u,v')$ and $(u',v)$
  respectively.

  Let us now consider this tight edge $e' = (u_n, v_{n-1})$ and pick arbitrary
  facts $F_\LL'$ and $F_\RR'$ respectively in the copies $(u_{n-1}, v_{n-1})$ and
  $(u_n, v_n)$ of~$e$.
  We claim that $M' = (I_{n-1}, e, F_\LL, F_\RR)$ is a critical instance.
  For this, we first notice that $e$ has weight~$\Theta$. As $\Xi = 0$, there is
  nothing to check. As for the lexicographic weight, we know that $(u_2, v_1)$
  is missing the copy of the edge with~$t$ and with~$w$, but has in exchange
  the edge~$(u_1, v_1)$ incident to~$v_1$ and the edge~$(u_2, v_2)$ incident
  to~$u_2$, so it is indeed unchanged.

  Now, consider the 2-iteration of $M'$: as $M'$ is a critical instance, it is
  iterable, and by applying the base case of $n=2$ we know that its 2-iteration 
  satisfies the query. One can see that the result (see
  Figure~\ref{fig:process}, right) that the
  result is
  isomorphic to~$I_n$ (shown in Figure~\ref{fig:iter}, bottom).
  Thus, the $n$-iteration $I_n$ satisfies the query. We name the
  elements accordingly in what follows.

  \bigskip

  We last show that $I_n$ is subinstance-minimal. The argument is similar to the
  subinstance-minimality of~$I_2$: see again Figure~\ref{fig:process} (left) and
  the explanation in the caption. If a fact $F$ is missing that does not involve
  any of the $u_i$ and $v_i$ for $1 \leq i \leq n$, we can map back to
  $I\setminus \{F\}$ by mapping all $u_i$ to~$u$ and all~$v_i$ to~$v$ and we
  conclude by subinstance-minimality of~$I$. If a fact is missing in one of the
  copies $(u_i,v_i)$ of~$e$, then we can dissociate it because it has
  weight~$<\Theta$ and map homomorphically
  to the fine dissociation 
  by mapping $u_1, \ldots, u_i$ to~$u$, mapping $u_{i+1}, \ldots, u_{n}$
  to~$u'$, mapping $v_1, \ldots, v_{i-1}$ to~$v'$, mapping $v_i, \ldots,
  v_{n}$ to~$v$, and mapping
  the dangling leaf edge on~$u_i$ to~$(u,v')$ and the dangling leaf edge
  on~$v_i$ to~$(u',v)$.
  If a fact is missing on one of the copies $(u_{i+1},v_i)$ of~$e$, then we map $u_1, \ldots, u_i$ to~$u$, map $u_{i+1}, \ldots, u_{n}$ to
  $u'$, map $v_1, \ldots, v_i$ to~$v'$, map $v_{i+1}, \ldots, v_{n}$ to~$v$,
  and map the dangling leaf edge on~$u_{i+1}$ to~$(u',v')$ and the dangling leaf
  edge on~$v_i$ to $(u,v')$. Last, if one of the other facts is missing, it is a
  fact involving some $u_i$ or involving some $v_j$, i.e., an incident fact to
  some edge $(u_i, v_i)$, and we can dissociate the edge of the fact because it
  has weight~$<\Theta$ then argue
  as before that the edge $(u_i,v_i)$ now has lexicographic weight~$<\Lambda$
  and can be dissociated, and then we map to the fine dissociation as explained
  in what precedes. This concludes the proof.
\end{proof}

\subparagraph*{Coding.}
We explain how to code an undirected graph to reduce from 
$\phi,\eta$-U-ST-CON for some $0 < \phi \leq
1$ and $0 < \eta < 1$ (see Definition~\ref{def:ustcon}): this time no saturation
is needed. 
Proposition~\ref{prp:iterability} will then intuitively show that some paths in the
coding make~$Q$ true.

\begin{definition}
  Let $G = (V, E)$ be an undirected graph with source~$r$ and sink~$s$, 
  with~$r \neq s$.
  The \emph{coding} $I_G$ of~$G$ in~$M$ is the instance defined by modifying $I$
  in the following way:
  \begin{itemize}
    \item For all $a \in V$, create a fresh element $u_a$, and copy
      $(u,t_{j'})$ on $(u_a,t_{j'})$ for all $1 \leq j' < \tau$.
    \item We identify $u$ to~$u_r$, so $u_r$ %
      also occurs in another copy of~$e$, namely the edge $(u_r,t)$.
    \item For each edge $\pi = \{a, b\} \in E$, create fresh elements $u_\pi,
      v_{\pi,a}, v_{\pi,b}$, copy $(u,t_{j'})$ on~$(u_\pi,t_{j'})$ for all $1
      \leq j' < \tau$, copy $(w_{i'},v)$ on~$(w_{i'},v_{\pi,\beta})$ for all
      $1 \leq i' < \omega$ and $\beta \in \{a,b\}$, and copy~$(u,v)$ on $(u_a,
      v_{\pi,a})$, $(u_\pi, v_{\pi,a})$, $(u_\pi,
      v_{\pi,b})$, and $(u_b, v_{\pi,b})$.
    \item Copy $(u,v)$ on~$(u_s,v)$, and then remove the facts of~$C$.
  \end{itemize}
\end{definition}

An example is given in Figure~\ref{fig:itercode}, shortening the vertex names
for readability.
The coding $I_G$ can clearly be built in polynomial time in~$G$.
We partition the facts of~$I_G$ in four kinds:

\begin{itemize}
  \item The \emph{base facts} (not pictured), i.e., the facts involving no element of~$\{u_a \mid a
    \in V\} \cup \{v_{\pi,\beta} \mid \pi \in E, \beta \in \pi\} \cup \{v\}$.
  \item The \emph{supplementary base facts} (in black), i.e., the covering facts of
    $(u_r,t)$ and $(u_r,t_{j'})$ for $1 \leq j' < \tau$, and the covering facts
    of $(u_s,v)$ and $(w,v)$ and
    $(w_{i'},v)$ for $1 \leq i' < \omega$.
  \item The \emph{vertex facts} (in purple) of each vertex~$a \in V \setminus \{r\}$, i.e., the covering
    facts of $(u_a,t_{j'})$ for $1 \leq j' < \tau$.
  \item The \emph{edge facts} (in orange) of each edge $\pi = \{a, b\}$ of~$E$, i.e., all covering
    facts and incident facts of $(u_\pi,v_{\pi,a})$ and
    $(u_\pi,v_{\pi,b})$, including the covering facts of
    $(u_a,v_{\pi,a})$ and $(u_b,v_{\pi,b})$.
\end{itemize}

Similarly to Section~\ref{sec:noniter}, the base facts of~$I_G$ are precisely
the facts of~$I$ that do not involve~$u$ or~$v$. A subinstance $J \subseteq I_G$
is \emph{well-formed} if it contains all base facts and
supplementary base facts, and \emph{ill-formed} otherwise. We can then use
subinstance-minimality to show:

\begin{claimrep}
  \label{clm:ill3}
  The ill-formed subinstances do not satisfy the query.
\end{claimrep}

  \begin{proof}
  By monotonicity, it suffices to consider the case of a single missing fact $F$.

  We first focus on the case of a missing base fact~$F$.
  In this case, we can map~$J$ homomorphically to~$I \setminus
  \{F\}$, and conclude by subinstance-minimality.

  Second, let us study the case where a supplementary base fact~$F$
  involving~$u = u_r$ is missing.
  We define a homomorphism~$h$ to a strict subset of the 2-step iteration by
  mapping all $u_x$ with $x \neq r$ to~$u'$, mapping $u_r$ to~$u$, mapping all
  $v_{\pi,r}$ adjacent to~$u_r$ to~$v'$, and mapping all other~$v_{\pi,\beta}$
  to~$v$. To check that $h$ is indeed a homomorphism, the important points are
  that there are no edges $(u_x,t)$ with $x \neq s$, there are no edges
  $(w,v_{\pi,r})$, and there are no edges $(u_r, v_{\pi,\beta})$ with~$\beta
  \neq r$.
  Further, the
  left-incident fact $F'$ of~$e$ in~$I$ corresponding to the missing fact~$F$ has no
  image because the only
  vertex mapped to~$u$ by~$h$ is~$u_r$ which is missing~$F$. Thus, $h$ is a 
  homomorphism to the strict subset of the 2-step iteration where we
    removed~$F'$, and we conclude by Proposition~\ref{prp:iterability}.

  Last, we consider the case of a missing supplementary base fact~$F$ that involves~$v$. We again define a
  homomorphism $h$ to a strict subset of the 2-step iteration. We map all $u_x$ with $x
  \neq t$ to~$u$, map $u_s$ to~$u'$, map $v$ to~$v$ and map all other copies
  of~$v$ to~$v'$. Again, to see that~$h$ is a homomorphism, the important points
  is that $(u_s,t)$ is not an edge of~$I_G$, that there are no edges of~$I_G$ of
  the form $(w,v_{\pi,\beta})$, and that there are no edges of the form
  $(u_x,v)$ with $v \neq t$. Further, again the fact corresponding to~$F$ in the
  2-step iteration has no image, so we have a homomorphism to a strict subset of
  the 2-step iteration and conclude by
  Proposition~\ref{prp:iterability}.
\end{proof}

Now, consider a well-formed subinstance $J \subseteq I_G$.
A vertex $a \in V$ is \emph{complete} in~$J$ if all vertex facts of~$a$
are present, and \emph{incomplete} otherwise;
and an edge $\pi \in E$ is \emph{complete} in~$J$ if all its edge facts of~$\pi$
are
present, and \emph{incomplete} otherwise. A \emph{complete path} in~$J$ is a path
connecting~$r$ and~$s$ in~$G$ such that all traversed edges and vertices are
complete in~$J$ (except~$r$, for which completeness was not defined). We say that $J$
is \emph{good} if it has a complete path, and \emph{bad} otherwise.
We can easily see that  good subinstances satisfy the query, because they contain an
iterate of~$M$ and we can use Proposition~\ref{prp:iterability}:

\begin{claimrep}
  \label{clm:good3}
  For any good well-formed subinstance $J \subseteq I_G$, there is a
  homomorphism from the $(2n+1)$-step iteration of~$M$ to~$J$, where $n$ is the
  length of a complete path in~$J$.
\end{claimrep}

\begin{proof}
  The argument may be easier to follow graphically on an example by considering
  the coding shown in Figure~\ref{fig:itercode} and the iteration shown in
  Figure~\ref{fig:iter} (bottom).

  Consider the subinstance~$J$ and the witnessing complete path in~$G$, which we
  assume to be simple: $r = a_0, a_1, \ldots, a_n = s$.
  This means that the following edges have all of their covering facts in~$J$ and all
  their incident facts in~$J$ except possibly for the left-incident facts of the first edge: 
  $(u_{a_0}, v_{\pi_0,a_0})$, 
  $(u_{\pi_0}, v_{\pi_0,a_0})$,
  $(u_{\pi_0}, v_{\pi_0,a_1})$,
  $(u_{a_1}, v_{\pi_0,a_1})$,
  $\ldots$
  $(u_{a_{n-1}}, v_{\pi_{n-1},a_{n-1}})$, 
  $(u_{\pi_{n-1}}, v_{\pi_{n-1},a_{n-1}})$,
  $(u_{\pi_{n-1}}, v_{\pi_{n-1},a_n})$,
  $(u_{a_n}, v_{\pi_{n-1},a_n})$,
  where we let $\pi_0 = \{a_0, a_1\}$, $\ldots$, $\pi_{n-1} = \{a_{n-1},
  a_n\}$.

  Further, as $J$ is well-formed, all supplementary base facts are present.
  Thus, we know that the edge
  $(u_s, v_{\pi_0, u_s})$ in fact has all its left-incident facts, hence all its
  incident facts by considering the right-incident facts of the second edge in
  the sequence above, and that we
  can extend the sequence above with the edge $(u_s, v)$ which has all its
  covering facts and all its right-incident facts, hence all its incident facts
  by considering the left-incident facts of the last edge in the sequence above.

  The facts mentioned so far correspond to the facts involving the elements
  $u_i$ and $v_i$ in 
  the $(2n+1)$-iteration $I'$
  of~$M$, which satisfies the query because~$M$ is iterable and by
  Proposition~\ref{prp:iterability}. As for the other facts of $I'$, they
  correspond to base facts of~$I_G$, which are all present because $J$ is
  well-formed. Thus, indeed $J$ contains a subinstance which is isomorphic
  to~$I'$, concluding the proof.
\end{proof}

It is again far more challenging to show the other claim:

\begin{claimrep}
  \label{clm:bad3}
  Any bad subinstance $J \subseteq I_G$ does not satisfy the query.
\end{claimrep}

\begin{proofsketch}
  We dissociate all copies of~$e$ that are missing a fact or are of the form
  $(u_\beta, v_{\pi,\beta})$ and are missing an incident fact with some element
  $w_{i'}$. Then, we map the result by a homomorphism~$h$ to a
  structure called the \emph{explosion} (pictured in Figure~\ref{fig:other}, see
  Definition~\ref{def:explosion} in the appendix), which intuitively reflects
  all maximal strict subsets of the $\{t_1, \ldots, t_{\tau-1}\}$, and violates~$Q$ (by considering the lexicographic weight of its edges). We
  define~$h$
  along the cut of~$G$ defined by considering
  the vertices reachable from~$r$ via a complete path.
\end{proofsketch}

\begin{toappendix}
To show this claim, we will need to consider a new structure defined from the
  critical model~$M$, which we call the \emph{explosion}. Note that
  this structure is not symmetric, intuitively to account for the asymmetry in
  the definition of the lexicographic weight.

\begin{definition}
  \label{def:explosion}
  The \emph{explosion} of~$M$ is
  defined in the following way:
  \begin{itemize}
    \item Create a copy~$v'$ of~$v$ and $u'$ of~$u$
    \item For each $1 \leq j' < \tau$, copy the edges $(u, t_{j'})$
      on~$(u', t_{j'})$, and for each $1 \leq i' < \omega$, copy the edges $(w_{i'}, v)$
      on~$(w_{i'}, v')$.
    \item For every strict subset $\Sigma$ of~$\{t_1, \ldots, t_\tau\}$ (there may be
      none, in case $\tau = 0$), create a copy~$u_\Sigma$ of~$u$, and for each
      $t_j \in \Sigma$ copy the edge $(u, t_j)$ on~$(u_\Sigma, t_j)$.
    \item Copy~$e$ on~$(u, v')$ and $(u', v)$ and on $(u_\Sigma, v)$ and
      $(u_\Sigma, v')$ for each subset above $\Sigma$ (i.e., none if $\tau =
      0$).
    \item Remove the covering facts~$C$ of~$(u,v)$.
  \end{itemize}
\end{definition}

  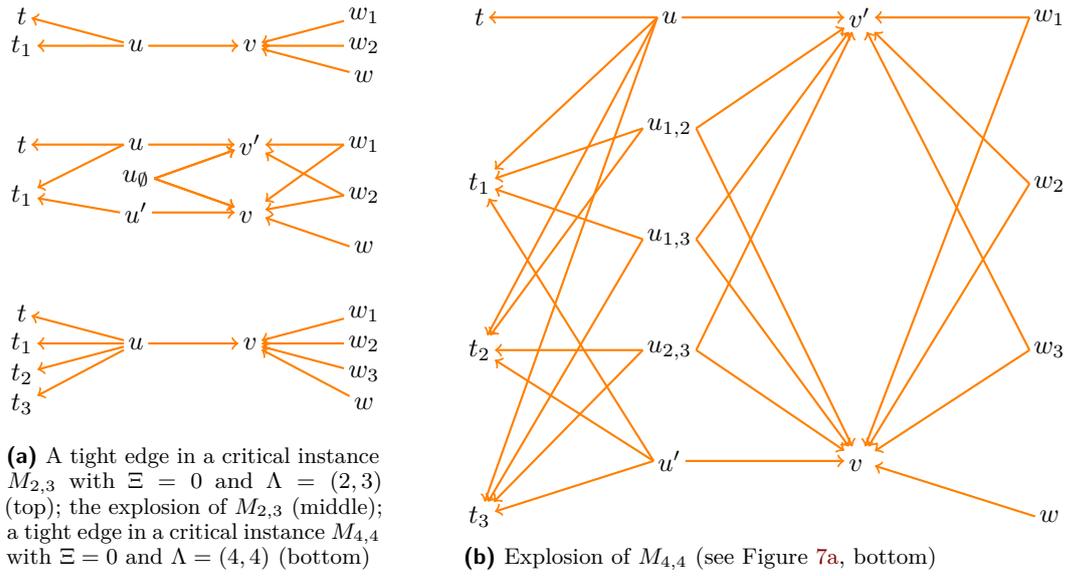
\begin{figure}
    \begin{subfigure}[b]{.35\linewidth}
    \begin{tikzpicture}[xscale=1.5,yscale=.4,inner sep=1.5]
    \node (u) at (0, 0) {$u$};
    \node (v) at (1, 0) {$v$};
    \node (t) at (-1, 1) {$t$};
    \node (t1) at (-1, 0) {$t_1$};
    \node (w1) at (2, 1) {$w_1$};
    \node (w2) at (2, 0) {$w_2$};
    \node (w) at (2, -1) {$w$};
    \draw[orange,->,thick] (u) -- (v);
    \draw[orange,->,thick] (u) -- (t);
    \draw[orange,->,thick] (u) -- (t1);
    \draw[orange,->,thick] (w) -- (v);
    \draw[orange,->,thick] (w1) -- (v);
    \draw[orange,->,thick] (w2) -- (v);
    \end{tikzpicture}
      \smallskip

    \begin{tikzpicture}[xscale=1.5,yscale=.45,inner sep=1.5]
    \node (u) at (0, 1) {$u$};
    \node (vp) at (1, 1) {$v'$};
      \node (u1) at (0, 0) {$u_\emptyset$};
    \node (up) at (0, -1) {$u'$};
      \node (v) at (1, -1) {$v\phantom{'}$};

    \node (t) at (-1, 1) {$t$};
    \node (t1) at (-1, -.5) {$t_1$};
    \node (w1) at (2, 1) {$w_1$};
    \node (w2) at (2, -.5) {$w_2$};
    \node (w) at (2, -2) {$w$};

    \draw[orange,->,thick]  (u.east) -- (vp);
    \draw[orange,->,thick] (u1.east) -- (vp);
    \draw[orange,->,thick] (u2.east) -- (vp);
    \draw[orange,->,thick] (u1.east) --  (v);
    \draw[orange,->,thick] (u2.east) --  (v);
    \draw[orange,->,thick] (up.east) --  (v);

    \draw[orange,->,thick] (u.west) -- (t);
    \draw[orange,->,thick] (u.west) -- (t1);
    \draw[orange,->,thick] (up.west) -- (t1);

    \draw[orange,->,thick] (w.west) -- (v);
    \draw[orange,->,thick] (w1.west) -- (v);
    \draw[orange,->,thick] (w2.west) -- (v);
    \draw[orange,->,thick] (w1.west) -- (vp);
    \draw[orange,->,thick] (w2.west) -- (vp);
      \end{tikzpicture}
      \smallskip

    \begin{tikzpicture}[xscale=1.5,yscale=.4,inner sep=1.5]
    \node (u) at (0, 0) {$u$};
    \node (v) at (1, 0) {$v$};
    \node (t) at (-1, 1) {$t$};
    \node (t1) at (-1, 0) {$t_1$};
    \node (t2) at (-1, -1) {$t_2$};
    \node (t3) at (-1, -2) {$t_3$};
    \node (w1) at (2, 1) {$w_1$};
    \node (w2) at (2, 0) {$w_2$};
    \node (w3) at (2, -1) {$w_3$};
    \node (w) at (2, -2) {$w$};
    \draw[orange,->,thick] (u) -- (v);
    \draw[orange,->,thick] (u) -- (t);
    \draw[orange,->,thick] (u) -- (t1);
    \draw[orange,->,thick] (u) -- (t2);
    \draw[orange,->,thick] (u) -- (t3);
    \draw[orange,->,thick] (w) -- (v);
    \draw[orange,->,thick] (w1) -- (v);
    \draw[orange,->,thick] (w2) -- (v);
    \draw[orange,->,thick] (w3) -- (v);
    \end{tikzpicture}
      \vspace{-.3cm}
      \caption{A tight edge in a critical instance $M_{2,3}$ with $\Xi = 0$ and
      $\Lambda = (2,3)$ (top); the explosion of~$M_{2,3}$ (middle); a tight edge
      in a critical instance $M_{4,4}$ with $\Xi = 0$ and $\Lambda = (4,4)$
      (bottom)}
      \label{fig:explosion1}
  \end{subfigure}
  \hfill
    \begin{subfigure}[b]{.57\linewidth}
    \begin{tikzpicture}[xscale=2.5,yscale=1.47,inner sep=1.5]
      \centering
    \node (u) at (0, 1) {$u$};
    \node (vp) at (1, 1) {$v'$};
      \node (u12) at (0, 0) {$u_{1,2}$};
      \node (u13) at (0, -1) {$u_{1,3}$};
      \node (u23) at (0, -2) {$u_{2,3}$};
    \node (up) at (0, -3) {$u'$};
      \node (v) at (1, -3) {$v\phantom{'}$};

    \node (t) at (-1, 1) {$t$};
    \node (t1) at (-1, -.5) {$t_1$};
    \node (t2) at (-1, -2) {$t_2$};
    \node (t3) at (-1, -3.5) {$t_3$};
    \node (w1) at (2, 1) {$w_1$};
    \node (w2) at (2, -.5) {$w_2$};
    \node (w3) at (2, -2) {$w_3$};
    \node (w) at (2, -3.5) {$w$};

    \draw[orange,->,thick]  (u.east) -- (vp);
    \draw[orange,->,thick] (u12.east) -- (vp);
    \draw[orange,->,thick] (u13.east) -- (vp);
    \draw[orange,->,thick] (u23.east) -- (vp);
    \draw[orange,->,thick] (u12.east) --  (v);
    \draw[orange,->,thick] (u13.east) --  (v);
    \draw[orange,->,thick] (u23.east) --  (v);
    \draw[orange,->,thick] (up.east) --  (v);

    \draw[orange,->,thick] (u.west) -- (t);
    \draw[orange,->,thick] (u.west) -- (t1);
    \draw[orange,->,thick] (u12.west) -- (t1);
    \draw[orange,->,thick] (u12.west) -- (t2);
    \draw[orange,->,thick] (u.west) -- (t2);
    \draw[orange,->,thick] (u13.west) -- (t1);
    \draw[orange,->,thick] (u13.west) -- (t3);
    \draw[orange,->,thick] (u.west) -- (t3);
    \draw[orange,->,thick] (u23.west) -- (t2);
    \draw[orange,->,thick] (u23.west) -- (t3);
    \draw[orange,->,thick] (up.west) -- (t1);
    \draw[orange,->,thick] (up.west) -- (t2);
    \draw[orange,->,thick] (up.west) -- (t3);

    \draw[orange,->,thick] (w.west) -- (v);
    \draw[orange,->,thick] (w1.west) -- (v);
    \draw[orange,->,thick] (w2.west) -- (v);
    \draw[orange,->,thick] (w3.west) -- (v);
    \draw[orange,->,thick] (w1.west) -- (vp);
    \draw[orange,->,thick] (w2.west) -- (vp);
    \draw[orange,->,thick] (w3.west) -- (vp);
      \end{tikzpicture}
      \caption{Explosion of $M_{4,4}$ (see Figure~\ref{fig:explosion1}, bottom)}
      \label{fig:expl44}
      \end{subfigure}
      \caption{Examples for the explosion (see also Figure~\ref{fig:other},
      bottom)}
      \label{fig:exploexa}
      \end{figure}

  See Figure~\ref{fig:other} (bottom) for an illustration.

  Note that when $\tau = 1$, i.e., $t$ is the only left copy element
  of~$e$ in~$I$, then the explosion is almost the same as the fine
  dissociation, except that the incomplete copies of the edges containing the
  facts $F_\LL$ and $F_\RR$ are not created -- but intuitively these do not matter
  because they are garbage facts.  (The reason why the fine dissociation create
  these edges, like iteration, but unlike $n$-step iteration and the explosion,
  is because the fine dissociation is used to show that $\Xi = 0$, i.e., before
  we know that $\Xi = 0$, and in the case where $\Xi>0$ these edges could make a
  difference.)

  When $\tau = 2$, an example is shown on Figure~\ref{fig:explosion1} (top and
  middle). When $\tau = 3$, an example is shown on Figure~\ref{fig:iter} (top)
  and Figure~\ref{fig:other} (bottom) in the main text. When $\tau = 4$, an
  example is shown on Figure~\ref{fig:iter} (bottom) and
  Figure~\ref{fig:expl44}. Remember that $\tau \leq \omega$ by definition of the
  critical lexicographic weight. Note how, as $\tau$ increases, the edges of the
  form $(u_\Sigma, v)$ and $(u_\Sigma, v')$ have a arbitrarily high number of
  right copy elements, but always $\tau-1$ left copy elements, so that they can
  be dissociated thanks to our definition of the lexicographic weight.

We claim the following, which is the only place where we use the full power of
  lexicographic minimality:

\begin{lemmarep}
  \label{lem:lopsided}
  The explosion of a critical model does not satisfy the query.
\end{lemmarep}

\begin{proof}
  Let us assume that it does and show a contradiction. If $\tau=0$ as we argued
  the explosion is a subset of the fine dissociation which does not satisfy the
  query by Claim~\ref{clm:finedissoc}, so we assume $\tau > 0$.

  Consider each edge of the form $(u_\Sigma, v)$. The edge
  has weight $\Theta$, the extra weight $\Xi$ is clearly zero, so let us compute
  the lexicographic weight by considering the left copy elements. (The reader
  may want to refer to examples, e.g., Figure~\ref{fig:exploexa}.)
  There are $\tau-2$ copy elements in the strict subset $\Sigma$, plus the
  element~$v'$. Hence, the number of left copy elements is~$\tau-1$, but
  $\Lambda = (\tau, \omega)$, so the lexicographic weight is strictly smaller.
  (Note that the number of right copy elements may be greater than $\omega$.)

  Hence, by
  Claim~\ref{clm:dissoclex}, we can dissociate these edges without breaking the
  query. The same argument shows that we can dissociate all edges $(u_\Sigma,
  v)$.

  Now, we can homomorphically map the result to the fine dissociation, by
  mapping the $u_\Sigma$ to $u$, mapping $u$ to~$u$ and $v$ to~$v$ and~$u'$
  to~$u'$ and~$v'$ to~$v'$, and mapping the dangling edges on~$v'$ to~$(u,v')$
  and on~$v$ to~$(u',v)$ and on the $u_\Sigma$ to~$(u,v)$. We know 
  by Claim~\ref{clm:finedissoc} that the fine dissociation does not satisfy the
  query, so we have reached a contradiction, which concludes the proof.
\end{proof}

We can now show:

\begin{proof}[Proof of Claim~\ref{clm:bad3}]
  Consider a bad subinstance $J$. Let us reason by contradiction, assume
  that $J$ satisfies the query, and rewrite it to an instance satisfying the
  query and having a homomorphism to the explosion, which is a
  contradiction by Lemma~\ref{lem:lopsided}.

  We will use the conventions of Appendix~\ref{apx:basic}. The proof is in
  several steps.  It may be helpful to informally comment on why the definition
  introduces the intermediate vertices $u_\pi$, which were not present in the
  coding of Section~7 of~\cite{amarilli2021dichotomy}. The reason is that these
  intermediate vertices have precisely two incident copies of~$e$ of the form
  $(u_\pi, v_{\pi,\beta})$ with $\beta \in \pi$, so their number of left copy
  elements is precisely~$\tau$ and we can dissociate them if their number of
  right copy elements become $< \omega$, i.e., if some right-incident fact is
  missing. By contrast, the vertices $u_a$ have more left copy elements, e.g.,
  $\tau+1$ if all vertices of the input undirected graph have degree~3.

  \bigskip

  The first step is to notice that copies (up to renaming) of the set $C$ of
  facts which are incomplete can be dissociated, because either they are empty
  or leaf and this is vacuous, or they are non-leaf and
  their weight is $< \Theta$ so we can use Claim~\ref{clm:dissocweight}. We call
  \emph{dangling edges} the edges created after these dissociations, and say
  they are \emph{attached} to the element which is non-leaf in them, if there is
  one. (The reason why there may be none is that some of the incomplete copies
  considered in this step may be such that both their elements are leaves, i.e., a copy up to renaming of~$C$ where both
  elements involved do not occur in any other fact. Then the ``dissociation'' of
  this edge (following the conventions in Appendix~\ref{apx:basic}) is
  isomorphic to it and is not attached to any element. However, these edges are not connected to
  anything else in the instance and can be mapped homomorphically to any copy
  of~$e$ in the explosion at the end of the process; so we simply ignore these copies
  in what follows.) Thus, letting~$J_2$ be the result of this process, in~$J_2$ relative to~$J$ all incomplete copies of~$C$ have
  been removed and replaced by dangling edges attached to some of the endpoints.

  The second step is to notice that whenever an element of the
  form~$v_{\pi,\beta}$
  is missing some right copy element in~$J_1$ which is not~$u_\beta$ (i.e., an edge of the form $(w_{i'},
  v_{\pi, \beta})$, which has been dissociated in the first step),
  then the edge $(u_\pi,v_{\pi,\beta})$ can be dissociated. 
  This is vacuous if it is not an edge or a leaf edge, and otherwise this is
  because its weight is~$\Theta$, its extra weight is $0$ (note that there are
  no triangles in the coding except possibly in the base facts),
  and its lexicographic weight can be accounted as follows:
  the element $u_\pi$ has at most $\tau-1$ left copy
  elements (namely, the $t_1, \ldots, t_{j'}$) and
  there is another left copy element namely~$v_{\pi,\beta'}$ for $\beta'$ the element
  such that $\pi = \{\beta, \beta'\}$,
  so $\tau$ left copy elements in total;
  and the element $v_{\pi,\beta}$ is missing some right copy element so is
  connected to at most $\omega-2$ elements, plus~$u_\beta$ because of the edge
  $(u_\beta, v_{\pi,\beta})$, so it is connected to at most $\omega-1$ elements.
  Thus, as $\Lambda = (\tau, \omega)$, we see that the lexicographic weight
  of~$(u_\pi,v_{\pi,\beta})$ is less than $\Lambda$, and by
  Claim~\ref{clm:dissoclex} we can indeed dissociate the edge $(u_\pi,
  v_{\pi,\beta})$. We let $J_3$ be
  the result of dissociating all edges that can be dissociated in this fashion.
  In $J_3$ relative to~$J_2$, in any sequence $u_a,
  v_{\pi,a}, u_\pi, v_{\pi,b}, u_b$ with $\pi = \{a, b\}$, if 
  one of $v_{\pi,a}$ or $v_{\pi,b}$ is missing a right-incident
  fact with one of the $w_{i'}$, then the corresponding edge $(u_\pi,v_{\pi,a})$, 
  or $(u_\pi,v_{\pi,b})$,
  has been dissociated.

  \bigskip

  Now we can define the homomorphism $h$ from~$J_3$ to the explosion. We
  initialize it to be the identity on~$u_r = u$ and on~$v$, and on all elements
  except the elements $u_a$ with $a \in V$ or $u_\pi$ with $\pi \in E$ or
  $v_{\pi,\beta}$ with $\pi \in E$ and $\beta \in \pi$.

  Next, we let $R$ be the set of nodes $v$ of the graph~$G$ such that there is a
  path in~$G$ from~$r$ to~$v$ which is complete in~$J$. For each $v \in R$, we map $u_v$ to~$u$,
  in particular we map $u_r$ to~$u$ as we said before.
  Note that, as $J$ is bad, then $s \notin
  R$, as this would otherwise witness the existence of a complete path from~$r$
  to~$s$ in~$J$, i.e., that~$J$ is good.
  Further, for each edge $\pi \in E$ between vertices of~$R$, i.e., $\pi
  \subseteq R$, we map $u_\pi$ to~$u$ and~$v_{\pi,\beta}$ to~$v'$ for each
  $\beta \in \pi$. All facts between these elements are clearly correctly
  mapped, because we have mapped copies of~$u$ to~$u$ and copies of~$v$
  different from~$v$ to~$v'$, and in the explosion there is a copy $(u,v')$
  of~$e$ and all needed incident facts.

  Now, we consider the set $R'$ of nodes of~$G$ that are adjacent to a node
  in~$R$ but which are not in~$R$. Note that $u_r \notin R'$, but
  potentially~$u_s \in R'$.
  There are several kinds of such nodes:
  \begin{itemize}
    \item The incomplete nodes, i.e., the vertices $a$ with some vertex facts missing. For
      such a node~$a$, remembering that in the first step we dissociate all edges that
      were incomplete copies of~$e$, we let $\Sigma$ be some maximal strict subset of $\{t_1,
      \ldots, t_{\tau-1}\}$ containing all the elements of this form which are
      left copy elements of $u_a$. We map $u_a$ to $u_\Sigma$ in the explosion, and for every edge
      $\pi = \{b, a\}$ with $b$ in~$R$ we map $u_{\pi}$ to~$u$ and $v_{\pi,a}$
      and $v_{\pi,b}$ to~$v'$.
    \item The complete nodes $a \in V$. We map these to~$u'$. For these, as they were
      not added to~$R$, we know that all edges $\pi = \{b,a\}$ with $b \in R$ were
      incomplete. There are four subcases for each such edge:
      \begin{itemize}
        \item $u_\pi$ is missing some left copy elements with one of the $t_1, \ldots,
          t_{\tau-1}$. In this case, remembering that incomplete edges were
          dissociated in step 1, letting $\Sigma$ be a maximal strict subset
          of $\{t_1, \ldots, t_{\tau-1}\}$ containing all the elements of this
          from that are
          left copy elements of~$u_\pi$ in~$J_3$, we map $u_\pi$ to~$u_\Sigma$ in the
          explosion, and map $v_{\pi,b}$ to~$v'$ and $v_{\pi,a}$ to~$v$.
        \item $v_{\pi,a}$ is missing some right copy elements with one of the
          $w_1, \ldots, w_{\omega-1}$. In this case, in the second step we
          dissociated the edge $(u_\pi, v_{\pi, a})$, so we can map $u_\pi$
          to~$u$ and $v_{\pi,b}$ to~$v'$ and $v_{\pi,a}$ to~$v$. Note that there
          is no edge $(u, v')$ in the explosion, so we really use the fact that the
          second step dissociated the edge.
        \item $v_{\pi,b}$ is missing some right copy elements. In this case,
          similarly to the previous case, we
          map $u_\pi$ to $u'$ and~$v_{\pi,b}$ to~$v'$ and $v_{\pi,a}$ to~$v$.
        \item Some fact in the edges $(u_b, v_{\pi,b})$ or $(u_\pi,v_{\pi,a})$
          or $(u_\pi,v_{\pi,a})$ or $(u_a, v_{\pi,a})$ is missing, so these
          edges were dissociated in the first step. In the case of an edge
          incident to~$u_\pi$, we conclude like in the two previous bullet
          points. If it is the edge $(u_b, v_{\pi,b})$, we map $v_{\pi,b}$ and
          $v_{\pi,a}$ to~$v$ and $u_\pi$ to~$u'$. If it is the edge $(u_a,
          v_{\pi,a})$, we map $v_{\pi,b}$ and $v_{\pi,a}$ to~$v'$ and $u_\pi$
          to~$u$.
      \end{itemize}
  \end{itemize}

  The homomorphism that we define correctly maps all facts of~$J_3$
  that correspond in the coding to edges $\{b, a\}$ with $b \in R$ and $a \in R'$.
  Further, for edges $\pi = \{a, a'\}$ with $a, a' \in R'$, we know that $u_a$ and
  $u_{a'}$ were mapped either to $u'$ or to $u_\Sigma$, so we can map $u_{\pi}$
  to~$u'$ and $v_{\pi,a}$ and $v_{\pi,b}$ to~$v$.

  Then, for the vertices $a$ in $V \setminus (R \cup R')$ and the edges $\pi$ involving
  them, we simply map $u_a$ and $u_\pi$ to~$u'$ and $v_{\pi,\beta}$ for $\beta
  \in \pi$ to~$v$ in this case, which is correct because these edges do not
  involve any vertex~$b$ whose element~$u_b$ was mapped to~$u$ by~$h$.

  Last, the dangling edges are mapped without difficulty, as they are copies of
  subsets of the covering facts of~$e$ (up to renaming), and the vertices to
  which they are attached are mapped to vertices having such an incident copy
  of~$e$ oriented in the right way.

  The important points to check for the correctness of the homomorphism are the
  following:
  \begin{itemize}
    \item The source $u_r$ was mapped to~$u$.
    \item The sink $u_s$ was mapped to $u'$ or some $u_\Sigma$, so the edge $(u_s, v)$
  is correctly mapped.
    \item Whenever a vertex is mapped to some~$u_\Sigma$ then its left-adjacent
      copy facts are contained in the subset~$\Sigma$
    \item No two elements $u_a$ and $u_{\{a,b\}}$ are mapped one to~$u$ and the
      other one to~$u'$ unless one of the edges $(u_a, v_{\{a,b\},a})$ and $(u_{\{a,b\}},
      v_{\{a,b\},b})$ has been dissociated.
  \end{itemize}

  So we have defined a homomorphism to the explosion and shown that the
  explosion satisfies the query, which is a contradiction by
  Lemma~\ref{lem:lopsided}. This concludes.
\end{proof}

This allows us to complete our reduction.

  \begin{proof}[Proof of Proposition~\ref{prp:ustcon}]
    We have fixed the critical model~$M$.
  We reduce from the problem $\phi,\eta$-U-ST-CON with $\phi =
  2^{-\Theta\times (\tau-1)}$,
   and $\eta = 2^{-\Theta(4 + (\tau-1) + 2\omega)}$,
    with $\Theta$ the critical weight and $\Lambda = (\tau, \omega)$ the
    critical side weight.
    Note that we can have $\phi=1$ if $\tau=1$, but that we always have $0 \leq \phi \leq 1$ and
    $0 \leq \eta < 1$.
  We assume without loss of generality that the
  source vertex $r$ is kept.

    Given $G$, we code it to the instance~$I_G$, which is in linear time in~$G$ (remember
    that~$I$ is fixed).
  We know by Claim~\ref{clm:ill3} that the
  subinstances missing a base fact or supplementary base fact do not satisfy the
    query, so
    we can simply study the well-formed subinstances.

    Now, the probability that a vertex $u_a$ with $a \neq r$ is complete
    is~$\phi$, the probability that an edge is complete is~$\eta$, and these
    events are independent and in correspondence with the vertices of~$V
    \setminus \{r\}$ and edges of~$E$ by a probability-preserving bijection,
    such that a subset $(V',E')$
    containing~$r$ features a path of kept vertices and edges
    connecting~$r$ and~$s$ in~$G$ iff the
    corresponding subinstance is good. We know by Claim~\ref{clm:good3}
    and~\ref{clm:bad3} that, on the well-formed subinstances, the query holds
    precisely on the good ones, so the result of uniform reliability for~$Q$
    on~$I_G$ is precisely the answer to $\phi,\eta$-U-ST-CON. This establishes
    the correctness of the reduction and concludes.
\end{proof}
\end{toappendix}

We then show hardness by reducing from $\phi,\eta$-U-ST-CON
for well-chosen constant probabilities~$\phi$ and~$\eta$
(up to assuming that the
source vertex~$r$ is always kept)
and thus conclude the reduction, establishing
Proposition~\ref{prp:ustcon}. Together with Proposition~\ref{prp:pp2dnf2}, as $Q$
has a critical model by Proposition~\ref{prp:critical} and
Theorem~\ref{thm:tight}, we have shown our main result (Theorem~\ref{thm:main}).

\section{Conclusion}
\label{sec:conc}
We have proved the intractability of uniform reliability for unbounded
homomorphism-closed queries on arity-two signatures.
We have not investigated the related problem of \emph{weighted uniform
reliability}~\cite{amarilli2022uniform}, which is the restricted case of
probabilistic query evaluation where we impose that
all facts of the input TID must have some fixed probability  different from~$1/2$.
We expect that our hardness
result should extend to this problem when the fixed probability is the same
across all relations (and is different from~$0$ and~$1$).
It seems more challenging to understand the setting where the fixed probability can
depend on the relation, in particular if we can require some relations to be 
be deterministic, i.e., only have tuples with probability~$1$.
In this setting, some unbounded homomorphism-closed queries would
become tractable (e.g., Datalog queries that involve only the deterministic
relations), and it is not clear what one can hope to show.

Coming back to the problem of (non-weighted) uniform reliability, 
an ambitious direction for future
work would be to extend our intractability result towards Conjecture~\ref{con:goal}.
The two remaining obstacles are the case of unbounded queries on arbitrary
signatures, which we intend to study in future work; and the case of bounded queries,
i.e., UCQs, where the general case is left open by Kenig and
Suciu~\cite{kenig2020dichotomy}.

Other natural extensions include the study of
queries satisfying weaker requirements than closure under homomorphisms; or other
notions of possible worlds, e.g., induced subinstances; or other notions
of intractability, e.g., the inexistence of lineages in tractable circuit
classes from knowledge compilation. Another broad question is whether the
techniques developed here have any connection to other areas of research, e.g.,
constraint satisfaction problems (CSPs).

\bibliography{main}

\end{document}